\newtheorem{obs}[thm]{Observation}
\newtheorem{notation}[thm]{Notation}
\Crefname{defn}{Definition}{Definitions}
\Crefname{thm}{Theorem}{Theorems}
\Crefname{rem}{Remark}{Remarks}
\Crefname{lem}{Lemma}{Lemmas}
\Crefname{prop}{Proposition}{Propositions}
\Crefname{obs}{Observation}{Observations}
\Crefname{cor}{Corollary}{Corollaries}
\crefname{defn}{Definition}{Definitions}
\crefname{thm}{Theorem}{Theorems}
\crefname{rem}{Remark}{Remarks}
\crefname{lem}{Lemma}{Lemmas}
\crefname{prop}{Proposition}{Propositions}
\crefname{obs}{Observation}{Observations}
\crefname{cor}{Corollary}{Corollaries}
\newcommand{\cris}[1]{{\color{blue}{#1}}}
\renewcommand{\theenumi}{\arabic{enumi}}
\renewcommand{\theenumii}{\arabic{enumii}}
\renewcommand{\theenumiii}{\arabic{enumiii}}
\renewcommand\p@enumii{\theenumi.}
\renewcommand\p@enumiii{\theenumi.\theenumii.}
\renewcommand\p@enumiv{\theenumi.\theenumii.\theenumiii.}
\newcommand{\arrowvec}[1]{\mathaccent"017E{#1}}
\newcommand{\defnn}[1]{{\bf #1}}
\newcommand{\ie}{{\em i.e.}\xspace}
\newcommand{\eg}{{\em e.g.}\xspace}
\newcommand{\ih}{IH\xspace}
\newcommand{\ST}{\ |\ }
\renewcommand{\emptyset}{\varnothing}
\newcommand{\set}[1]{\{#1\}}
\newcommand{\eqdef}{:=}
\newcommand{\itemNumber}[1]{\textsf{\textup{\textbf{#1}}}}
\newcommand{\lemmaPart}[1]{\textsf{\textup{\textbf{#1}}}}
\newcommand{\maxof}[1]{\max\set{#1}}
\newcommand{\lam}[2]{\lambda#1.\,#2}
\newcommand{\app}[2]{#1\,#2}
\newcommand{\sub}[2]{[#1:=#2]}
\newcommand{\fv}[1]{\mathsf{fv}(#1)}
\newcommand{\fvtyp}[2]{\mathsf{fv}_{#1}(#2)}
\newcommand{\wrap}[1]{\bm{\langle}#1\bm{\rangle}}
\newcommand{\bin}[2]{#1\wrap{#2}}
\newcommand{\wlist}[1][\tms]{\ctxhole\wrap{#1_1}\hdots\wrap{#1_n}}
\newcommand{\Lamesymbol}{\mathtt{e}}
\newcommand{\LamTsymbol}{\mathtt{i}}
\newcommand{\LamTmsymbol}{\mathtt{im}}
\newcommand{\CalcLam}{\ensuremath{\Lambda}}
\newcommand{\CalcLame}{\ensuremath{\Lambda^{\Lamesymbol}_\cap}}
\newcommand{\CalcLamT}{\ensuremath{\Lambda^{\LamTsymbol}_\cap}}
\newcommand{\CalcLamTm}{\ensuremath{\Lambda^{\LamTmsymbol}_\cap}}
\newcommand{\tobeta}{\to_\beta}
\newcommand{\symT}{\LamTsymbol}
\newcommand{\symTm}{\LamTmsymbol}
\newcommand{\toT}{\to_\symT}
\newcommand{\toTs}{\to_\symT^*}
\newcommand{\pT}{\Rightarrow_\symT}
\newcommand{\pTm}{\Rightarrow_\symTm}
\newcommand{\toTm}{\to_\symTm}
\newcommand{\tof}{\rhd}
\newcommand{\tofvert}{\triangledown}
\newcommand{\rulee}[1]{\Lamesymbol\text{-}\mathtt{#1}}
\newcommand{\ruleeAx}{\rulee{\ensuremath{\mathtt{var}}}}
\newcommand{\ruleeArri}{\rulee{\mathtt{I{\arrow}}}}
\newcommand{\ruleeArre}{\rulee{\mathtt{E{\arrow}}}}
\newcommand{\ruleeMulti}{\rulee{\mathtt{many}}}
\newcommand{\ruleT}[1]{\LamTsymbol\text{-}\mathtt{#1}}
\newcommand{\ruleTAx}{\ruleT{\mathtt{var}}}
\newcommand{\ruleTArri}{\ruleT{\mathtt{I{\arrow}}}}
\newcommand{\ruleTArre}{\ruleT{\mathtt{E{\arrow}}}}
\newcommand{\ruleTMulti}{\ruleT{\mathtt{many}}}
\newcommand{\ruleTm}[1]{\LamTmsymbol\text{-}\mathtt{#1}}
\newcommand{\ruleTmAx}{\ruleTm{\mathtt{var}}}
\newcommand{\ruleTmArri}{\ruleTm{\mathtt{I{\arrow}}}}
\newcommand{\ruleTmArre}{\ruleTm{\mathtt{E{\arrow}}}}
\newcommand{\ruleTmWrap}{\ruleTm{\mathtt{wrap}}}
\newcommand{\ruleTmMulti}{\ruleTm{\mathtt{many}}}
\newcommand{\dom}[1]{\mathsf{dom}(#1)}
\newcommand{\tctx}{\Gamma}
\newcommand{\tctxtwo}{\Delta}
\newcommand{\MinCtx}[1]{\mathsf{\gamma}(#1)}
\newcommand{\ctxhole}{\Box}
\newcommand{\sctx}{\mathtt{L}}
\newcommand{\itoN}{{i \in 1..n}}
\newcommand{\iI}{{i \in I}}
\newcommand{\jJ}{{j \in J}}
\newcommand{\var}{x}
\newcommand{\vartwo}{y}
\newcommand{\varthree}{z}
\newcommand{\utm}{M}
\newcommand{\utmtwo}{N}
\newcommand{\utmthree}{P}
\newcommand{\utmfour}{Q}
\newcommand{\tm}{t}
\newcommand{\tmtwo}{s}
\newcommand{\tmthree}{u}
\newcommand{\tmfour}{r}
\newcommand{\tms}{{\arrowvec{\tm}}}
\newcommand{\tmstwo}{{\arrowvec{\tmtwo}}}
\newcommand{\tmsthree}{{\arrowvec{\tmthree}}}
\newcommand{\tmsfour}{{\arrowvec{\tmfour}}}
\newcommand{\basetyp}{a}
\newcommand{\basetyptwo}{b}
\newcommand{\arrow}{\rightarrow}
\newcommand{\typ}{A}
\newcommand{\typtwo}{B}
\newcommand{\typthree}{C}
\newcommand{\typfour}{D}
\newcommand{\typs}{{\arrowvec{\typ}}}
\newcommand{\typstwo}{{\arrowvec{\typtwo}}}
\newcommand{\typsthree}{{\arrowvec{\typthree}}}
\newcommand{\judge}[3]{#1\vdash_{\Lamesymbol}#2:#3}
\newcommand{\judges}[3]{#1\Vdash_{\Lamesymbol}#2:#3}
\newcommand{\judgT}[3]{#1\vdash_{\LamTsymbol}#2:#3}
\newcommand{\judgTs}[3]{#1\Vdash_{\LamTsymbol}#2:#3}
\newcommand{\judgTm}[3]{#1\vdash_{\LamTmsymbol}#2:#3}
\newcommand{\judgTms}[3]{#1\Vdash_{\LamTmsymbol}#2:#3}
\newcommand{\judgRefine}[2]{#1 \sqsubset #2}
\newcommand{\derivs}[2]{#1\mathrel{\rhd}#2}
\newcommand{\deriv}{\Pi}
\newcommand{\judgment}{\mathcal{J}}
\newcommand{\typdec}[1]{#1^{\mathtt{i}}}
\newcommand{\typerase}[1]{#1^{\mathtt{e}}}
\newcommand{\typeof}[1]{\mathsf{type}(#1)}
\newcommand{\height}[1]{\mathsf{h}(#1)}
\newcommand{\weight}[1]{\mathsf{w}(#1)}
\newcommand{\maxdeg}[1]{\mathsf{\delta}_{\mathsf{max}}(#1)}
\newcommand{\simp}[2]{\mathtt{S}_{#1}(#2)}
\newcommand{\simpd}[1]{\simp{d}{#1}}
\newcommand{\simpfull}[1]{\simp{*}{#1}}
\newcommand{\comp}[1]{\mathtt{C}{(#1)}}
\newcommand{\meassym}{\mathcal{W}}
\newcommand{\meas}[1]{\meassym(#1)}
\newcommand{\diagramCommuteTmsFpTmsFp}[4]{
  \xymatrix@C=.2em@R=.2em{
    #1         & \toTm^* & #2         \\
    \tofvert^+ &         & \tofvert^+ \\
    #3         & \toTm^* & #4         \\
  }
}
\newcommand{\diagramCommuteTmFTmFp}[4]{
  \xymatrix@C=.2em@R=.2em{
    #1        & \toTm   & #2         \\
    \tofvert  &         & \tofvert^+ \\
    #3        & \toTm^= & #4         \\
  }
}
\newcommand{\diagramCommuteTmFTmFOne}[4]{
  \xymatrix@C=.2em@R=.2em{
    #1        & \toTm   & #2         \\
    \tofvert  &         & \tofvert \\
    #3        & \toTm^= & #4         \\
  }
}
\newcommand{\diagramCommuteTmFTmOneFp}[4]{
  \xymatrix@C=.2em@R=.2em{
    #1        & \toTm   & #2         \\
    \tofvert  &         & \tofvert^+ \\
    #3        & \toTm & #4         \\
  }
}
\newcommand{\diagramCommuteTmFTmOneFOne}[4]{
  \xymatrix@C=.2em@R=.2em{
    #1        & \toTm   & #2         \\
    \tofvert  &         & \tofvert \\
    #3        & \toTm & #4         \\
  }
}
\newcommand{\diagramCommuteTmFEqFOne}[4]{
  \xymatrix@C=.2em@R=.2em{
    #1        & \toTm   & #2         \\
    \tofvert  &         & \tofvert \\
    #3        & = & #4         \\
  }
}
\newcommand{\qedhere}{}
\begin{document}
\begin{frontmatter}
  \title{Strong Normalization Through Idempotent Intersection Types: A New
    Syntactical Approach\thanksref{P}\thanksref{ALL}}
    \thanks[P]{Funded by the European Union through the MSCA SE project QCOMICAL (Grant Agreement ID: 101182520).}
    \thanks[ALL]{We would like to thank the anonymous reviewers for their careful reading of the manuscript and their insightful comments, which helped improve the quality of this paper.}

  \author{Pablo Barenbaum\thanksref{c}\thanksref{b}\thanksref{emailpb}} 
  \author{Simona Ronchi Della Rocca\thanksref{a}\thanksref{emailsr}} 
  \author{Cristian Sottile\thanksref{b}\thanksref{c}\thanksref{emailcs}} 

  \address[a]{Dipartimento di Informatica, Università di Torino, Italy}
  \address[b]{ICC, CONICET—Universidad de Buenos Aires, Argentina}
  \address[c]{Universidad Nacional de Quilmes, Argentina}

  \thanks[emailpb]{Email: \href{mailto:pbarenbaum@dc.uba.ar}{\texttt{pbarenbaum@dc.uba.ar}}}
  \thanks[emailsr]{Email: \href{mailto:ronchi@di.unito.it}{\texttt{ronchi@di.unito.it}}}
  \thanks[emailcs]{Email: \href{mailto:csottile@dc.uba.ar}{\texttt{csottile@dc.uba.ar}}}

\begin{abstract}
  
It is well-known that intersection type assignment systems can be used to characterize strong normalization (SN). Typical proofs that typable lambda-terms are SN in these systems rely on \emph{semantical} techniques. In this work, we study $\CalcLame$, a variant of Coppo and Dezani's (Curry-style) intersection type system and we propose a syntactical proof of strong normalization for it. We first design $\CalcLamT$, a Church-style version of $\CalcLame$, in which terms closely correspond to typing derivations.
We then prove that typability in $\CalcLamT$ implies SN through a measure that, given a term, produces a natural number that decreases along with reduction. This measure provides a \emph{syntactical} proof of SN. Finally, the result is extended to $\CalcLame$, since the two systems simulate each other.

\end{abstract}

\begin{keyword}
  Lambda calculus, Intersection types, Strong normalization
\end{keyword}

\end{frontmatter}

\section{Introduction}
\label{intro}
Intersection types were introduced by Coppo and Dezani
\cite{DBLP:journals/ndjfl/CoppoD80}, with the explicit aim of increasing the
typability power of simple types. Beyond the arrow connective of simple types
($\arrow$), these systems incorporate an intersection connective ($\wedge$),
which is commutative, associative and idempotent (\ie $\typ \wedge \typ \equiv
\typ$), which means that intersection is essentially a notation
for \emph{sets} of types.

Intersection type assignment systems come in different versions, and are able to characterize semantical properties of terms.
For example, Coppo and Dezani's original system \emph{characterizes} strong normalization,
in the sense that a term is typable if and only if it is strongly normalizing.
Recall that a $\lambda$-term is \emph{strongly normalizing} if every reduction sequence starting from it eventually terminates.

There are many other intersection type systems characterizing other properties besides strong normalization, such as head normalization \cite{DBLP:conf/icalp/CoppoDS79,DBLP:books/daglib/0071545}, where we recall that a $\lambda$-term is \emph{head normalizing} if head reduction (reducing at every step the redex in head position) eventually terminates.
Furthermore, intersection types are a very powerful tool to reason about the interpretation of terms in various models of $\lambda$-calculus \cite{DBLP:journals/jsyml/BarendregtCD83,DBLP:series/txtcs/RoccaP04,barendregt2013lambda}.

The existent proofs of the characterization properties are not straightforward:
for both strong and head normalization, the direction ``typable implies
normalizing'' is generally carried out using semantical tools like computability
or reducibility candidates \cite{Pottinger80,DBLP:books/daglib/0071545}. These
semantical proofs are in some sense standard, but it is not easy to extract from
them an intuition about what is going on. Having an explicit measure that
decreases with $\beta$-reduction would be preferable. In the case of strong
normalization, there are three completely syntactical proofs, based on a
decreasing
measure~\cite{DBLP:conf/lics/KfouryW95,DBLP:conf/tlca/Boudol03,DBLP:journals/mscs/BucciarelliPS03};
we will briefly discuss them and their relation with the present work at the end
of this section.

A non-idempotent version of intersection types has been defined \cite{Gardner94,Kfoury00,deC09},
where the intersection connective is \emph{not} idempotent (\ie $\typ \wedge
\typ \not\equiv \typ$), and thus becomes a notation for \emph{multisets} of
types. The non-idempotent systems preserve some of the properties of the
idempotent ones: they can characterize strong and head normalization
\cite{BL11}. From a semantical point of view, they can describe relational
models of $\lambda$-calculus \cite{DBLP:journals/mscs/PaoliniPR17}; even if this
class of models is quite poor, with respect to the $\lambda$-theories that they
induce, they are useful to explore quantitative properties of typable terms.
Unlike what happens in the idempotent case, the proof that these systems with
non-idempotent intersection characterize strong and head normalization can be
carried out by an easy induction on the size of the typing derivation, which
decreases at every reduction step. This relies on the quantitative nature of
non-idempotency, and in particular on the fact that, in a derivation, the
cardinality of the multiset of types of a variable $\var$ is an upper bound of the number of
occurrences of $\var$ in the typed term. Given that these proofs
rely crucially on the fact that intersection is non-idempotent, there is not
much hope of extending them to the idempotent case. But we think that it
would be interesting to explore if there is a decreasing measure for idempotent
intersection types which is simpler than those in~\cite{DBLP:conf/lics/KfouryW95,DBLP:conf/tlca/Boudol03}.
Here we present a positive answer to this
problem, namely a decreasing measure consisting of just a natural number.
\medskip \noindent

\subsection{Contributions} Recently, a proof of strong normalization for simply typed
$\lambda$-calculus has been presented~\cite{BarenbaumS23}, based on a decreasing measure which can be defined
constructively, by relying on a non-erasing notion of
reduction inspired by \emph{memory
calculi}~\cite{NederpeltPhdThesis,KlopPhdThesis,Groote93,NeergaardS02,BucciarelliKV16}.
Namely, each $\beta$-reduction step creates a ``wrapper'' containing a copy of
the argument, remembering all the erased subterms. An operation
called \emph{full simplification} is defined
to repeatedly perform the complete development of all the redexes of
maximum degree in a term\footnote{ In this context, the \emph{degree} of a redex
$(\lam{\var}{\tm})\,\tmtwo$ is the height of the type of the abstraction
$\lam{\var}{\tm}$.}. A measure decreasing at every reduction step is then
given simply by a natural number corresponding to the number of wrappers that
remain after full simplification.
Here we \textbf{adapt this technique to idempotent intersection types}.

The first step in order to do this job is to design an
intrinsically typed version of intersection types, which are usually presented
as extrinsically typed systems instead. Let us recall that, in an
\emph{extrinsically typed} (Curry-style) system,
terms are untyped and there is a set of typing
rules assigning types to them, whereas in an
\emph{intrinsically typed} (Church-style) system,
terms are decorated with types, and typing rules are just a check
of syntactical correctness. There are some attempts in the literature in this
direction, \eg in~\cite{BVB08,DBLP:journals/iandc/LiquoriR07,DBLP:conf/birthday/BarbaneraDdV24}, but they are not suitable for the application
of the discussed techniques. In fact the big difference between simple and
intersection types, on which the power of the latter relies, is that, if $\utm$
is a term of $\lambda$-calculus typable by intersection types and $\utmtwo$ is a
subterm of $\utm$ occurring once in it, a typing derivation of a judgement
$\judge{\tctx}{\utm}{\typ}$ can contain $n \geq 1$ subderivations with object
$\utmtwo$. Since $\utmtwo$ can contain redexes,
performing a reduction inside $\utmtwo$ corresponds to performing $n$ reductions
\emph{in parallel} on the derivation.
The natural way to achieve this is to equip the Church-style intersection type
system with a parallel notion of reduction, and this is indeed what all the
existing proposals do.
At the same time, in order to apply the cited method, we would need to
reduce these subderivations independently from each other.

Our starting point is an \textbf{e}xtrinsically typed subset of $\Lambda$,
called $\CalcLame$, defined through an intersection type assignment system.
This system is a variant of well-known systems in the literature that characterize strong
normalization.
In order to build an intrinsically typed version of  $\CalcLame$, we exploit
the idea that idempotent intersection represents a set
of types, and we explicitly use sets both in the grammar of
types and terms of the typed language, so for example if $\tm$ and $\tmtwo$ are
terms of type $\typ$ and $\typtwo$ respectively, then $\set{\tm,\tmtwo}$ is a
set of terms which can be assigned the set of types $\set{\typ,\typtwo}$.
Following this idea, we formulate
an \textbf{i}ntrinsically typed
version of $\CalcLame$ called $\CalcLamT$, where
sets of types are decorations for sets of terms.
While these two systems can be dealt with independently from each other,
they are shown to \emph{simulate} each other.
The new typed language $\CalcLamT$
enjoys all the good properties we expect, such as subject reduction
and confluence.
More importantly, it is suitable for the application of a method extending that of
\cite{BarenbaumS23}. As a matter of fact, using the notion of complete
development by degree, and relying on an auxiliary \emph{memory calculus}
$\CalcLamTm$, we define a measure which is a natural
number decreasing at every reduction step in $\CalcLamT$. Then, we prove
that, for every term $\utm$ in $\CalcLame$, there is a term $\tm$ in $\CalcLamT$
such that an infinite reduction sequence in $\utm$ entails an infinite reduction
sequence in $\tm$, thus obtaining a new proof of strong normalization for
$\CalcLame$. Finally, we complete the picture by proving that $\CalcLamT$ gives type
to all the strongly normalizing terms, hence showing that it has the same typability power of the original 
paper \cite{DBLP:journals/ndjfl/CoppoD80}.

\subsection{Comparison with related works}
As we recalled before, there are three other syntactical proofs of
strong normalization for idempotent intersection types,
namely~\cite{DBLP:conf/lics/KfouryW95,DBLP:conf/tlca/Boudol03,DBLP:journals/mscs/BucciarelliPS03}.
Both~\cite{DBLP:conf/lics/KfouryW95,DBLP:conf/tlca/Boudol03} define an
intermediate language, typed by intersection types, equipped with a suitable
non-erasing reduction rule. So, as known for $\lambda I$ (\ie non–erasing) based
systems~\cite[\S 11.3]{barendregt1984}, weak normalization there entails strong
normalization. Using the property that the normalization of both the reductions
implies $\beta$-normalization, they prove weak normalization for terms typable in
the intersection type assignment system, and obtain strong normalization as a
corollary. In both papers, the normalization proofs are carried out by defining
a measure that decreases when a particular reduction strategy is chosen. The
approach of reducing weak to strong normalization can be traced back to at least
Nederpelt's and Klop's theses~\cite{NederpeltPhdThesis,KlopPhdThesis}.

In~\cite{DBLP:conf/lics/KfouryW95} the intermediate language is an
intrinsically typed $\lambda$-calculus extended with lists, the reduction rule is the
composition of a parallel $\beta$-rule and a commutation rule, and the measure is a
multiset of natural numbers decreasing under the innermost reduction.
In~\cite{DBLP:conf/tlca/Boudol03} the intermediate language is a variant of the
Klop's calculus, extrinsecally intersection typed, and the reduction is the
$\beta$-reduction, but copying its argument when it is going to be erased; the
weak normalization proof is obtained through the weak cut-elimination property
of the type derivation, and the measure is a pair of natural numbers, decreasing
under the strategy choosing the innermost redex of maximum degree.

Since our work uses an approach in some sense similar
to~\cite{DBLP:conf/lics/KfouryW95,DBLP:conf/tlca/Boudol03}, we first highlight
the novelties that we consider important with respect to them.
\begin{itemize}
\item We share with~\cite{DBLP:conf/lics/KfouryW95} the use of an intermediate
language which is intrinsically typed by intersection types, but while their
language is \emph{ad hoc}, and its properties are not explored, $\CalcLamT$ is
shown here to be the Church version of the Curry-style system. Indeed,
$\CalcLame$ and $\CalcLamT$ simulate each other, and reduction corresponds to
cut-elimination in the derivation tree; we consider this to be an important
result on its own.
\item We do not need to define an \emph{ad hoc} reduction rule; $\CalcLamT$ uses
directly the $\beta$-reduction, in its typed version.
\item The codomain of our measure is simpler than multisets and pairs
in~\cite{DBLP:conf/lics/KfouryW95,DBLP:conf/tlca/Boudol03}, consisting of a
single natural number.
\item Our technique provides a measure that decreases for \emph{every} reduction
strategy. There are proofs with these characteristics for the $\lambda$-calculus, such
as Gandy's proof and its derivatives~\cite{gandy80sn,dv87}, but not for
intersection types, as far as we know.
\end{itemize}
\noindent
Despite these differences, their works and ours rely fundamentally on
Turing's remark that contracting a redex cannot create redexes of higher or
equal degree~\cite{gandy80turing,BarendregtM13}, which is the basis for Turing's
proof of weak normalization of the simply typed
$\lambda$-calculus~\cite{gandy80turing}.

The approach of~\cite{DBLP:journals/mscs/BucciarelliPS03} is completely
different: indeed, the normalization of intersection types here is not the main
result, but it is an intermediate step in the proof that computational power of
intersection types is the same as that of simple types. The authors define an
embedding from intersection type to simple type derivations which commutes with
the $\beta$-reduction. So they do not define any decreasing measure. Let us notice
that the embedding translates terms to linear terms, and that the intersection
is used in a non commutative way, since the translation of $\typ \wedge \typtwo$ does
not coincide with that of $\typtwo \wedge \typ$.

One could define a measure for intersection types by combining the translation
with one of the existing measures for simply typed $\lambda$-calculus, for instance
the measure $\meassym$ in \cite{BarenbaumS23}. Due to information lost in the
translation, this would supply a natural number greater or equal than the one
obtained in this paper. Since decreasing measures serve as an upper bound for
the longest reduction chain of terms, a tighter measure provides a more accurate
computational analysis and would therefore be generally preferred.

It is worth noting that even the combination with a simply typed $\lambda$-calculus
exact measure (\eg~\cite{dv87}) would not yield a tighter (nor looser) measure
than $\meassym$. The loss of intersection type information that occurs during
the translation is a hard limitation for any measure based on it; they cannot be
refined to exactness. Our more coupled approach, since it operates \emph{within}
the intersection system, provides a finer analysis of reduction and thus allows
for further improvement, which is indeed one of the proposed future works. We
consider the refinability of our measure an advantage over measures based on
such translation.

\paragraph{Organization of this paper}
Section~\ref{lambdaT} presents the languages $\CalcLame$ and $\CalcLamT$, their
relations and properties. Section~\ref{lambdaTm} constructs a decreasing measure
that entails strong normalization of $\CalcLamT$ by means of an auxiliary memory
calculus $\CalcLamTm$. Finally, Section~\ref{concl} shows how this can
be used to prove strong normalization of $\CalcLame$, concludes, and proposes
future works.

\section{An intrinsically typed presentation of idempotent intersection types}
\label{lambdaT}

This section starts by presenting $\CalcLame$ (\ref{subsec:CalcLame}),
an intersection type assignment system for $\lambda$-terms,
closely related to other well-known systems that characterize strong normalization.
In Section~\ref{subsec:CalcLamT}, we formulate an intrinsically
typed version of the system, called $\CalcLamT$,
and we study its properties, including subject reduction
and confluence.
In Section~\ref{subsec:CalcLame_CalcLamT_correspondence} we establish a formal
proof-theoretical connection between $\CalcLame$ and $\CalcLamT$, and 
we prove that $\CalcLame$ gives type to all the strongly normalizing terms.
Recall that the set of untyped $\lambda$-terms $\CalcLam$ is defined as usual
by the grammar
$\utm ::= \var \mid \lam{\var}{\utm} \mid \app{\utm}{\utm}$.
The sets of
\defnn{types} ($\typ,\typtwo,\hdots$) and
\defnn{set-types} ($\typs,\typstwo,\hdots$)
are given mutually inductively as follows:
\[
  \typ ::= \basetyp \mid \typs \arrow \typ \qquad
  \text{$(\typs \neq \emptyset)$}
  \qquad \qquad \qquad \qquad 
  \typs ::= \set{A_j}_\jJ
\]
where $\basetyp,\basetyptwo,\hdots$ range over a denumerable
set of base types, and $\typs$ stands for a finite set of types\footnote{Note
that $\typs ::= \set{A_j}_\jJ$ is not intended to be a grammatical production
that determines a concrete syntax for finite sets, but rather as a formation
rule meaning that a set-type $\typs$ is given by a finite set of types.},
where $J$ denotes a finite set of indices,
and $\set{\typ_j}_\jJ$ is the set $\set{\typ_j \ST j \in J}$.
When we write a set-type,
we keep the invariant that there are no repetitions,
\ie that if $h \neq k$ then $\typ_h \neq \typ_k$,
so that the set is of cardinality $|J|$.
Sometimes we will use the explicit notation
$\set{A_1,...,A_n}$.
Note that a set-type is a finite set of types that may be empty,
but the domain of an arrow type is always a \emph{non-empty} set-type.

Sometimes we write $\typ\arrow\typtwo$ for $\set{\typ}\arrow\typtwo$.
A typing context ($\tctx,\tctxtwo,\hdots$)
is a function mapping each variable to a set-type
such that $\tctx(\var) \neq \emptyset$ for finitely
many variables~$\var$.
We write $\dom{\tctx}$ for the set of variables which
$\tctx$ assigns to a non-empty set-type.
We adopt the standard notation for typing contexts,
\eg writing $\var_1:\typs_1,\hdots,\var_n:\typs_n$
for the context $\tctx$ such that $\tctx(\var_i) = \typs_i$
for each $\itoN$
and $\tctx(\vartwo) = \emptyset$
for every $\vartwo \notin \set{\var_1,\hdots,\var_n}$.
We write $\tctx\cup\tctxtwo$ for the typing context such that
$(\tctx\cup\tctxtwo)(\var) = \tctx(\var) \cup \tctxtwo(\var)$
for every $\var$,
and we write $\tctx \subseteq \tctxtwo$
if $\tctx(\var) \subseteq \tctxtwo(\var)$
for every $\var$.
Note in particular that
$\tctx \subseteq (\tctx,\var:\typs)$
if $\var \notin \dom{\tctx}$.
If $\judgment$ is a judgement, we write $\derivs{\deriv}{\judgment}$
to mean that $\deriv$ is a particular derivation proving $\judgment$.

\subsection{Extrinsically typed system}
\label{subsec:CalcLame}

We define a Curry-style intersection type assignment system,
which is a variant of well-known systems in the literature,
starting with the one introduced by Coppo and Dezani~\cite{DBLP:journals/ndjfl/CoppoD80}.

\begin{defn}[The $\CalcLame$ type assignment system]
The $\CalcLame$ calculus is defined by means of two forms of judgement:
\begin{enumerate}
\item
  $\judge{\tctx}{\utm}{\typ}$
  meaning that the $\lambda$-term $\utm$ has type $\typ$
  under the context $\tctx$.
\item
  $\judges{\tctx}{\utm}{\typs}$
  meaning that the $\lambda$-term $\utm$
  has set-type $\typs$ under the context $\tctx$.
\end{enumerate}
The typing rules are as follows:
\[
  \infer[\ruleeAx]{
    \judge{\tctx,\var:\typs}{\var}{\typtwo}
  }{
    \typtwo \in \typs
  }
  \qquad
  \infer[\ruleeMulti]{
    \judges{\tctx}{\utmtwo}{\set{\typ_j}_\jJ}
  }{
    (\judge{\tctx}{\utmtwo}{\typ_j})_{\jJ}
    \quad
    (\forall h,k \in J.\,\, h \neq k \implies \typ_h \neq \typ_k)
  }
\]
\[
  \infer[\ruleeArri]{
    \judge{\tctx}{\lam{\var}{\utm}}{\typs\arrow\typtwo}
  }{
    \judge{\tctx,\var:\typs}{\utm}{\typtwo}
  }
  \qquad
  \infer[\ruleeArre]{
    \judge{\tctx}{\app{\utm}{\utmtwo}}{\typtwo}
  }{
    \judge{\tctx}{\utm}{\typs\arrow\typtwo}
    \quad
    \typs\neq\emptyset
    \quad
    \judges{\tctx}{\utmtwo}{\typs}
  }
\]

\noindent Note that we restrict the codomain of contexts so that $\tctx(x) \not = \emptyset$, so rule $\ruleeArri$ cannot introduce an empty set to the left of the arrow.
\end{defn}

\begin{exmp}
\label{ex:typing_deriv_extrinsic}
Let $\tctx = \var:\set{\set{\typ,\typtwo}\to\typthree,\typ,\typtwo}$. Then:
\[
  \infer[\ruleeArri]{
    \judge{\emptyset}{\lam{\var}{\var\,\var}}{\set{\set{\typ,\typtwo}\to\typthree,\typ,\typtwo}\to\typthree}
  }{
    \infer[\ruleeArre]{
      \judge{\tctx}{\var\,\var}{\typthree}
    }{
      \infer[\ruleeAx]{
        \judge{\tctx}{\var}{\set{\typ,\typtwo}\to\typthree}
      }{
        \set{\typ,\typtwo}\to\typthree \in \tctx(\var)
      }
      \quad
      \infer[\ruleeMulti]{
        \judges{\tctx}{\var}{\set{\typ,\typtwo}}
      }{
        \infer[\ruleeAx]{
          \judge{\tctx}{\var}{\typ}
        }{
          \typ \in \tctx(\var)
        }
        \quad
        \infer[\ruleeAx]{
          \judge{\tctx}{\var}{\typtwo}
        }{
          \typtwo \in \tctx(\var)
        }
      }
    }
  }
\]
\end{exmp}

\begin{rem}
The system $\CalcLame$ presented above is not exactly the same system as that
of~\cite{DBLP:journals/ndjfl/CoppoD80}, for two reasons.
First, $\CalcLame$ uses sets instead of intersections, writing
$\set{\typ,\typtwo}\to\typthree$ rather than $(\typ\wedge\typtwo)\to\typthree$.
This can be understood only as a difference in notation, given that
the intersection type constructor of~\cite{DBLP:journals/ndjfl/CoppoD80}
is assumed to be associative, commutative, and idempotent.
Second, $\CalcLame$ allows sets (\ie intersections)
to appear only to the \emph{left} of the arrow ($\to$) connective.
For example,
$(\typ\wedge\typtwo)\to(\typthree\wedge\typfour)$ is a well-formed type in
the system of~\cite{DBLP:journals/ndjfl/CoppoD80},
whereas $\set{\typ,\typtwo}\to\set{\typthree,\typfour}$ is \textbf{not} a
well-formed type in $\CalcLame$.
This means that types in $\CalcLame$ are \emph{strict types},
in the sense introduced by van~Bakel~\cite{Bakel92}.
Despite these two differences, we will prove that 
the original system of~\cite{DBLP:journals/ndjfl/CoppoD80}
and~$\CalcLame$ are equivalent in typability power,
as they both characterize strongly normalizable $\lambda$-terms.
\end{rem}

\subsection{Intrinsically typed system}
\label{subsec:CalcLamT}

We define a Church-style presentation of the system,
inspired by the linearization
proposed by Kfoury \cite{Kfoury00}.

\begin{defn}[The $\CalcLamT$ type assignment system]
\label{def_CalcLamT}
The sets of \defnn{terms} ($\tm,\tmtwo,\hdots$)
and \defnn{set-terms} ($\tms,\tmstwo,\hdots$)
are given by the following grammar:
\[
  \tm ::= \var^\typ
  \mid \lam{\var^{\typs}}{\tm}
  \mid \tm\,\tms
  \qquad \qquad \qquad 
  \tms ::= \set{\tm_j}_\jJ \qquad
\]
where $J$ stands for a finite set of indices
and $\set{\tm_j}_\jJ$ is the set $\set{\tm_j \ST j \in J}$.
Note that set-terms are finite sets of terms.
When we write a set-term,
we keep the invariant that there are no repetitions,
\ie that if $h \neq k$ then $\tm_h \neq \tm_k$.
The $\CalcLamT$ calculus is defined by means of two forms of judgement:
\begin{enumerate}
\item
  $\judgT{\tctx}{\tm}{\typ}$
  meaning that a term $\tm$ is of type $\typ$
  under the context $\tctx$.
\item
  $\judgTs{\tctx}{\tms}{\typs}$
  meaning that a set-term $\tms$
  is of set-type $\typs$
  under the context $\tctx$.
\end{enumerate}
The typing rules are as follows:
\[
  \infer[\ruleTAx]{
    \judgT{\tctx, \var:\typs}{\var^\typtwo}{\typtwo}
    }{\typtwo \in \typs }
  {
  }
  \qquad
  \infer[\ruleTMulti]{
    \judgTs{\tctx}{\set{\tm_j}_\jJ}{\set{\typ_j}_\jJ}
  }{
    (\judgT{\tctx}{\tm_j}{\typ_j})_{\jJ}
    \quad
      (\forall h,k \in J.\, h\neq k \implies \typ_h \neq \typ_k)
  }
\]
\[
  \infer[\ruleTArri]{
    \judgT{\tctx}{\lam{\var^\typs}{\tm}}{\typs\arrow\typtwo}
  }{
    \judgT{\tctx,\var:\typs}{\tm}{\typtwo}
  }
  \qquad
  \infer[\ruleTArre]{
    \judgT{\tctx}{\app{\tm}{\tmstwo}}{\typtwo}
  }{
    \judgT{\tctx}{\tm}{\typs\arrow\typtwo}
    \qquad
    \judgTs{\tctx}{\tmstwo}{\typs}
    \qquad
    \typs \neq \emptyset
  }
\]
A term $\tm$ is \defnn{typable} if $\judgT{\tctx}{\tm}{\typ}$
holds for some $\tctx,\typ$.
Similarly, a set-term is typable if $\judgTs{\tctx}{\tms}{\typs}$
holds for some $\tctx,\typs$.
Unless otherwise specified, when we speak
of {\em term} (resp. {\em set-term})
we mean {\em typable term} (resp. {\em typable set-term}).
We write $\CalcLamT$ for the set of typable terms,
\ie
$\CalcLamT \eqdef \set{\tm \ST \text{$\tm$ is typable}}$.
The notions of free and bound occurrences of variables
are defined as usual.
The set of \defnn{free variables of type $\typ$} of a term $\tm$
is written $\fvtyp{\typ}{\tm}$
and defined as the set of variables $\var$
such that $\var^\typ$ occurs free in $\tm$.
For example,
$\fvtyp{\typ}{\lam{\vartwo^{\set{\set{\typ,\typtwo}\arrow\typthree}}}{\vartwo^{\set{\typ,\typtwo}\arrow\typthree}\set{\var^\typ,\varthree^\typtwo}}} = \set{\var}$
when $\typ\neq\typtwo$.
The set of \defnn{free variables} of a term $\tm$
is written $\fv{\tm}$ .
Terms are considered up to $\alpha$-renaming of bound variables.
\end{defn}

\begin{notation}
Sometimes we write $\app{\tm}{\tmtwo}$ to stand for $\app{\tm}{\set{\tmtwo}}$.  We may omit type annotations over variables if they are clear from the context.
\end{notation}

\begin{rem}
  \label{rem:inj}
Note that, in the rule $\ruleTMulti$,
we explicitly require that the function $j \mapsto \typ_j$
is injective
(\ie $\forall h,k \in J.\, h\neq k \implies \typ_h \neq \typ_k$).
For example,
let $\tctx = (\var:\set{\typ,\typtwo},\vartwo:\set{\typ})$.
Then the following are \textbf{valid} judgments:
\[
  \judgTs{\tctx}{\set{\var^\typ}}{\set{\typ}}
  \qquad
  \judgTs{\tctx}{\set{\var^\typ,\var^\typtwo}}{\set{\typ,\typtwo}}
  \qquad
  \judgTs{\tctx}{\set{\var^\typtwo,\vartwo^\typ}}{\set{\typ,\typtwo}}
\]
On the other hand, the following judgments are \textbf{not valid}:
\[
  \judgTs{\tctx}{\set{\var^\typ,\vartwo^\typ}}{\set{\typ}}
  \qquad
  \judgTs{\tctx}{\set{\var^\typ,\var^\typ}}{\set{\typ}}
\]
\end{rem}

\begin{exmp}
The derivation of Example~\ref{ex:typing_deriv_extrinsic} can be encoded in
$\CalcLamT$ as the following term, which has type $\set{\set{\typ,\typtwo}\to\typthree,\typ,\typtwo}\to\typthree$ in
the empty context:
\[
  \lam{\var^{\set{\set{\typ,\typtwo}\to\typthree,\typ,\typtwo}}}{
    \var^{\set{\typ,\typtwo}\to\typthree}
    \,
    \set{
      \var^{\typ}
    , \var^{\typtwo}
    }
  }
\]
\end{exmp}

\begin{rem}
Note that, in general, a set-term may be an \emph{empty} set of terms,
but the argument of an application is always a \emph{non-empty}
set-term.
We let set-types and set-terms to be empty
to be able to write expressions such as
$\tms = \set{\tm} \cup \tmstwo$
to mean that $\tms$ is a set containing at least one element $\tm$,
leaving the possibility that $\tmstwo$ may be empty.
\end{rem}

\begin{rem}[Subterm property]
If $\tm$ is typable, all of its subterms are typable.
\end{rem}

\begin{lem}[Weakening and strengthening]\label{lem:weak}
\quad
\begin{enumerate}
\item
  If $\judgT{\tctx}{\tm}{\typ}$ (resp. $\judgTs{\tctx}{\tms}{\typs}$)
  and $\tctx \subseteq \tctxtwo$
  then $\judgT{\tctxtwo}{\tm}{\typ}$ (resp. $\judgTs{\tctxtwo}{\tms}{\typs}$).
\item
  If $\judgT{\tctx,\var:\set{\typ}\cup\typstwo}{\tm}{\typthree}$ (resp. $\judgTs{\tctx,\var:\set{\typ}\cup\typstwo}{\tms}{\typsthree}$)
  and $\var \notin \fvtyp{\typ}{\tm}$,
  then $\judgT{\tctx'}{\tm}{\typthree}$ (resp. $\judgTs{\tctx'}{\tms}{\typsthree}$), where $\tctx'$ is $(\tctx,\var:\typstwo)$ if $\typstwo \not = \emptyset$, and $\tctx$ otherwise.


\end{enumerate}
\end{lem}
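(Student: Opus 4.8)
The plan is to prove both statements by a single \emph{simultaneous} induction, treating the judgments $\judgT{\tctx}{\tm}{\typ}$ and $\judgTs{\tctx}{\tms}{\typs}$ together, since terms and set-terms are defined by mutual induction. In each case I would do case analysis on the last rule applied ($\ruleTAx$, $\ruleTMulti$, $\ruleTArri$, $\ruleTArre$) and appeal to the induction hypothesis on the immediate subderivations. The $\ruleTMulti$ and $\ruleTArre$ cases are immediate once the side conditions are checked, so the interesting work concentrates in the axiom and abstraction cases.

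For part 1 (weakening), the key observation is that enlarging the context preserves every side condition: in $\ruleTAx$, if $\typtwo \in \tctx(\var)$ and $\tctx(\var) \subseteq \tctxtwo(\var)$ then $\typtwo \in \tctxtwo(\var)$; in $\ruleTArre$ the condition $\typs \neq \emptyset$ concerns types only; and in $\ruleTMulti$ the injectivity requirement is about the $\typ_j$, not the context. The only delicate case is $\ruleTArri$, where I would first $\alpha$-rename the bound variable so that $\var \notin \dom{\tctxtwo}$, then note that $\tctx \subseteq \tctxtwo$ gives $(\tctx,\var:\typs) \subseteq (\tctxtwo,\var:\typs)$, apply the induction hypothesis to the premise, and close with $\ruleTArri$.

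For part 2 (strengthening), the subtlety is that we remove a single type $\typ$ from the set-type of $\var$, relying on the type-indexed hypothesis $\var \notin \fvtyp{\typ}{\tm}$. The crucial case is again $\ruleTAx$. If the subject is $\varfour^{\typthree}$ with $\varfour \neq \var$, the annotation of $\var$ plays no role and the axiom reapplies verbatim in $\tctx'$. If the subject is $\var^{\typthree}$ itself, then $\var \notin \fvtyp{\typ}{\var^{\typthree}}$ forces $\typthree \neq \typ$; combined with $\typthree \in \set{\typ}\cup\typstwo$ this yields $\typthree \in \typstwo$, so in particular $\typstwo \neq \emptyset$, $\tctx' = (\tctx,\var:\typstwo)$, and $\ruleTAx$ applies. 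In the inductive cases I would push the hypothesis through the syntax: $\var \notin \fvtyp{\typ}{\app{\tm}{\tmstwo}}$ splits into $\var \notin \fvtyp{\typ}{\tm}$ and $\var \notin \fvtyp{\typ}{\tmstwo}$, and similarly each component of a set-term inherits the condition; since the induction hypothesis produces the \emph{same} $\tctx'$ on every premise, the rules recombine directly. In $\ruleTArri$, after $\alpha$-renaming the bound variable $\varfour$ to be distinct from $\var$ and fresh, the free-variable condition descends to the body, and the induction hypothesis applied to the premise removes $\typ$ from $\var$ while leaving $\varfour:\typsfour$ intact, yielding $\judgT{\tctx',\varfour:\typsfour}{\tm}{\typtwo}$, after which $\ruleTArri$ concludes.

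I expect the main obstacle to be the bookkeeping in part 2: correctly tracking the type-indexed condition $\var \notin \fvtyp{\typ}{\cdot}$ across binders, and handling the degenerate situation $\typstwo = \emptyset$ (where $\var$ disappears from the context altogether) uniformly with the case $\typstwo \neq \emptyset$. The axiom case must also confirm that whenever $\var$ is itself the subject, the surviving type necessarily lies in $\typstwo$ rather than being the removed $\typ$, which is precisely what the hypothesis guarantees.
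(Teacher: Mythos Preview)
The paper does not provide a proof for this lemma; it is stated without argument as a routine fact about the system. Your proposed proof by simultaneous induction on the typing derivation, with the key case analysis in $\ruleTAx$ for strengthening (using $\var \notin \fvtyp{\typ}{\var^{\typthree}}$ to force $\typthree \neq \typ$ and hence $\typthree \in \typstwo \neq \emptyset$), is correct and is exactly the standard argument the authors are implicitly relying on.
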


\begin{defn}[Minimal typing context]
For each term $\tm$ (resp. set-term $\tms$),
its \emph{minimal typing context}
is written $\MinCtx{\tm}$ (resp. $\MinCtx{\tms}$)
and defined as follows:
\[
  \begin{array}{lll@{\qquad}lll}
    \MinCtx{\var^\typ}
    & \eqdef &
    \var:\set{\typ}
  &
    \MinCtx{\lam{\var^\typs}{\tm}}
    & \eqdef &
    \MinCtx{\tm} \setminus \var
  \\
    \MinCtx{\tm\,\tmstwo}
    & \eqdef &
    \MinCtx{\tm} \cup \MinCtx{\tmstwo}
  &
    \MinCtx{\set{\tm_1,\hdots,\tm_n}}
    & \eqdef &
    \cup_{i=1}^n \MinCtx{\tm_i}
  \end{array}
\]
where $\tctx \setminus \var$
is defined in such a way that
$(\tctx\setminus\var)(\var) = \emptyset$
and $(\tctx\setminus\var)(\vartwo) = \tctx(\vartwo)$
for $\var \neq \vartwo$.
\end{defn}

\begin{lem}[Type uniqueness]
\label{type_uniqueness}
If $\judgT{\tctx_1}{\tm}{\typ_1}$
and $\judgT{\tctx_2}{\tm}{\typ_2}$
then $\typ_1 = \typ_2$.
Furthermore, $\MinCtx{\tm} \subseteq \tctx_1$
and $\MinCtx{\tm} \subseteq \tctx_2$
and $\judgT{\MinCtx{\tm}}{\tm}{\typ_1}$.
\end{lem}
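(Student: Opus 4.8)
The plan is to prove a slightly strengthened statement that also covers set-terms---namely, that every typable set-term $\tms$ has a unique set-type, satisfies $\MinCtx{\tms}\subseteq\tctx_i$, and is typable under $\MinCtx{\tms}$---and to establish all three claims simultaneously by structural induction on $\tm$ and $\tms$. The reason uniqueness holds at all is that $\CalcLamT$ is intrinsically typed: every variable occurrence $\var^\typ$ and every binder $\lam{\var^\typs}{\cdot}$ already records its type data in the syntax. Consequently the shape of the term forces which rule concludes the derivation, and the types of the immediate subterms determine the type of the whole, so the proof is essentially a matter of reading off annotations and propagating the induction hypotheses.

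For a variable $\var^\typtwo$ the only applicable rule is $\ruleTAx$, whose conclusion reads off the annotation, so $\typ_1=\typtwo=\typ_2$ irrespective of the ambient context; the minimal context is $\var:\set{\typtwo}$, which sits inside any $\tctx_i$ typing the variable because the side condition forces $\typtwo\in\tctx_i(\var)$, and $\judgT{\var:\set{\typtwo}}{\var^\typtwo}{\typtwo}$ is immediate. For an application $\tm\,\tmstwo$ the last rule is $\ruleTArre$; by the induction hypothesis $\tm$ has a unique type, which must be of the form $\typs\arrow\typtwo$, so the result type $\typtwo$ is unique. The minimal-context derivation is recovered by weakening both premises to $\MinCtx{\tm}\cup\MinCtx{\tmstwo}$ via \Cref{lem:weak} and reapplying $\ruleTArre$ (the side condition $\typs\neq\emptyset$ is inherited), and this union lies in $\tctx_i$ because both components do. The set-term case $\set{\tm_j}_\jJ$ is analogous: each element has a unique type by the induction hypothesis, the typability of the set-term guarantees that the element-to-type assignment is injective (so the collected types form a genuine repetition-free set-type), and the minimal-context derivation follows by weakening each premise to $\cup_j\MinCtx{\tm_j}$ and reapplying $\ruleTMulti$.

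The delicate case is the abstraction $\lam{\var^\typs}{\tm}$. Here the last rule must be $\ruleTArri$, and the annotation $\typs$ fixes the domain; applying the induction hypothesis to the body $\tm$ under the extended contexts $\tctx_i,\var:\typs$ yields a unique codomain $\typtwo$, whence $\typ_1=\typs\arrow\typtwo=\typ_2$. For the minimal-context claims one must check that the constraint the body's minimal context places on the bound variable is compatible with the binder's annotation, \ie that $\MinCtx{\tm}(\var)\subseteq\typs$. This is exactly what the induction hypothesis provides, since it gives $\MinCtx{\tm}\subseteq(\tctx_i,\var:\typs)$; consequently $\MinCtx{\tm}\subseteq(\MinCtx{\tm}\setminus\var),\var:\typs$, so \Cref{lem:weak} lets us retype the body under that context and then $\ruleTArri$ yields $\judgT{\MinCtx{\tm}\setminus\var}{\lam{\var^\typs}{\tm}}{\typs\arrow\typtwo}$. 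Finally, deleting $\var$ from $\MinCtx{\tm}$ keeps it inside $\tctx_i$, giving $\MinCtx{\lam{\var^\typs}{\tm}}\subseteq\tctx_i$.

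I expect this abstraction step---reconciling the minimal context of the body, which may mention the bound variable with a strictly smaller set-type (or not at all), with the fixed annotation $\typs$ demanded by $\ruleTArri$---to be the only real obstacle; every other case is a direct read-off of annotations combined with a weakening to the appropriate union of minimal contexts. A minor point worth stating explicitly is that, because set-terms are sets rather than indexed families, uniqueness of a set-type should be argued elementwise (each element of $\tms$ determines its type), with the injectivity side condition of $\ruleTMulti$ ensuring that distinct elements contribute distinct types.
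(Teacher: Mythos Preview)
Your proposal is correct. The paper states \Cref{type_uniqueness} without proof, treating it as routine; your argument---mutual structural induction on terms and set-terms, reading off type annotations and using weakening (\Cref{lem:weak}) to rebuild derivations over the minimal context---is exactly the expected one, and your identification of the abstraction case (where one must check $\MinCtx{\tm}(\var)\subseteq\typs$ to justify the weakening) as the only nontrivial step is accurate.
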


\begin{rem}
Sometimes we write $\tm^\typ$ to mean
that $\judgT{\MinCtx{\tm}}{\tm}{\typ}$.
Note that, by the previous lemma,
there is a unique $\typ$ such that $\tm^\typ$.
\end{rem}

\begin{obs}[Bijection between a set-term and its set-type]\label{obs:bi}

  Let $\judgTs{\tctx}{\tmstwo}{\typs}$.
  Given $\typ' \in \typs$, we know by rule $\ruleTMulti$ that there exists a derivation $\judgT\tctx{\tmtwo'}{\typ'}$. This is the only derivation with type $\typ'$ in the premises of the rule: otherwise, we would contradict the rule's side condition by having different derivations with the same type.
  Therefore, $\tmtwo'$ is unique.
  Given $\tmtwo' \in \tmstwo$, we know by rule $\ruleTMulti$ that it has a type $\typ'$; by Lemma~\ref{type_uniqueness}, this type is unique.
   


\end{obs}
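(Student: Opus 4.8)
The plan is to argue by inversion on the derivation of $\judgTs{\tctx}{\tmstwo}{\typs}$ and then exhibit the correspondence explicitly. Since a set-term judgment can only be concluded by rule $\ruleTMulti$, such a derivation supplies a single finite index set $J$ together with enumerations $\tmstwo = \set{\tm_j}_\jJ$ and $\typs = \set{\typ_j}_\jJ$, premises $\judgT{\tctx}{\tm_j}{\typ_j}$ for each $\jJ$, and the side condition that $j \mapsto \typ_j$ is injective. Recall also the standing no-repetition invariant on set-terms, which makes $j \mapsto \tm_j$ injective as well. Thus both enumerations are already bijections from $J$ onto $\tmstwo$ and onto $\typs$, respectively.

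Next I would define the candidate map $\phi \colon \tmstwo \to \typs$ sending each $\tmtwo' \in \tmstwo$ to its type. This is well defined because, by Lemma~\ref{type_uniqueness}, every term has exactly one type; and it lands in $\typs$ because $\tmtwo' = \tm_j$ for some $\jJ$, whose type is $\typ_j \in \typs$. Equivalently, $\phi$ is the composite $\tmstwo \xrightarrow{\sim} J \xrightarrow{\sim} \typs$ of the inverse term-enumeration followed by the type-enumeration, i.e.\ $\phi(\tm_j) = \typ_j$. Surjectivity is immediate: any $\typ' \in \typs$ equals some $\typ_j$, and $\phi(\tm_j) = \typ_j$. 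For injectivity, suppose $\phi(\tm_h) = \phi(\tm_k)$, that is $\typ_h = \typ_k$; the contrapositive of the side condition of $\ruleTMulti$ forces $h = k$, hence $\tm_h = \tm_k$. So $\phi$ is a bijection, sending each term to its type and, inversely, each type back to its unique witnessing term.

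The only point requiring care — and really the crux of the statement — is ruling out the two ways the correspondence could fail to be bijective: a single term occurring with several types among the premises, or two distinct terms sharing one type. The first is excluded by type uniqueness (Lemma~\ref{type_uniqueness}), and the second precisely by the injectivity side condition built into $\ruleTMulti$ (cf.\ Remark~\ref{rem:inj}); it is exactly to secure this property that the side condition was imposed. Everything else is bookkeeping about the common index set $J$, so I do not expect any substantive obstacle beyond invoking these two facts at the right places.
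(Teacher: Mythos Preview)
Your proposal is correct and follows essentially the same approach as the paper: invert on rule $\ruleTMulti$, use its injectivity side condition to show each type in $\typs$ is witnessed by a unique term, and use Lemma~\ref{type_uniqueness} to show each term in $\tmstwo$ has a unique type. The paper's argument is the informal text of the observation itself; you have simply made the bijection $\phi$ and the role of the common index set $J$ explicit, which is a faithful elaboration rather than a different route.
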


\begin{defn}[Substitution in $\CalcLamT$]
  We define operations of capture-avoiding substitution for terms and
  set-terms by mutual recursion.
  Let $\judgTs{\tctx}{\tmstwo}{\typs}$
    with $\tmstwo = \set{\tmtwo_i}_\iI$, $\typs = \set{\typ_i}_\iI$, and
    $\judgT{\tctx}{\tmtwo_i}{\typ_i}$ for all $i \in I$.
we define $\tm\sub{\var^\typs}{\tmstwo}$ recursively as follows.

  \[
    \begin{array}{rcl}
      \var^{\typ_i}\sub{\var^\typs}{\tmstwo}
    & \eqdef &
      \tmtwo_i
    \\
      \vartwo^\typtwo\sub{\var^\typs}{\tmstwo}
    & \eqdef &
      \vartwo^\typtwo
    \\
      (\lam{\vartwo^\typstwo}{\tm})\sub{\var^\typs}{\tmstwo}
    & \eqdef &
      \lam{\vartwo^\typstwo}{\tm\sub{\var^\typs}{\tmstwo}}
    \\
      (\app{\tm}{\tmsthree})\sub{\var^\typs}{\tmstwo}
    & \eqdef &
               \app{\tm\sub{\var^\typs}{\tmstwo}}{\tmsthree\sub{\var^\typs}{\tmstwo}} \\
    \set{\tm_1,\hdots,\tm_n}\sub{\var^\typs}{\tmstwo} &
    \eqdef &
   \set{\tm_1\sub{\var^\typs}{\tmstwo},\hdots,\tm_n\sub{\var^\typs}{\tmstwo}}
    \end{array}
  \]
\noindent
Note that, in the first case, $\tmtwo_i$ exists and is unique due to
Observation~\ref{obs:bi}.
\end{defn}
\begin{lem}[Substitution Lemma]
  \label{lem:sub}
 Let $\judgT{\tctx,\var:\typs}{\tm}{\typtwo}$ and $\judgTs{\tctx}{\tmstwo}{\typs}$.
 Then $\judgT{\tctx}{\tm\sub{\var^\typs}{\tmstwo}}{\typtwo}$.
\end{lem}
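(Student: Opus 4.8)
The plan is to prove the statement by mutual induction on the structure of $\tm$, simultaneously establishing the analogous claim for set-terms: if $\judgTs{\tctx,\var:\typs}{\tmsthree}{\typsthree}$ and $\judgTs{\tctx}{\tmstwo}{\typs}$, then $\judgTs{\tctx}{\tmsthree\sub{\var^\typs}{\tmstwo}}{\typsthree}$. In each case I would invert the last typing rule, apply the appropriate induction hypotheses, and re-apply the same rule, mirroring the recursive clauses in the definition of substitution. Throughout, I fix the notation $\tmstwo = \set{\tmtwo_i}_\iI$ and $\typs = \set{\typ_i}_\iI$ with $\judgT{\tctx}{\tmtwo_i}{\typ_i}$ for each $i$, as obtained by inverting $\ruleTMulti$ on the hypothesis $\judgTs{\tctx}{\tmstwo}{\typs}$.

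For the axiom there are two subcases. If $\tm = \var^\typtwo$ is the substituted variable, the derivation ends in $\ruleTAx$ with $\typtwo \in \typs$, so $\typtwo = \typ_i$ for a unique $i$; by Observation~\ref{obs:bi} the term $\tmtwo_i$ is the unique element of $\tmstwo$ of type $\typ_i$, the definition gives $\var^{\typ_i}\sub{\var^\typs}{\tmstwo} = \tmtwo_i$, and this already has type $\typ_i$ under $\tctx$ by hypothesis. If $\tm = \vartwo^\typtwo$ with $\vartwo \neq \var$, the substitution is the identity and $\ruleTAx$ still derives $\judgT{\tctx}{\vartwo^\typtwo}{\typtwo}$, since $\typtwo \in \tctx(\vartwo)$.

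The application case $\tm = \app{\tm'}{\tmsfour}$ is routine: invert $\ruleTArre$, apply the two induction hypotheses (the term one to $\tm'$ and the set-term one to $\tmsfour$), and re-apply $\ruleTArre$, observing that the non-emptiness side condition is untouched by substitution. For the abstraction case $\tm = \lam{\vartwo^\typstwo}{\tm'}$ I would first $\alpha$-rename so that $\vartwo \neq \var$ and $\vartwo \notin \fv{\tmstwo}$. Inverting $\ruleTArri$ yields $\judgT{\tctx,\var:\typs,\vartwo:\typstwo}{\tm'}{\typtwo}$; I then extend $\judgTs{\tctx}{\tmstwo}{\typs}$ to $\judgTs{\tctx,\vartwo:\typstwo}{\tmstwo}{\typs}$ by Lemma~\ref{lem:weak}(1), apply the induction hypothesis in the context $\tctx,\vartwo:\typstwo$, and conclude with $\ruleTArri$.

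The only case requiring real care is the set-term case $\tmsthree = \set{\tm_1,\hdots,\tm_n}$, typed by $\ruleTMulti$ with pairwise distinct types $\typ'_1,\hdots,\typ'_n$. The induction hypothesis gives $\judgT{\tctx}{\tm_k\sub{\var^\typs}{\tmstwo}}{\typ'_k}$ for each $k$, and since the types $\typ'_k$ are unchanged and hence still distinct, the side condition of $\ruleTMulti$ is satisfied. The subtle point is that the result must genuinely be a set-term, i.e.\ the substituted terms must be pairwise distinct: this follows from Lemma~\ref{type_uniqueness}, because $\tm_h\sub{\var^\typs}{\tmstwo}$ and $\tm_k\sub{\var^\typs}{\tmstwo}$ carry the distinct types $\typ'_h \neq \typ'_k$ and so cannot coincide. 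I expect this to be the main (albeit mild) obstacle, as it is the one place where the well-formedness invariant on set-terms, rather than a plain rule inversion, has to be re-established.
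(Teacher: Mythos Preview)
Your proposal is correct and follows essentially the same approach as the paper: mutual induction on the structure of $\tm$ generalized to set-terms, with the variable case handled via Observation~\ref{obs:bi} and the set-term case requiring Lemma~\ref{type_uniqueness} to show that substitution preserves the no-repetition invariant. You in fact spell out more cases (abstraction, application, the $\vartwo \neq \var$ subcase) than the paper does, but the key ideas and the identification of the set-term case as the one needing care coincide exactly.
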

\begin{proof}
 By induction on $\tm$, generalizing the statement to set-terms, where
    the interesting cases are:
  \begin{enumerate}
  \item If $\tm = \var^{\typtwo}$, then $\judgT{\tctx,x:\typs}{\var^{\typtwo}}\typtwo$
    with $\typtwo\in\typs$. By Observation~\ref{obs:bi}, there exists a unique $\tmtwo_i \in
    \tmstwo$ such that $\judgT\tctx{\tmtwo_i}\typtwo$. Therefore,
    $\var^{\typtwo}\sub{\var^\typs}{\tmstwo} = \tmtwo_i$ of type $\typtwo$.
  \item If $\tms = \set{\tm_1,\hdots,\tm_n}$, then $\typtwo$ is indeed a set-type
    $\typstwo = \set{\typtwo_1,\hdots,\typtwo_n}$, and
    $\judgTs{\tctx,x:\typs}{\set{\tm_1,\hdots,\tm_n}}
    {\set{\typtwo_1,\hdots,\typtwo_n}}$. By IH,
    $\judgT{\tctx}{\tm_i\sub{\var^\typs}{\tmstwo}}{\typtwo_i}$ for all $i$ such that $1 \leq i \leq n$, so by $\ruleTMulti$ we have
    $\judgTs{\tctx}{\set{\tm_1\sub{\var^\typs}{\tmstwo},\hdots,\tm_n
        \sub{\var^\typs}{\tmstwo}}}{\set{\typtwo_1,\hdots,\typtwo_n}}$.
    Note that substitution preserves cardinality in typed set-term derivations. If we had $\tm_i\sub{\var^\typs}\tmstwo = \tm_j\sub{\var^\typs}\tmstwo$ with distinct $i,j$ such that $1 \leq i,j \leq n$, they would have the same type by Lemma \ref{type_uniqueness}. This would imply that $\tm_i$ and $\tm_j$ have the same type, which is impossible by rule $\ruleTMulti$.
  \end{enumerate}
\end{proof}

\begin{defn}[Reduction in $\CalcLamT$]
A notion of reduction called $\symT$-reduction is
defined over terms and set-terms by the following rule,
closed by congruence under arbitrary contexts:
\[
  (\lam{\var^\typs}{\tm})\,\tmstwo
  \toT
  \tm\sub{\var^\typs}{\tmstwo}
\]
Note in particular that congruence rules allow reducing inside
the argument of an application,
\ie
$\tmstwo \toT \tmsthree$
implies
$\app{\tm}{\tmstwo} \toT \app{\tm}{\tmsthree}$,
and inside any of the elements of a set-term,
\ie $\tm_i \toT \tm_i'$ implies $\set{\tm_1,\hdots,\tm_i,\hdots,\tm_n} \toT \set{\tm_1,\hdots,\tm_i',\hdots,\tm_n}$.
We write $\toTs$ for the reflexive--transitive closure of $\toT$.
\end{defn}
Reduction in $\CalcLamT$ enjoys the following properties.
See Appendices~\ref{appendix:section:CalcLamT_subject_reduction} and~\ref{appendix:section:CalcLamT_confluence} for detailed proofs.

\begin{restatable}[Subject reduction]{prop}{subjectreduction}
  \label{prop:sub-red}
If $\tm \toT \tm'$ and $\judgT{\tctx}{\tm}{\typ}$
then $\judgT{\tctx}{\tm'}{\typ}$.
\end{restatable}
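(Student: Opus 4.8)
The plan is to prove subject reduction by induction on the derivation that $\tm \toT \tm'$, i.e.\ on the position of the contracted redex, reducing the congruence cases to a single base case handled by the Substitution Lemma (Lemma~\ref{lem:sub}). Since $\symT$-reduction is defined over both terms and set-terms simultaneously and is closed under arbitrary congruence, I would first generalize the statement to include set-terms: namely, I would prove in one induction that (i) if $\tm \toT \tm'$ and $\judgT{\tctx}{\tm}{\typ}$ then $\judgT{\tctx}{\tm'}{\typ}$, and (ii) if $\tms \toT \tms'$ and $\judgTs{\tctx}{\tms}{\typs}$ then $\judgTs{\tctx}{\tms'}{\typs}$. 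This mutual strengthening is forced by the grammar, since reducing inside an application $\app{\tm}{\tmstwo}$ may contract a redex living inside the set-term argument $\tmstwo$, and reducing inside a set-term contracts a redex inside one of its elements.

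\textbf{Base case.} The root redex is $(\lam{\var^\typs}{\tm})\,\tmstwo \toT \tm\sub{\var^\typs}{\tmstwo}$. Here I would invert the typing derivation: from $\judgT{\tctx}{(\lam{\var^\typs}{\tm})\,\tmstwo}{\typtwo}$, rule $\ruleTArre$ gives $\judgT{\tctx}{\lam{\var^\typs}{\tm}}{\typs\arrow\typtwo}$ and $\judgTs{\tctx}{\tmstwo}{\typs}$ with $\typs \neq \emptyset$, and then rule $\ruleTArri$ gives $\judgT{\tctx,\var:\typs}{\tm}{\typtwo}$. The two premises $\judgT{\tctx,\var:\typs}{\tm}{\typtwo}$ and $\judgTs{\tctx}{\tmstwo}{\typs}$ are exactly the hypotheses of the Substitution Lemma, which immediately yields $\judgT{\tctx}{\tm\sub{\var^\typs}{\tmstwo}}{\typtwo}$, as required.

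\textbf{Congruence cases.} For each term/set-term constructor I would invert the typing rule, apply the induction hypothesis to the reducing subterm, and reassemble the derivation. The only genuinely delicate step is the set-term congruence $\set{\tm_1,\hdots,\tm_i,\hdots,\tm_n} \toT \set{\tm_1,\hdots,\tm_i',\hdots,\tm_n}$: here inverting $\ruleTMulti$ gives $\judgT{\tctx}{\tm_i}{\typ_i}$, the induction hypothesis gives $\judgT{\tctx}{\tm_i'}{\typ_i}$ with the \emph{same} type $\typ_i$ (this is crucial, and it relies on type preservation being a strict equality, which the statement guarantees), and I must check that the side condition of $\ruleTMulti$---that the map $j \mapsto \typ_j$ is injective---still holds so that $\set{\tm_1,\hdots,\tm_i',\hdots,\tm_n}$ is still a well-formed set-term of type $\set{\typ_j}_\jJ$. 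Since the types $\typ_j$ are unchanged by reduction (the reduced element keeps type $\typ_i$), injectivity of the types is preserved and the reassembled premises still satisfy the side condition.

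\textbf{The main obstacle} is ensuring that reduction preserves the invariant, built into the grammar and rule $\ruleTMulti$, that a set-term has no repeated elements and no repeated types among its components. Type preservation handles the types, but one must also confirm the reduced set-term cannot collapse: the reduct $\set{\tm_1,\hdots,\tm_i',\hdots,\tm_n}$ could only fail to be a genuine $n$-element set if $\tm_i' = \tm_j$ for some $j \neq i$. However, since $\tm_i'$ has type $\typ_i$ and $\tm_j$ has the distinct type $\typ_j$ (by injectivity), Lemma~\ref{type_uniqueness} forbids $\tm_i' = \tm_j$, so the cardinality is preserved and the set-term remains well-formed. This is entirely analogous to the cardinality-preservation argument already made in the proof of Lemma~\ref{lem:sub}, so the remaining cases for variables, abstractions, and applications are routine inversions followed by a single application of the induction hypothesis.
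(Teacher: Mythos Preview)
Your proposal is correct and follows essentially the same approach as the paper: generalize to set-terms, induct on the reduction derivation, and handle the base case by inverting $\ruleTArre$/$\ruleTArri$ and applying the Substitution Lemma. You are in fact more careful than the paper in the set-term congruence case, where you explicitly check that the injectivity side condition of $\ruleTMulti$ and the cardinality of the set-term are preserved via Lemma~\ref{type_uniqueness}; the paper simply writes ``the proof follows by \ih'' there.
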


\begin{restatable}[Confluence]{prop}{confluenceprop}
  \label{prop:confluence}
  If $\tm \toTs \tmtwo$ and $\tm \toTs \tmthree$ then there is $\tmfour$ such
  that $\tmtwo \toTs \tmfour$ and $\tmthree \toTs \tmfour$.
\end{restatable}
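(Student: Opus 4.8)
The plan is to prove confluence via the standard Tait–Martin-Löf technique of introducing a parallel reduction relation $\Rightarrow$ and showing it satisfies the diamond property. First I would define $\Rightarrow$ on terms and set-terms by the usual inductive clauses: it is reflexive on variables $\var^\typ \Rightarrow \var^\typ$; it is closed under the term constructors, so $\tm \Rightarrow \tm'$ and $\tms \Rightarrow \tms'$ imply $\lam{\var^\typs}{\tm} \Rightarrow \lam{\var^\typs}{\tm'}$ and $\app{\tm}{\tms} \Rightarrow \app{\tm'}{\tms'}$, and on set-terms $\tm_j \Rightarrow \tm_j'$ for all $\jJ$ implies $\set{\tm_j}_\jJ \Rightarrow \set{\tm_j'}_\jJ$; and it contracts a redex in parallel with its subterms, so $\tm \Rightarrow \tm'$ and $\tmstwo \Rightarrow \tmstwo'$ imply $(\lam{\var^\typs}{\tm})\,\tmstwo \Rightarrow \tm'\sub{\var^\typs}{\tmstwo'}$. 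I would then check the two sandwiching inclusions $\toT\, \subseteq\, \Rightarrow\, \subseteq\, \toTs$, from which the reflexive–transitive closures coincide; this reduces confluence of $\toT$ to the diamond property of $\Rightarrow$.

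The technical core is a substitutivity lemma for parallel reduction: if $\tm \Rightarrow \tm'$ and $\tmstwo \Rightarrow \tmstwo'$, then $\tm\sub{\var^\typs}{\tmstwo} \Rightarrow \tm'\sub{\var^\typs}{\tmstwo'}$. I would prove this by induction on the derivation of $\tm \Rightarrow \tm'$, generalizing to set-terms as elsewhere in the paper. The diamond property, $\tm \Rightarrow \tm_1$ and $\tm \Rightarrow \tm_2$ imply there exists $\tmfour$ with $\tm_1 \Rightarrow \tmfour$ and $\tm_2 \Rightarrow \tmfour$, then follows by induction on $\tm$, with the case analysis on which clauses derive the two steps; the interesting case is when both contract the same head redex $(\lam{\var^\typs}{\tm})\,\tmstwo$, where the common reduct is built by substituting the joined argument into the joined body and the substitutivity lemma closes the diagram.

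The main obstacle specific to this setting is the injectivity side condition on rule $\ruleTMulti$: a set-term is required to be a set with distinct types (equivalently, by Observation~\ref{obs:bi}, distinct elements), so I must verify that parallel reduction never collapses two elements of a set-term into one and thereby produces an ill-formed expression. Concretely, if $\set{\tm_j}_\jJ \Rightarrow \set{\tm_j'}_\jJ$ with each $\tm_j \Rightarrow \tm_j'$, I need $j \mapsto \tm_j'$ to remain injective. This is exactly the phenomenon already handled for ordinary substitution inside the proof of Lemma~\ref{lem:sub}: by subject reduction (Proposition~\ref{prop:sub-red}) each $\tm_j'$ has the same type as $\tm_j$, and since the original types were pairwise distinct, Type Uniqueness (Lemma~\ref{type_uniqueness}) forces the reducts to be pairwise distinct as well. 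I would invoke this argument uniformly wherever a set-term is formed during the construction of common reducts, so that cardinality is preserved and every expression appearing in the diagrams is a well-formed, typable (set-)term; once this bookkeeping is in place, the remaining cases are the routine Tait–Martin-Löf diagram chases.
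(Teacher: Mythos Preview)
Your proposal is correct and follows essentially the same Tait--Martin-L\"of parallel reduction strategy as the paper. The only minor difference is that the paper uses Takahashi's complete development function $\comp{-}$ to establish the diamond property (proving $\tm \pT \tm'$ implies $\tm' \pT \comp{\tm}$, so that $\comp{\tm}$ serves as the common reduct), whereas you propose to prove the diamond directly by case analysis on the two parallel steps; both are standard variants of the same technique, and your attention to the set-term injectivity side condition is a point the paper leaves implicit.
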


\subsection{Correspondence between Curry and Church-style systems}
\label{subsec:CalcLame_CalcLamT_correspondence}

A typable term in the Church-style system $\CalcLamT$
does not necessarily represent
a derivation tree in the Curry-style system $\CalcLame$.
To relate these systems, we consider the following notion:

\begin{defn}[Refinement]
We define a binary relation $\judgRefine{\tm}{\utm}$
between typable terms $\tm \in \CalcLamT$
and untyped $\lambda$-terms $\utm \in \CalcLam$,
as well as a binary relation $\judgRefine{\tms}{\utm}$
between typable set-terms $\tms \subset \CalcLamT$
and untyped $\lambda$-terms $\utm \in \CalcLam$,
according to the inductive rules below.
In that case, we say that $\tm$ is a \defnn{refinement}
of $\utm$.
\[
  \infer[]{
    \judgRefine{\var^\typ}{\var}
  }{
  }
  \qquad
  \infer[]{
    \judgRefine{\lam{\var^\typ}{\tm}}{\lam{\var}{\utm}}
  }{
    \judgRefine{\tm}{\utm}
  }
  \qquad
  \infer[]{
    \judgRefine{\app{\tm}{\tmstwo}}{\app{\utm}{\utmtwo}}
  }{
    \judgRefine{\tm}{\utm}
    &
    \judgRefine{\tmstwo}{\utmtwo}
  }
  \qquad
  \infer[]{
    \judgRefine{\set{\tm_1,\hdots,\tm_n}}{\utm}
  }{
    (\judgRefine{\tm_i}{\utm})_{\itoN}
    &
    n > 0
  }
\]
\end{defn}

\begin{obs} \label{obs:=}\quad
\begin{enumerate}
\item \label{uno} If $\judgRefine{\tm}{\utm_1}$ and $\judgRefine{\tm}{\utm_2}$ then $\utm_1 = \utm_2$.
\item \label{due} If $\judgRefine{\tm}{\utm}$ and $\judgRefine{\tmstwo}{\utmtwo}$ then $\judgRefine{\tm\sub{\var^\typs}{\tmstwo}}{\utm\sub{\var}{\utmtwo}}$.
\end{enumerate}
\end{obs}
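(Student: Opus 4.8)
The plan is to prove both items by structural induction, in each case generalizing the statement so that it ranges over set-terms as well as terms; this way the mutual recursion between the term- and set-term clauses of $\judgRefine{\cdot}{\cdot}$ is available as induction hypothesis. In both items the induction is on the typed (set-)term, which is the natural choice because $\judgRefine{\cdot}{\cdot}$ is syntax-directed on its \emph{left} argument: each of the four shapes $\var^\typ$, $\lam{\var^\typs}{\tm}$, $\app{\tm}{\tmstwo}$, and $\set{\tm_1,\hdots,\tm_n}$ is the conclusion of exactly one refinement rule.

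For item~\ref{uno}, the syntax-directedness does almost all the work. From $\judgRefine{\tm}{\utm_1}$ and $\judgRefine{\tm}{\utm_2}$, the top-level shape of $\tm$ forces the same rule in both derivations, which pins down the top-level shape of $\utm_1$ and $\utm_2$ and relates their immediate components to those of $\tm$. For instance, when $\tm = \app{\tm'}{\tmstwo}$ we must have $\utm_1 = \app{\utm'}{\utmtwo}$ and $\utm_2 = \app{\utmthree}{\utmfour}$ with $\judgRefine{\tm'}{\utm'}$, $\judgRefine{\tm'}{\utmthree}$, $\judgRefine{\tmstwo}{\utmtwo}$, $\judgRefine{\tmstwo}{\utmfour}$; applying the \ih{} to $\tm'$ and to $\tmstwo$ gives $\utm' = \utmthree$ and $\utmtwo = \utmfour$, whence $\utm_1 = \utm_2$. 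The abstraction case is analogous, the variable case is immediate since $\var^\typ$ can only refine $\var$, and in the set-term case $\set{\tm_1,\hdots,\tm_n}$ we have $n > 0$, so we may apply the \ih{} to $\tm_1$, which refines both untyped terms.

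For item~\ref{due}, I again proceed by induction on $\tm$ (generalized to set-terms), assuming the typing side-conditions that make $\tm\sub{\var^\typs}{\tmstwo}$ well-defined. The decisive case is the substituted variable $\tm = \var^\typ$: then $\utm = \var$, so $\utm\sub{\var}{\utmtwo} = \utmtwo$, while on the typed side $\typ \in \typs$ and $\var^\typ\sub{\var^\typs}{\tmstwo} = \tmtwo_i$, the unique element of $\tmstwo$ of type $\typ$ (Observation~\ref{obs:bi}). To conclude $\judgRefine{\tmtwo_i}{\utmtwo}$ I use that $\judgRefine{\tmstwo}{\utmtwo}$ must be derived by the set rule, whose premises state precisely that \emph{every} element of $\tmstwo$, and in particular $\tmtwo_i$, refines $\utmtwo$. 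The other variable case ($\tm = \vartwo^\typtwo$ with $\vartwo \neq \var$) is trivial, both substitutions acting as the identity. The abstraction, application, and set-term cases are routine congruences: the typed and the untyped substitutions both commute with the corresponding constructor, so the \ih{} applied to the immediate subterms and reassembled with the matching refinement rule yields the claim; in the abstraction case $\lam{\vartwo^\typstwo}{\tm'}$ the bound variable is assumed, by $\alpha$-renaming, to differ from $\var$ and not to occur free in $\tmstwo$ nor $\utmtwo$, so no capture occurs.

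I expect the only real subtlety to be this variable case of item~\ref{due}: it is where the \emph{type-indexed} selection performed by the typed substitution (picking $\tmtwo_i$ according to the annotation $\typ$) must be reconciled with the single, type-agnostic replacement of the untyped substitution (sending $\var$ to the whole $\utmtwo$). Observation~\ref{obs:bi}, which guarantees that $\tmtwo_i$ is well-defined, together with the shape of the set rule, which forces every element of $\tmstwo$ to refine the same $\utmtwo$, is exactly what bridges this gap. All remaining cases are mechanical.
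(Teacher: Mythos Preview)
The paper does not supply a proof for this Observation; it is simply stated. Your structural induction on the typed (set-)term, generalized so that the mutual recursion between terms and set-terms is available, is exactly the natural argument and is correct in all cases, including the key variable case of item~\ref{due} where you correctly invoke the set rule's premises to obtain $\judgRefine{\tmtwo_i}{\utmtwo}$.
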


\begin{defn}[Uniformity and type erasure]
A typable term $\tm \in \CalcLamT$ is \defnn{uniform} if there exists $\utm \in \CalcLam$
such that $\judgRefine{\tm}{\utm}$.
Similarly, a typable set-term $\tms \subseteq \CalcLamT$
is uniform if there exists $\utm \in \CalcLam$
such that $\judgRefine{\tms}{\utm}$.

If $\tm \in \CalcLamT$ is a uniform term,
its \defnn{type erasure} $\typerase{\tm} \in \CalcLam$ is defined
as the unique $\lambda$-term such that $\judgRefine{\tm}{\typerase{\tm}}$.
Similarly,
if $\tms \subset \CalcLamT$ is a uniform set-term,
$\typerase{\tms} \in \CalcLam$ is defined as the unique $\lambda$-term
such that $\judgRefine{\tms}{\typerase{\tms}}$.
This can be written as a recursive definition:
\[
  \typerase{(\var^\typ)}
  \eqdef
  \var
  \qquad
  \typerase{(\lam{\var^\typs}{\tm})}
  \eqdef
  \lam{\var}{\typerase{\tm}}
  \qquad
  \typerase{(\app{\tm}{\tmstwo})}
  \eqdef
  \app{\typerase{\tm}}{\typerase{\tmstwo}}
  \qquad
  \typerase{\set{\tmtwo_1,\hdots,\tmtwo_n}}
  \eqdef
  \typerase{\tmtwo_1}
\]
If $\tm \in \CalcLamT$ is uniform,
for each derivation $\derivs{\deriv}{\judgT{\tctx}{\tm}{\typ}}$,
we construct a derivation
$\derivs{\typerase{\deriv}}{\judge{\tctx}{\typerase{\tm}}{\typ}}$
by erasing all the type annotations in terms.
Similarly, each derivation $\derivs{\deriv}{\judgTs{\tctx}{\tms}{\typs}}$
is mapped to a derivation $\derivs{\typerase{\deriv}}{\judges{\tctx}{\typerase{\tms}}{\typs}}$.
\end{defn}

\begin{defn}[Type decoration]
For each derivation $\derivs{\deriv}{\judge{\tctx}{\utm}{\typ}}$,
we construct a term $\tm \in \CalcLamT$
and a derivation
$\derivs{\typdec{\deriv}}{\judgT{\tctx}{\tm}{\typ}}$ inductively as below.
Similarly,
for each derivation $\derivs{\deriv}{\judges{\tctx}{\utm}{\typs}}$
we construct a set-term $\tms \in \CalcLamT$
and a derivation $\derivs{\typdec{\deriv}}{\judgTs{\tctx}{\tms}{\typs}}$.
\[
  \begin{array}[t]{llll}
    \infer[\ruleeAx]{
      \judge{\tctx,\var:\typs}{\var}{\typtwo}
    }{
      \typtwo \in \typs
    }
  & \rightsquigarrow &
    \infer[\ruleTAx]{
      \judgT{\tctx,\var:\typs}{\var^\typtwo}{\typtwo}
    }{\typtwo \in \typs
    }
  \\
  \\
    \infer[\ruleeArri]{
      \judge{\tctx}{\lam{\var}{\utm}}{\typs\arrow\typtwo}
    }{
      \judge{\tctx,\var:\typs}{\utm}{\typtwo}
    }
  & \rightsquigarrow &
    \infer[\ruleTArri]{
      \judgT{\tctx}{\lam{\var^\typs}{\tm}}{\typs\arrow\typtwo}
    }{
      \judgT{\tctx,\var:\typs}{\tm}{\typtwo}
    }
  \\
  \\
    \infer[\ruleeArre]{
      \judge{\tctx}{\app{\utm}{\utmtwo}}{\typtwo}
    }{
      \judge{\tctx}{\utm}{\typs\arrow\typtwo}
      \quad
      \judges{\tctx}{\utmtwo}{\typs}
    }
  &
    \rightsquigarrow
  &
    \infer[\ruleTArre]{
      \judgT{\tctx}{\app{\tm}{\tmstwo}}{\typtwo}
    }{
      \judgT{\tctx}{\tm}{\typs\arrow\typtwo}
      \quad
      \judgTs{\tctx}{\tmstwo}{\typs}
    }
  \\
  \\
    \infer[\ruleeMulti]{
      \judges{\tctx}{\utmtwo}{\set{\typ_1,\hdots,\typ_n}}
    }{
      (\judge{\tctx}{\utmtwo}{\typ_i})_{\itoN}
    }
  &
    \rightsquigarrow
  &
    \infer[\ruleTMulti]{
      \judgTs{\tctx}{\set{\tmtwo_1,\hdots,\tmtwo_n}}{\set{\typ_1,\hdots,\typ_n}}
    }{
      (\judgT{\tctx}{\tmtwo_i}{\typ_i})_{\itoN}
    }
  \end{array}
\]
\end{defn}

\begin{thm}[Correspondence]
\label{lami_lamT_correspondence}
\quad
\begin{enumerate}
\item \label{corr-one}
  If $\derivs{\deriv}{\judge{\tctx}{\utm}{\typ}}$, then
  there exists a uniform $\tm \in \CalcLamT$ such that
  $\derivs{\typdec{\deriv}}{\judgT{\tctx}{\tm}{\typ}}$
  and $\judgRefine{\tm}{\utm}$.
  Moreover, $\typerase{(\typdec{\deriv})} = \deriv$.
\item \label{corr-two}
  If $\derivs{\deriv}{\judgT{\tctx}{\tm}{\typ}}$
  and $\tm$ is uniform
  then $\derivs{\typerase{\deriv}}{\judge{\tctx}{\typerase{\tm}}{\typ}}$.
  Moreover, $\typdec{(\typerase{\deriv})} = \deriv$.
\end{enumerate}
\end{thm}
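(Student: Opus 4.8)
The plan is to prove both items by structural induction on the derivation $\deriv$, treating terms and set-terms simultaneously, since the three constructions involved — the type-decoration map $\typdec{(\cdot)}$, the type-erasure map $\typerase{(\cdot)}$, and the refinement relation $\judgRefine{\cdot}{\cdot}$ — are all defined by mutual recursion on terms and set-terms. Taken together, the two items state that $\typdec{(\cdot)}$ and $\typerase{(\cdot)}$ are mutually inverse bijections between derivations in $\CalcLame$ and derivations of \emph{uniform} terms in $\CalcLamT$, so the four refinement/equational claims are checked rule by rule.

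For item~\ref{corr-one}, I would induct on $\derivs{\deriv}{\judge{\tctx}{\utm}{\typ}}$, reading off the decorated term $\tm$ directly from the corresponding type-decoration rule and then verifying the three claims. Validity of $\derivs{\typdec{\deriv}}{\judgT{\tctx}{\tm}{\typ}}$ is immediate from the definition of $\typdec{(\cdot)}$, once one observes that the decorated set-term in the $\ruleeMulti$ case meets the injectivity side condition of $\ruleTMulti$: the premises carry pairwise distinct types $\typ_i$ (by the side condition of $\ruleeMulti$), so by Lemma~\ref{type_uniqueness} the decorated terms are pairwise distinct and $\set{\tmtwo_i}_{\itoN}$ is a genuine set-term. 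Uniformity, \ie $\judgRefine{\tm}{\utm}$, follows by applying the matching refinement rule to the inductive hypotheses; in the $\ruleeMulti$ case all premises have the \emph{same} subject $\utmtwo$, so each decorated $\tmtwo_i$ refines $\utmtwo$ and the set-term refinement rule applies (using $n > 0$). Finally $\typerase{(\typdec{\deriv})} = \deriv$ is obtained by pairing each decoration rule with the reverse erasure rule and invoking the inductive hypothesis.

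For item~\ref{corr-two}, I would again induct on $\derivs{\deriv}{\judgT{\tctx}{\tm}{\typ}}$, now carrying uniformity of $\tm$ as an invariant. A small preliminary fact is that subterms and set-elements of a uniform term are uniform, which follows by inversion of the refinement rules (together with Observation~\ref{obs:=} for well-definedness of $\typerase{(\cdot)}$); this licenses applying the inductive hypothesis to the premises. The delicate point — and the one place where uniformity is genuinely used — is the $\ruleTMulti$ case: uniformity of $\tmstwo = \set{\tmtwo_i}_{\itoN}$ means every $\tmtwo_i$ refines one common $\utmtwo$, hence $\typerase{\tmtwo_i} = \utmtwo = \typerase{\tmstwo}$ for all $i$, so the erased premises all have the \emph{same} subject $\utmtwo$; this is precisely what makes $\typerase{\deriv}$ a legal application of $\ruleeMulti$. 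The remaining cases are direct, and the identity $\typdec{(\typerase{\deriv})} = \deriv$ follows by the inductive hypothesis, using Lemma~\ref{type_uniqueness} to recover the original set-term from the decorated premises.

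I expect the main obstacle to lie entirely in the set rule $\ruleeMulti$/$\ruleTMulti$ and the bookkeeping it forces: in item~\ref{corr-one} one must show that distinctness of types yields distinctness of decorated terms, so that no spurious collapse of the set occurs; and in item~\ref{corr-two} one must show that uniformity collapses all erased subjects of a set-term to a single $\lambda$-term, so that $\ruleeMulti$ is applicable. Both hinge on Lemma~\ref{type_uniqueness} and on the precise form of the set-term refinement rule, whereas the axiom, abstraction, and application rules are essentially syntactic and transfer through the two maps without difficulty.
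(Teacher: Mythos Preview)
Your proposal is correct and follows essentially the same approach as the paper: both proceed by structural induction on the derivation, generalize simultaneously to set-term judgments, and isolate the $\ruleeMulti$/$\ruleTMulti$ case as the only non-trivial one, using Observation~\ref{obs:=}.\ref{uno} in item~\ref{corr-two} to collapse all erased subjects to a single $\utm$. You are in fact slightly more careful than the paper in item~\ref{corr-one}, where you explicitly invoke Lemma~\ref{type_uniqueness} to check that the decorated terms $\tmtwo_i$ are pairwise distinct (so that $\set{\tmtwo_i}_{\itoN}$ satisfies the no-repetition invariant on set-terms); the paper's proof leaves this implicit.
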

\begin{proof}
\quad
\begin{enumerate}
\item We generalize the statement to set-terms, \ie if $\derivs{\deriv}{\judges{\tctx}{\utm}{\typs}}$ then there exists a uniform $\tms \in \CalcLamT $ such that $\derivs{\typdec{\deriv}}{\judgTs{\tctx}{\tms}{\typs}} $ and $\judgRefine{\tms}{\utm}$.
We proceed by induction on the derivation $\deriv$. The only not immediate case is when $\derivs{\deriv}{\judges{\tctx}{\utm}{\typs}} $. Then this judgement has been obtained by rule $\ruleeMulti$ with premises $(\judge{\tctx}{\utm}{\typ_i})_{\itoN}$; by \ih there are derivations proving $\judges{\tctx}{\tm_i}{\typ_i}$, where $\judgRefine{\tm_i}{\utm}$. Then, by definition, $\judgRefine{\set{\tm_1,\hdots,\tm_n}}{\utm}$, and the proof follows by rule $\ruleTMulti$.
The other cases come directly from the definition of type decoration. 
\item We generalize the statement to set-terms, \ie if
  $\derivs{\deriv}{\judgTs{\tctx}{\tms}{\typs}}$ and $\tms$ is uniform then
  $\derivs{\typerase{\deriv}}{\judges{\tctx}{\typerase{\tms}}{\typs}}$.
  We proceed by induction on the derivation $\deriv$. The only not immediate case is when $\derivs{\deriv}{\judgTs{\tctx}{\tms}{\typs}} $. Let $\tms=\set{\tm_1,\hdots,\tm_n}$, then $\judgRefine{\tm}{\utm}$ implies, by definition of uniformity, $\judgRefine{\tm_i}{\utm}$. The judgement has been obtained by rule $\ruleTMulti$, with premises ${\judgT{\tctx}{\tm_i}{\typ_i}} $, so by \ih, $\judge{\tctx}{\typerase{\tm_i}}{\typ_i}$ where 
$\judgRefine{\tm_i}{\typerase{\tm_i}}$. By Observation~\ref{obs:=}.\ref{uno}, $\judgRefine{\tm_i}{\typerase{\tm_i}}$ and $\judgRefine{\tm_i}{\utm}$ imply $\typerase{\tm_i}=\utm$, so $\judge{\tctx}{\tm}{\typ_i}$ and the result follows by applying rule $\ruleeMulti$. 
The other cases come directly from the definition of type erasure. \qedhere
\end{enumerate}
\end{proof}

\begin{cor}
Given a term $\utm \in \CalcLam$, the following are equivalent:
\begin{enumerate}
\item There exist $\tctx,\typ$ such that $\judge{\tctx}{\utm}{\typ}$ holds.
\item There exists a term $\tm \in \CalcLamT$
      such that $\judgRefine{\tm}{\utm}$.
\end{enumerate}
\end{cor}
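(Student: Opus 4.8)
The plan is to read off both implications directly from the Correspondence Theorem (Theorem~\ref{lami_lamT_correspondence}), treating each direction of the equivalence separately. The corollary is essentially a repackaging of that theorem once one observes how typability statements and derivations relate under type decoration and type erasure, so I expect almost all of the work to have already been done.

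For the direction from (1) to (2), I would assume $\judge{\tctx}{\utm}{\typ}$ holds and fix a witnessing derivation $\derivs{\deriv}{\judge{\tctx}{\utm}{\typ}}$. Applying part~\ref{corr-one} of Theorem~\ref{lami_lamT_correspondence} to $\deriv$ produces a uniform term $\tm \in \CalcLamT$ together with a derivation $\derivs{\typdec{\deriv}}{\judgT{\tctx}{\tm}{\typ}}$ and, crucially, the refinement $\judgRefine{\tm}{\utm}$. In particular $\tm$ is a typable term of $\CalcLamT$ refining $\utm$, which is exactly statement (2). This half is immediate.

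For the direction from (2) to (1), I would start from a term $\tm \in \CalcLamT$ with $\judgRefine{\tm}{\utm}$. The very existence of $\utm$ makes $\tm$ uniform, so its type erasure $\typerase{\tm}$ is defined and satisfies $\judgRefine{\tm}{\typerase{\tm}}$; by Observation~\ref{obs:=}.\ref{uno} the refinement target is unique, hence $\typerase{\tm} = \utm$. Since $\tm$ is typable, I fix a derivation $\derivs{\deriv}{\judgT{\tctx}{\tm}{\typ}}$ for suitable $\tctx,\typ$ and apply part~\ref{corr-two} of Theorem~\ref{lami_lamT_correspondence}, obtaining $\derivs{\typerase{\deriv}}{\judge{\tctx}{\typerase{\tm}}{\typ}}$. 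Rewriting $\typerase{\tm}$ as $\utm$ gives $\judge{\tctx}{\utm}{\typ}$, which is statement (1).

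The only point requiring genuine care is the identification $\typerase{\tm} = \utm$ in the second direction: part~\ref{corr-two} of the theorem erases $\tm$ to its canonical associate $\typerase{\tm}$, and one must confirm this coincides with the originally given $\utm$ rather than some other $\lambda$-term. This is precisely what Observation~\ref{obs:=}.\ref{uno} guarantees, so there is no real obstacle — the apparent gap is closed by the uniqueness of refinement targets. Everything else is a direct appeal to the two halves of the Correspondence Theorem.
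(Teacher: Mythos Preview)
Your proposal is correct and matches the paper's intent: the corollary is stated without proof immediately after Theorem~\ref{lami_lamT_correspondence}, so the paper treats it as an immediate consequence of that theorem, exactly as you unpack it. Your care in using Observation~\ref{obs:=}.\ref{uno} to identify $\typerase{\tm}$ with $\utm$ in the (2)$\Rightarrow$(1) direction is the one nontrivial step, and you handle it correctly.
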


Furthermore, $\tobeta$ and $\toT$ simulate each other:

\begin{thm}[Simulation]
\label{lami_lamT_simulation}
\quad
\begin{enumerate}
\item
  If $\utm \tobeta \utmtwo$
  and $\judgRefine{\tm}{\utm}$,
  there exists $\tmtwo \in \CalcLamT$
  such that $\tm \toT^+ \tmtwo$
  and $\judgRefine{\tmtwo}{\utmtwo}$.
\item
  If $\tm \toT \tmtwo$
  and $\judgRefine{\tm}{\utm}$,
  there exist $\utmtwo \in \CalcLam$ and $\tmtwo' \in \CalcLamT$
  such that
  $\utm \tobeta \utmtwo$
  and $\tmtwo \toT^* \tmtwo'$
  and $\judgRefine{\tmtwo'}{\utmtwo}$.
\end{enumerate}
Graphically:
\[
  \xymatrix@R=0cm@C=1.5cm{
    \utm \ar_(.8){\beta}[r] & \utmtwo \\
    \rotatebox{90}{$\sqsubset$} & \rotatebox{90}{$\sqsubset$} \\
    \tm \ar@{.>}^(.8){+}_(.8){\LamTsymbol}[r] & \tmtwo
  }
  \qquad
  \xymatrix@R=0cm@C=1.5cm{
    \tm \ar_(.8){\LamTsymbol}[r] & \tmtwo \ar@{.>}_(.8){\LamTsymbol}^(.8){*}[r] & \tmtwo' \\
    \rotatebox{90}{$\sqsupset$} & & \rotatebox{90}{$\sqsupset$} \\
    \utm \ar@{.>}_(.875){\beta}[rr] & & \utmtwo \\
  }
\]
\end{thm}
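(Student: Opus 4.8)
The plan is to prove the two statements in order, obtaining the second with the help of the first. The essential phenomenon to manage is that a single $\beta$-step in $\utm$ may correspond to several $\symT$-steps in $\tm$: since the set refinement rule lets a subterm of $\utm$ be refined by a whole set-term $\set{\tm_1,\hdots,\tm_n}$ whose elements each refine that subterm, one redex occurrence in $\utm$ can have several parallel ``copies'' in $\tm$, one per branch introduced by a set-term. Contracting all of these copies amounts to a complete development of the residuals of a single redex, and this is exactly what accounts for the $\toT^+$ in part 1 and for the trailing $\toT^*$ in part 2.

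For part 1 I would proceed by induction on the derivation of $\judgRefine{\tm}{\utm}$, generalizing simultaneously to set-terms (if $\utm\tobeta\utmtwo$ and $\judgRefine{\tmstwo}{\utm}$ then $\tmstwo\toT^+\tmstwo'$ with $\judgRefine{\tmstwo'}{\utmtwo}$), with a case analysis on the position of the contracted redex. In the congruence cases (abstraction and each side of an application) the claim follows by applying the induction hypothesis to the corresponding refining subterm and closing under $\symT$-congruence. The base case is when the redex is at the root: here $\tm=(\lam{\var^\typs}{\tm_0})\,\tmstwo_0$ refines $(\lam{\var}{\utm_0})\,\utmtwo_0$, so $\tm_0\sqsubset\utm_0$ and $\tmstwo_0\sqsubset\utmtwo_0$; the single step $\tm\toT\tm_0\sub{\var^\typs}{\tmstwo_0}$ suffices, and $\judgRefine{\tm_0\sub{\var^\typs}{\tmstwo_0}}{\utm_0\sub{\var}{\utmtwo_0}}$ is precisely Observation~\ref{obs:=}.\ref{due}. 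The genuinely new case is the set refinement $\set{\tm_1,\hdots,\tm_n}\sqsubset\utm$ with each $\tm_i\sqsubset\utm$: the induction hypothesis applied to every $\tm_i$ gives $\tm_i\toT^+\tm_i'$ with $\tm_i'\sqsubset\utmtwo$, and reducing all of them yields the required $\toT^+$ derivation since $n>0$. Here I must check that $\set{\tm_1',\hdots,\tm_n'}$ is still a legal, repetition-free set-term: the $\tm_i$ carry pairwise distinct types by rule $\ruleTMulti$, and by subject reduction (Proposition~\ref{prop:sub-red}) together with type uniqueness (Lemma~\ref{type_uniqueness}) the $\tm_i'$ retain those distinct types, so they remain pairwise distinct.

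For part 2 I would proceed by induction on the structure of $\tm$ with a case analysis on the position of the contracted $\symT$-redex. When the redex is at the root, $\tm=(\lam{\var^\typs}{\tm_0})\,\tmstwo_0\sqsubset(\lam{\var}{\utm_0})\,\utmtwo_0=:\utm$, and setting $\utmtwo\eqdef\utm_0\sub{\var}{\utmtwo_0}$ gives $\utm\tobeta\utmtwo$; the matching step $\tm\toT\tm_0\sub{\var^\typs}{\tmstwo_0}$ already lands on a refinement of $\utmtwo$ by Observation~\ref{obs:=}.\ref{due}, with no trailing steps. The abstraction and function-side cases follow directly from the induction hypothesis. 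The crucial case is when the step occurs inside the argument set-term, $\tm=\app{\tmfour}{\tmstwo_0}$ with $\tmfour\sqsubset\utm_1$ and $\tmstwo_0=\set{\tm_1,\hdots,\tm_n}\sqsubset\utmthree$, so $\utm=\app{\utm_1}{\utmthree}$, and a single element reduces, $\tm_{i_0}\toT\tm_{i_0}'$. The reduct need not be uniform, since only one copy has moved. To repair this I apply the induction hypothesis to $\tm_{i_0}\toT\tm_{i_0}'$, obtaining $\utmthree\tobeta\utmthree'$ and $\tm_{i_0}'\toT^*w_{i_0}\sqsubset\utmthree'$; then I invoke \emph{part 1} on each sibling $\tm_j\sqsubset\utmthree$ (for $j\neq i_0$) along the step $\utmthree\tobeta\utmthree'$, getting $\tm_j\toT^+w_j\sqsubset\utmthree'$. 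Collecting these reductions drives $\tmtwo$ to $\app{\tmfour}{\set{w_1,\hdots,w_n}}\sqsubset\app{\utm_1}{\utmthree'}$, while $\utm\tobeta\app{\utm_1}{\utmthree'}$; the repetition-free invariant of the final set-term is preserved exactly as in part 1.

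I expect the main obstacle to be the bookkeeping around this multiplicity of copies: making precise that the copies of one redex occupy parallel (pairwise incomparable) positions, so that contracting one leaves the others intact as redexes; that a complete development of them preserves the refinement relation; and that reduction never collapses a set-term by identifying two formerly distinct elements. This last point is what forces the joint use of subject reduction and type uniqueness, and the interplay between the two parts — part 2's argument case leaning on part 1 to ``catch up'' the sibling copies — is the delicate structural aspect of the argument.
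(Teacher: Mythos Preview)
Your proposal is correct and follows essentially the same approach as the paper: part~1 by structural induction using Observation~\ref{obs:=}.\ref{due} for the root redex and the \ih\ on each component of a set-term, and part~2 by induction with the key set-term case handled by invoking part~1 on the sibling components to catch them up. Your explicit check that the reduced set-terms remain repetition-free (via subject reduction and type uniqueness) is a detail the paper leaves implicit.
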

\begin{proof}\quad
  \begin{enumerate}
  \item 
    We proceed by induction on $\utm$. The case $\utm=\var$ cannot happen, in case
    $\utm=\lam{x}{\utmtwo}$ the proof follows by \ih.
    Let $\utm =\app{\utmthree}{\utmfour}$, so $\tm = \tmthree \tmsfour $, where
    $\judgRefine{ \tmthree}{\utmthree}$ and $\judgRefine{ \tmsfour}{\utmfour}$.
    The case $\utmthree \tobeta \utmthree'$ is easy, by \ih. Let $\utmfour
    \tobeta \utmfour'$, and let $\tmsfour =\set{\tmfour_1,...,\tmfour_n}$,
    where, by definition, $\judgRefine{\tmfour_i }{\utmfour}$, for all $1\leq i
    \leq n$. By \ih, there are $\tmfour'_i$ such that $\tmfour_i \toT^+
    \tmfour'_i $, such that $\judgRefine{\tmfour'_i}{\utmfour'}$. So
    $\judgRefine{\set{\tmfour'_1,...,\tmfour'_n}} {\utmfour'}$ and $\tmtwo =
    \tmthree\set{\tmfour'_1,...,\tmfour'_n}$. Let
    $\utmthree=\lam{\var}{\utmthree'}$, so $\tm=(\lam{\var^\typ}{\tm'})\tmsfour
    $, where $\judgRefine{\tm' }{\utmthree'}$, and let $\utm \tobeta
    \utmthree'\sub{\var}{\utmfour}$. By Observation~\ref{obs:=}.\ref{due},
    $\judgRefine{\tm'\sub{\var^\typ}{\tmsfour}}{\utmthree'\sub{\var}{\utmfour}}$,
    and since $\tm \toT^+ \tm'\sub{\var^\typ}{\tmsfour}$, $\tmtwo=
    \tm'\sub{\var^\typ}{\tmsfour}$.

\item
  We generalize the statement to set-terms, \ie if $\tms \toT
  \tmstwo$ and $\judgRefine{\tms}{\utm}$, there exist $\utmtwo$ and $\tmstwo'$
  such that $\utm \tobeta \utmtwo$ and $\tmstwo \toTs \tmstwo'$ and
  $\judgRefine{\tmstwo'}{\utmtwo}$. The proof is by induction on terms. Let
  $\tms=\set{\tm_1,\hdots,\tm_n,\tmthree}$, and
  $\tmstwo=\set{\tm_1,\hdots,\tm_n,\tmthree'}$, where $\tmthree \toT \tmthree'$.
  By \ih, there are $\utmtwo$ and $\tmthree''$ such that $\utm \tobeta \utmtwo$,
  $\tmthree' \toT^* \tmthree''$ and $\judgRefine{\tmthree''}{\utmtwo}$. Note
  that $\judgRefine{\tm_i }{\utm}$ for all $\itoN$. Now let $\itoN$. By point~1,
  $\utm \tobeta \utmtwo$ and $\judgRefine{\tm_i }{\utm}$ imply there is $\tm_i'$
  such that $\tm_i \toT^*\tm_i'$ and $\judgRefine{\tm_i'}{\utmtwo}$; then it
  suffices to take $\tmstwo' =\set{\tm'_1,...,\tm'_n,\tmthree''}$. The other
  cases follow by \ih, using Observation~\ref{obs:=}.\ref{due} in case $\tm=
  \app{(\lam{\var^\typ}{\tm'})}{\tmsthree}$, and $\tmtwo = \tm'
  \sub{\var^\typ}{\tmsthree}$. \qedhere
\end{enumerate}
\end{proof}

Some comments are in order.
If $\utmtwo$ is a redex subterm of $\utm$, then
a derivation tree for $\utm$ in $\CalcLame$ may contain more than
one subderivation with subject $\utmtwo$.
Contracting $\utmtwo$ with usual $\beta$-reduction corresponds to
reducing in parallel {\it all} the occurrences of $\utmtwo$ in all the subderivations.
In $\CalcLamT$, this corresponds instead to performing the reductions one at a time.

\begin{exmp}
Let $\utm = \app{(\lam{\var}{\app{\var}{\var}})}{(I\,I)}$,
$\typ = \typtwo\arrow\typtwo$, and $I = \lam{\var}{\var}$.
Consider the derivation $\deriv$:
\[
  \infer[\ruleeArre]{
    \judge{}{\utm}{\typ}
  }{
    \infer[\ruleeArre]{
      \judge{}{\lam{\var}{\app{\var}{\var}}}{\set{\typ\arrow\typ,\typ}\arrow\typ}
    }{
      \infer[]{
        \judge{\var:\set{\typ\arrow\typ,\typ}}{\app{\var}{\var}}{\typ}
      }{
        \vdots
      }
    }
    &
    \infer[\ruleeMulti]{
      \judges{}{I\,I}{\set{\typ\arrow\typ,\typ}}
    }{
      \infer[]{
        \judge{}{I\,I}{\typ\arrow\typ}
      }{
        \vdots
      }
      &
      \infer[]{
        \judge{}{I\,I}{\typ}
      }{
        \vdots
      }
    }{
    }
  }
\]
Then we have that
  $\derivs{\typdec{\deriv}}{\judgT{\ }{\tm}{\typ}}$,
where
  $\tm =
    \app{
      (\lam{\var^{\set{\typ\arrow\typ,\typ}}}{
         \app{\var}{\var}
      })
    }{
      \set{
        (\app{I}{I})^{\typ\arrow\typ},
        (\app{I}{I})^{\typ}
      }
    }$.
Note that:
\begin{enumerate}
\item
  $\tm$ is a uniform term, and in particular $\judgRefine{\tm}{\utm}$.
\item
  $\tm \toT
   \tm_1 =
   \app{
       (\lam{\var^{\set{\typ\arrow\typ,\typ}}}{
          \app{\var}{\var}
       })
     }{
       \set{
         I^{\typ\arrow\typ},
         (\app{I}{I})^{\typ}
       }
     }$,
  where
  $\tm_1$ is not a uniform term.

\item
  $\tm \toT
   \tm_3 = \app{(\app{I}{I})^{\typ\arrow\typ}}{(\app{I}{I})^{\typ}}$
  where $\judgRefine{\tm_3}{\app{I}{I}(\app{I}{I})}$
  and $\utm \tobeta \app{I}{I}(\app{I}{I})$.
  \item
   $\tm_1 \toTs
   \tm_2 =
   \app{
       (\lam{\var^{\set{\typ\arrow\typ,\typ}}}{
          \app{\var}{\var}
       })
     }{
       \set{
         I^{\typ\arrow\typ},
         I^{\typ}
       }
     }$,
  where
  $\tm_2$ is a uniform term; in particular $\judgRefine{\tm_2}{\app{(\lam{\var}{\app{\var}{\var}})}{I}}$
  and $\utm \tobeta \app{(\lam{\var}{\app{\var}{\var})}{I}}$.
\end{enumerate}
\end{exmp}

\begin{lem}[Head subject expansion]\label{lem:inv-sub}
If $\judgT{\tctx}{\tm\sub{\var^\typs}{\tmstwo} \tmstwo_1...\tmstwo_n}{\typtwo}$ and $\judgTs{\tctx}{\tmstwo}{\typs}$, then $\judgT{\tctx}{(\lam {\var^\typs}{\tm} )\,\tmstwo \, \tmstwo_1...\tmstwo_n}{\typtwo}$.
\end{lem}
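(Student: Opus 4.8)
The plan is to reduce the statement to an \emph{inverse substitution} property and then rebuild the redex using the introduction and elimination rules. Write $\tm' \eqdef \tm\sub{\var^\typs}{\tmstwo}$. Since terms are taken up to $\alpha$-renaming, I may assume $\var$ is fresh, i.e. $\var \notin \dom{\tctx}$ and $\var$ does not occur free in $\tmstwo$. First I would peel off the $n$ trailing arguments: from $\judgT{\tctx}{\tm'\,\tmstwo_1 \cdots \tmstwo_n}{\typtwo}$, repeated inversion on rule $\ruleTArre$ (the rules are syntax-directed, and types are determined by Lemma~\ref{type_uniqueness}) yields non-empty set-types $\typs_1,\hdots,\typs_n$ with $\judgTs{\tctx}{\tmstwo_i}{\typs_i}$ for each $i$, together with $\judgT{\tctx}{\tm'}{\typthree}$ where $\typthree = \typs_1 \arrow \cdots \arrow \typs_n \arrow \typtwo$.

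The heart of the argument is the following inverse substitution lemma, which I would state and prove separately: if $\judgTs{\tctx}{\tmstwo}{\typs}$ with $\var \notin \dom{\tctx}$ and $\judgT{\tctx}{\tm\sub{\var^\typs}{\tmstwo}}{\typthree}$, then $\judgT{\tctx,\var:\typs}{\tm}{\typthree}$ (and analogously for set-terms, so the induction is carried out simultaneously). The proof is by induction on $\tm$. The crucial case is $\tm = \var^\typ$: if $\typ \in \typs$, then $\tm\sub{\var^\typs}{\tmstwo} = \tmtwo_i$ for the unique $\tmtwo_i$ of type $\typ$ supplied by Observation~\ref{obs:bi}, and type uniqueness (Lemma~\ref{type_uniqueness}) forces $\typthree = \typ$, so $\ruleTAx$ reproves $\judgT{\tctx,\var:\typs}{\var^\typ}{\typ}$; the subcase $\typ \notin \typs$ cannot occur, because then $\var^\typ$ would survive free in $\tm'$ while $\var \notin \dom{\tctx}$, contradicting typability. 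The abstraction and application cases invert $\ruleTArri$ and $\ruleTArre$ and apply the induction hypothesis, using weakening (Lemma~\ref{lem:weak}) to extend the context when descending under a binder (after $\alpha$-renaming the bound variable apart from $\var$). The set-term case inverts $\ruleTMulti$: distinctness of the premise types—needed for the injectivity side condition when rebuilding the set-term—is inherited from the derivation of $\tm'$, and cardinality is preserved by substitution exactly as argued in the proof of Lemma~\ref{lem:sub}.

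Applying this lemma to $\judgT{\tctx}{\tm'}{\typthree}$ produces $\judgT{\tctx,\var:\typs}{\tm}{\typthree}$. Rule $\ruleTArri$ then yields $\judgT{\tctx}{\lam{\var^\typs}{\tm}}{\typs\arrow\typthree}$. Since $\tmstwo$ appears as the argument of an application in the goal term, it is a non-empty set-term, hence $\typs \neq \emptyset$; thus $\ruleTArre$ applied with the hypothesis $\judgTs{\tctx}{\tmstwo}{\typs}$ gives $\judgT{\tctx}{(\lam{\var^\typs}{\tm})\,\tmstwo}{\typthree}$. Reapplying $\ruleTArre$ with the recovered judgments $\judgTs{\tctx}{\tmstwo_i}{\typs_i}$ for $i=1,\hdots,n$ strips the arrows of $\typthree$ one by one, yielding $\judgT{\tctx}{(\lam{\var^\typs}{\tm})\,\tmstwo\,\tmstwo_1 \cdots \tmstwo_n}{\typtwo}$, as required.

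The main obstacle is the inverse substitution lemma: one must argue that substitution neither creates nor destroys typability, and in particular that no occurrence of $\var$ can survive at a type outside $\typs$. This is precisely where freshness of $\var$ and type uniqueness do the work, and where the injectivity side condition of $\ruleTMulti$ must be re-established along the induction. Everything else is a routine packaging of inversion and reconstruction.
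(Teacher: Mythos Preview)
Your proposal is correct and follows essentially the same approach as the paper: the paper also proceeds by induction on $n$ to peel off the trailing arguments, and for the base case $n=0$ carries out an induction on $\tm$ that is precisely your inverse substitution lemma (the variable case is handled identically, via $\ruleTAx$ followed by $\ruleTArri$ and $\ruleTArre$). The only difference is organizational: you isolate the inverse substitution property as a standalone lemma and perform all $n$ inversions up front, whereas the paper folds the reconstruction of the head redex directly into the inner induction; your factoring is arguably cleaner and makes the freshness and injectivity side conditions more explicit, but the underlying argument is the same.
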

\begin{proof}
By induction on $n$.
Let $n=0$, so $\judgT{\tctx}{\tm\sub{\var^\typs}{\tmstwo}}{\typtwo}$.
We continue by induction on $\tm$.
The interesting case is that of $\tm=\vartwo^\typtwo$, where $\typtwo \in \typs$.
Then $\judgT{\tctx, \var^\typs}{\var^\typtwo}{\typtwo}$, by rule ($\ruleTAx$), and $\judgT{\tctx}{(\lam {\var^\typs} {\var^\typtwo})\,\tmstwo}{\typtwo}$
by rules ($\ruleTArri$) and ($\ruleTArre$).
The remaining cases are straightforward, by IH and substitution definition.
Let $n>0$. Let $\judgT{\tctx}{\tm\sub{\var^\typs}{\tmstwo} \tmstwo_1...\tmstwo_n \, \tmsthree }{\typtwo}$ and $\judgTs{\tctx}{\tmstwo}{\typs}$. Then, by the rules of the system, $\judgT{\tctx}{\tm\sub{\var^\typs}{\tmstwo} \tmstwo_1...\tmstwo_n \, \tmsthree }{\typtwo}$ comes from
$\judgT{\tctx}{\tm\sub{\var^\typs}{\tmstwo} \tmstwo_1...\tmstwo_n }{\typsthree \arrow\typtwo}$ and $\judgTs{\tctx}{\tmsthree}{\typsthree}$, for some $\typsthree$. By \ih, $\judgT{\tctx}{(\lam {\var^\typ}{\tm} )\,\tmstwo \, \tmstwo_1...\tmstwo_n}{\typsthree \arrow\typtwo}$, and the result follows by rule ($\ruleTArre$).
\end{proof}

\begin{lem}[Strong Normalization typability]
  \label{lem:strong}
  Let $\utm\in \CalcLam$ be strongly normalizing. Then there is $\tm \in \CalcLamT$ such that
  $\judgRefine{\tm}{\utm}$ and $\judgT{\tctx}{\tm}{\typ}$,
  for some $\tctx$ and $\typ$.
 \end{lem}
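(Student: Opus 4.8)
The plan is to prove the statement by well-founded induction on the pair $(\ell(\utm),|\utm|)$, ordered lexicographically, where $\ell(\utm)$ is the length of a longest $\beta$-reduction sequence starting from $\utm$ (finite, since $\utm$ is SN) and $|\utm|$ is the size of $\utm$. At each step I would write $\utm$ in one of its three mutually exclusive head forms and build the refinement and its typing directly, closing the head-redex case with the head subject expansion Lemma~\ref{lem:inv-sub}. The desired $\tm \in \CalcLamT$ with $\judgRefine{\tm}{\utm}$ is exactly what this construction produces; alternatively one could prove Curry-style typability $\judge{\tctx}{\utm}{\typ}$ and read off a uniform refinement through Theorem~\ref{lami_lamT_correspondence}.\ref{corr-one}, whose type decoration always yields a term refining $\utm$.

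The two easy cases are compositional. If $\utm = \lam{\var}{\utm_0}$, then $\utm_0$ is SN with $\ell(\utm_0)=\ell(\utm)$ and $|\utm_0|<|\utm|$, so the induction hypothesis gives $\judgT{\tctx}{\tm_0}{\typtwo}$ with $\judgRefine{\tm_0}{\utm_0}$; I set $\typs \eqdef \tctx(\var)$ if this is nonempty and $\typs \eqdef \set{\basetyp}$ for a fresh base type otherwise, weaken by Lemma~\ref{lem:weak} so that $\var:\typs$ is available, and apply $\ruleTArri$ to obtain $\lam{\var^\typs}{\tm_0} \sqsubset \utm$ of type $\typs\arrow\typtwo$. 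If $\utm = \var\,\utmtwo_1\cdots\utmtwo_k$ with a head variable ($k\ge 0$), each $\utmtwo_i$ is a proper subterm, so $\ell(\utmtwo_i)\le\ell(\utm)$ and $|\utmtwo_i|<|\utm|$ and the induction hypothesis gives $\tm_i \sqsubset \utmtwo_i$ with $\judgT{\tctx_i}{\tm_i}{\typ_i}$. After weakening all derivations to a common context and declaring $\var$ of type $\typ \eqdef \set{\typ_1}\arrow\cdots\arrow\set{\typ_k}\arrow\basetyp$ (fresh base $\basetyp$), I form the singletons $\set{\tm_i}$ by $\ruleTMulti$ and apply $\ruleTArre$ $k$ times, yielding $\var^\typ\,\set{\tm_1}\cdots\set{\tm_k} \sqsubset \utm$ of type $\basetyp$.

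The crucial case is the head redex $\utm = (\lam{\var}{\utmthree})\,\utmfour\,\utmtwo_1\cdots\utmtwo_k$. Its contractum $\utm' = \utmthree\sub{\var}{\utmfour}\,\utmtwo_1\cdots\utmtwo_k$ satisfies $\ell(\utm')<\ell(\utm)$, and $\utmfour$ is a subterm with $|\utmfour|<|\utm|$; both are SN, so the induction hypothesis applies to each. The plan is to present the refinement of $\utm'$ in the factored shape $\tm\sub{\var^\typs}{\tmstwo}\,\tmstwo_1\cdots\tmstwo_k$ demanded by Lemma~\ref{lem:inv-sub}, with $\tm \sqsubset \utmthree$, a set-term $\tmstwo \sqsubset \utmfour$ of some set-type $\typs$, and $\tmstwo_i \sqsubset \utmtwo_i$; then the lemma rebuilds the redex and gives $(\lam{\var^\typs}{\tm})\,\tmstwo\,\tmstwo_1\cdots\tmstwo_k \sqsubset \utm$ of the same type. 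The independent typing of $\utmfour$ serves precisely to supply the premise $\judgTs{\tctx}{\tmstwo}{\typs}$ required by Lemma~\ref{lem:inv-sub}, and in particular, when $\var\notin\fv{\utmthree}$ and $\utmfour$ is erased, it provides the nonempty $\typs = \set{\typeof{\tmtwo}}$ and witness $\tmstwo = \set{\tmtwo}$ that keep the abstraction's arrow well-formed.

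I expect this anti-substitution step to be the main obstacle. Beyond the erasing subcase just mentioned, the difficulty is that two occurrences of $\var$ in $\utmthree$ carrying the same type could, in an arbitrary refinement of $\utm'$, be refined by \emph{different} copies of $\utmfour$, which blocks factoring since a set-term cannot hold two elements of equal type (Remark~\ref{rem:inj}). This is exactly where the Curry-style reading helps: in $\CalcLame$ every subderivation for an occurrence of $\var$ has the \emph{untyped} subject $\utmfour$, so collecting the types used at those occurrences produces a single set-type $\typs$ with one subderivation $\judge{\tctx}{\utmfour}{\typ_i}$ per $\typ_i\in\typs$, idempotence collapsing repetitions; the factored $\CalcLamT$ form is then recovered by type decoration. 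Working directly in $\CalcLamT$, one instead canonicalizes the refinement of $\utm'$, replacing each copy of $\utmfour$ of a given type by one fixed representative of that type, which preserves both typing and the refinement relation by type uniqueness (Lemma~\ref{type_uniqueness}) and makes the term factorable. Once the factored form is in place, Lemma~\ref{lem:inv-sub} closes the case.
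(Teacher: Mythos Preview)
Your proof is correct and follows essentially the same route as the paper: the paper proceeds by induction on the standard inductive characterization of SN (rules $SN1$--$SN3$ for head-variable, abstraction, and head-redex forms), which matches your three cases and is equivalent to your lexicographic induction on $(\ell(\utm),|\utm|)$, closing the head-redex case with the same head subject expansion Lemma~\ref{lem:inv-sub}. You are actually more explicit than the paper about the anti-substitution step in the head-redex case, which the paper simply asserts (``$\tm'$ can be written as $\tm''\sub{\var^\typs}{\tmstwo}$''); your canonicalization argument fills that gap.
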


\begin{proof}
The proof relies on the following inductive definition of the strongly normalizing terms (SN).

\[
\infer[SN1]{\var \utm_1...\utm_n \in SN}{\utm_i \in SN \quad 1 \leq i\leq n} \quad \quad \quad
\infer[SN2]{\lam {\var} {\utm} \in SN}{\utm \in SN}
\]

\[
\infer[SN3]{(\lam {\var}{\utm})\, \utmtwo \,\utm_1...\utm_n \in SN }
{\utm \sub{\var}{\utmtwo} \,\utm_1...\utm_n \in SN \quad \utmtwo \in SN}
\]
Let us consider rule ($SN1$). By induction there are $\tm_i$ such that $\judgRefine{\tm_i}{\utm_i}$ and
$\judgT{\tctx_i}{\tm_i}{\typ_i}$, so $\judgTs{\tctx_i}{\tms_i}{\typs_i}$. So, by Lemma \ref{lem:weak}, and rule ($\ruleTArre$),
$\judgT{\cris{(}\bigcup_{1 \leq i \leq n} \tctx_i\cris{)} \cup \var:\set{\typs_1 \arrow ...\arrow \typs_n \arrow \typtwo }}{\var^{\set{\typs_1 \arrow ...\arrow \typs_n \arrow \typtwo } }\tms_1 \,...\tms_n}{ \typtwo }$.
\noindent
Let us consider rule ($SN3$).
Let $\utm \sub{\var}{\utmtwo} \,\utm_1...\utm_n$ and $\utmtwo \in SN$.
By \ih, $\exists \tm, \tmtwo, \tctx, \tctx', \typ, \typtwo$ such that
$\judgRefine{\tm}{\utm \sub{\var}{\utmtwo} \,\utm_1...\utm_n}$, $\judgRefine{\tmtwo}{\utmtwo}$ and
$\judgT{\tctx}{\tm}{\typtwo}$,
$\judgT{\tctx'}{\tmtwo}{\typ}$. Note that $\judgT{\tctx'}{\tmtwo}{\typ}$ implies $\judgTs{\tctx'}{\tmstwo}{\typs}$,
where $\tmstwo=\set{\tmtwo}$ and $\typs = \set{\typ}$.
$\judgT{\tctx}{\tm}{\typtwo}$ and $\judgRefine{\tm}{\utm \sub{\var}{\utmtwo} \,\utm_1...\utm_n}$ imply $\tm= \tm'  \,\tm_1...\tm_n$, where $\judgRefine{\tm_i}{\utm_i}$ and $\tm'$ can be written as
$\tm'' \sub{\var^\typs}{\tmstwo}$.
By Lemma \ref{lem:weak}, $\judgT{\tctx \cup \tctx'}{\tm'' \sub{\var^\typs}{\tmstwo} \,\tm_1...\tm_n}{\typtwo}$
and $\judgTs{\tctx \cup \tctx'}{\tmstwo}{\typs}$, and the result follows by Lemma \ref{lem:inv-sub}.
The case of rule ($SN2$) follows directly by \ih.
\end{proof}

\section{Strong normalization via a decreasing measure}
\label{lambdaTm}
In this section, we prove strong normalization of the $\CalcLamT$ calculus
by providing an explicit decreasing measure,
adapting the ideas behind the $\meassym$ measure of~\cite{BarenbaumS23}
to the setting of idempotent intersection type systems.
The $\meassym$ measure is based on the fact that contracting a redex 
in the simply typed $\lambda$-calculus
cannot create a redex of higher or equal {\em degree},
where the degree of a redex is defined as the height of the type of its abstraction.
This observation was already known to Turing, as reported by Gandy~\cite{gandy80sn}.

This section is organized as follows.
In Section~\ref{subsec:CalcLamTm}, we enrich the syntax of $\CalcLamT$
by defining an auxiliary \emph{memory} calculus $\CalcLamTm$
which incorporates {\em wrappers} $\wrap{\tmstwo}$,
and we study some technical properties that are needed later.
In Section~\ref{subsec:simpd} we define an operation called \emph{full simplification}
for $\CalcLamTm$ terms,
which iteratively contracts in parallel all redexes of maximum degree,
showing that this yields the normal form of the term.
Finally, in Section~\ref{subsec:measure} we show that $\CalcLamT$ is SN
by defining a decreasing measure $\meassym$, which works by fully simplifying
a term in $\CalcLamTm$ and counting the number of remaining wrappers.

\subsection{The memory calculus $\CalcLamTm$}
\label{subsec:CalcLamTm}

\begin{defn}[The $\CalcLamTm$ calculus]
The sets of terms ($\tm,\tmtwo,\hdots$)
and set-terms ($\tms,\tmstwo,\hdots$)
are given by the following grammar:
\[
  \tm ::= \var^\typ
     \mid \lam{\var^\typs}{\tm}
     \mid \tm\,\tms
     \mid \tm\wrap{\tms}
     \qquad \qquad \qquad
  \tms ::= \set{\tm_j}_\jJ \qquad
\]
where $J$ stands for a finite set of indices.
We write $\sctx$ for a list of wrappers,
defined by the grammar $\sctx ::= \ctxhole \mid \sctx\wrap{\tms}$,
and $\tm\sctx$ for the term that results from extending $\tm$ with all the
wrappers in the list,
\ie $\tm(\ctxhole\wrap{\tmstwo_1}\hdots\wrap{\tmstwo_n})
   = \tm\wrap{\tmstwo_1}\hdots\wrap{\tmstwo_n}$.
Typing judgements are of the forms $\judgTm{\tctx}{\tm}{\typ}$
and $\judgTms{\tctx}{\tms}{\typs}$,
where types and typing contexts are defined as before.
The typing rules extend the $\CalcLamT$ type assignment system
of Definition~\ref{def_CalcLamT} with a typing rule $\ruleTmWrap$:
\[
  \infer[\ruleTmAx]{
    \judgTm{\tctx, \var : \typs}
    {\var^{\typtwo}}
    {\typtwo}
  }{
    \typtwo \in \typs
  }
  \qquad
  \infer[\ruleTmArri]{
    \judgTm{\tctx}{\lam{\var^\typs}{\tm}}{\typs\arrow\typtwo}
  }{
    \judgTm{\tctx,\var:\typs}{\tm}{\typtwo}
  }
\]
\[
  \infer[\ruleTmArre]{
    \judgTm{\tctx}{\app{\tm}{\tmstwo}}{\typtwo}
  }{
    \judgTm{\tctx}{\tm}{\typs\arrow\typtwo}
    \qquad
    \judgTms{\tctx}{\tmstwo}{\typs}
    \qquad
    \typs \neq \emptyset
  }
  \qquad
  \infer[\ruleTmWrap]{
    \judgTm{\tctx}{\tm\wrap{\tmstwo}}{\typ}
  }{
    \judgTm{\tctx}{\tm}{\typ}
    &
    \judgTms{\tctx}{\tmstwo}{\typstwo}
  }
\]
\[
  \infer[\ruleTmMulti]{
    \judgTms{\tctx}{\set{\tm_j}_\jJ}{\set{\typ_j}_\jJ}
  }{
    (\judgTm{\tctx}{\tm_j}{\typ_j})_\jJ
    \quad
    (\forall h,k \in J.\,\, h \neq k \implies \typ_h \neq \typ_k)
  }
\]
The operation of capture-avoiding substitution $\tm\sub{\var^\typs}{\tmstwo}$
is extended by declaring that
$(\tm\wrap{\tmsthree})\sub{\var^\typs}{\tmstwo} =
 \tm\sub{\var^\typs}{\tmstwo}\wrap{\tmsthree\sub{\var^\typs}{\tmstwo}}$.
Reduction in the $\CalcLamTm$-calculus, called $\symTm$-reduction,
is defined over terms and set-terms by the following rule,
closed by congruence under arbitrary contexts:
\[
  (\lam{\var^\typs}{\tm})\sctx\,\tmstwo
  \toTm
  \tm\sub{\var}{\tmstwo}\wrap{\tmstwo}\sctx
\]
Abstractions followed by lists of wrappers, \ie terms of the form
$(\lam{\var}{\tm})\sctx$ are called wrapped abstractions
or \defnn{w-abstractions} for short.
A \defnn{redex} is an expression matching the left-hand side of the
$\symTm$ rule, which must be an applied w-abstraction.
The \defnn{height} of a type $\typ$ (resp. set-type $\typs$)
is written $\height{\typ}$ (resp. $\height{\typs}$)
and defined as follows:
\[
  \begin{array}{rcl}
    \height{\basetyp}           & \eqdef & 0 \\
    \height{\typs\arrow\typtwo} & \eqdef & 1+\maxof{\height{\typs},\height{\typtwo}} \\
    \height{\set{\typ_1,\hdots,\typ_n}}
                             & \eqdef & \maxof{\height{\typ_1},\hdots,\height{\typ_n}} \\
  \end{array}
\]
We write $\typeof{\tm}$ for the type of $\tm$, which is uniquely
defined for typable terms.
The \defnn{degree} of a w-abstraction is the height of its type.
The degree of a redex is the degree of its w-abstraction.
The \defnn{max-degree} of a term $\tm \in \CalcLamTm$ is written $\maxdeg{\tm}$
and defined as the maximum degree of the redexes in $\tm$,
or $0$ if $\tm$ has no redexes.
This notion is also defined for set-terms ($\maxdeg{\tms}$)
and lists of wrappers ($\maxdeg{\sctx}$) in a similar way.
The \defnn{weight} of a term $\tm \in \CalcLamTm$ is written $\weight{\tm}$
and defined as the number of wrappers in $\tm$.
\end{defn}

\begin{exmp}
  In order to illustrate how the system works, and, especially, how wrappers are handled, we show two possible reductions for a term involving self-application and reductions inside the argument. We underline the contracted redex and highlight the new introduced memory at each step for clarity. Let
 $A  =  a \to a;\ A^A  =  A \to A;\ I^a  =  \lam{x^{\set{a}}}x;\ I^A  =  \lam{x^{\set{A}}}x;\ I^{A^A}  =  \lam{x^{\set{A^A}}}x$.
\end{exmp}

 \begin{center}
\begin{tikzpicture}[
  >=Latex,
  every node/.style={inner sep=2pt},
  step/.style={->, thick},
  node distance=5mm and 24mm
]
\node (M) {$ \underline{(\lam{x^{\set{A^A, A}}}{x^{A^A} \, x^A}) \set{I^{A^A}I^A,\dashuline{I^AI^a}}} $};

\node (M0) [below=of M] {};
\node (L0) [left=4cm of M0] {};
\node (L1) [below=0mm of L0] {$(I^{A^A}I^A)(I^AI^a) \bm{\wrap{\set{I^{A^A}I^A,\underline{I^AI^a}}}}$};
\node (L2) [below=of L1] {$(I^{A^A}I^A)(I^AI^a) \wrap{\set{\underline{I^{A^A}I^A},I^a\bm{\wrap{I^a}}}}$};
\node (L3) [below=of L2] {$(I^{A^A}I^A)(\underline{I^AI^a}) \wrap{\set{I^A\bm{\wrap{I^A}},I^a\wrap{I^a}}}$};
\node (L4) [below=of L3] {$(\underline{I^{A^A}I^A})(I^a\bm{\wrap{I^a}}) \wrap{\set{I^A\wrap{I^A},I^a\wrap{I^a}}}$};
\node (L5) [below=of L4] {};

\node (R0) [right=4cm of M0] {};
\node (R1) [below=of R0] {$ (\lam{x^{\set{A^A, A}}}{x^{A^A} \, x^A}) \set{\dashuline{I^{A^A}I^A},I^a\bm{\wrap{I^a}}} $};
\node (R2) [below=15mm of R1] {$ \dashuline{(\lam{x^{\set{A^A, A}}}{x^{A^A} \, x^A}) \set{I^A\bm{\wrap{I^A}},I^a\wrap{I^a}}} $};

\node (M4) [right=4cm of L5] {};
\node (M5) [below=0mm of M4] {$\underline{(I^A\bm{\wrap{I^A}})(I^a\wrap{I^a})} \bm{\wrap{\set{I^A\wrap{I^A},I^a\wrap{I^a}}}}$};
\node (M6) [below=of M5] {$I^a\wrap{I^a}\bm{\wrap{I^a\wrap{I^a}}}\wrap{I^A} \wrap{\set{I^A\wrap{I^A},I^a\wrap{I^a}}}$};

\draw[->] (M) -- (L1);
\draw[->,dashed] (M) -- (R1);

\draw[->] (L1) -- (L2);
\draw[->] (L2) -- (L3);
\draw[->] (L3) -- (L4);
\draw[->] (L4) -- (M5);
\draw[->] (M5) -- (M6);

\draw[->,dashed] (R1) -- (R2);
\draw[->,dashed] (R2) -- (M5);

\end{tikzpicture}
 \end{center}

\begin{exmp}
  The following reductions show how erased subterms are memorized by wrappers.
  \begin{enumerate}
    \item $\underline{(\lam{x^{\set A}}{y^{B}})((\lam{x^{\set B}}{z^{A}})w^B)}
      \quad \toTm \quad
      {y^B}\bm{\wrap{\underline{(\lam{x^{\set B}}{z^A})w^B}}}
      \quad \toTm \quad
      {y^B}\wrap{z^A\bm{\wrap{w^B}}}$
    \item $\underline{(\lam{x^{\set A}}{\lam{y^{\set B}}{x}})z^A}w^B
      \quad \toTm \quad \underline{(\lam{y^{\set B}}{z^A})\bm{\wrap{z^A}}w^B}
      \quad \toTm \quad {z^A}\bm{\wrap{w^B}}\wrap{z^A}$
  \end{enumerate}
\end{exmp}

\begin{rem}
Each step $\tm \toT \tmtwo$ in the $\CalcLamT$-calculus (without wrappers)
has a \defnn{corresponding} step $\tm \toTm \tmtwo'$
in the $\CalcLamTm$-calculus (with wrappers),
contracting the same redex but creating one wrapper.
For example, the step
$\app{(\lam{\var}{\varthree\,\var\,\var})}{I} \toT \varthree\,I\,I$
has a corresponding step
$\app{(\lam{\var}{\varthree\,\var\,\var})}{I} \toTm (\varthree\,I\,I)\wrap{I}$.
In particular an erasing reduction step in $\CalcLamT$ is mapped a non-erasing one 
in $\CalcLamTm$. For example, $\app{(\lam{x^\typs}{z^\typtwo})}{y^\typs} \toT z^\typtwo$,
while $\app{(\lam{x^\typs}{z^\typtwo})}{y^\typs} \toTm z^\typtwo \wrap{y^\typs}$.
\end{rem}

The following properties hold for the $\CalcLamTm$ calculus.
Subject expansion also holds for this calculus, but it is not needed
as part of the technical development.
See Appendices~\ref{appendix:section:CalcLamTm_subject_reduction} and~\ref{appendix:section:CalcLamTm_confluence} for detailed proofs.

\begin{restatable}[Subject reduction]{proposition}{calclamtmreduction}
  \label{CalcLamTm_subject_reduction}
  If $\tm \toTm \tm'$ and $\judgTm{\tctx}{\tm}{\typ}$,
  then $\judgTm{\tctx}{\tm'}{\typ}$.
\end{restatable}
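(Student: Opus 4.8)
The plan is to adapt the subject-reduction argument for $\CalcLamT$ (\Cref{prop:sub-red}) to the memory calculus; the only genuinely new ingredient is the bookkeeping of wrappers. First I would establish a Substitution Lemma for $\CalcLamTm$, extending \Cref{lem:sub} to the enlarged syntax: if $\judgTm{\tctx,\var:\typs}{\tm}{\typtwo}$ and $\judgTms{\tctx}{\tmstwo}{\typs}$, then $\judgTm{\tctx}{\tm\sub{\var^\typs}{\tmstwo}}{\typtwo}$, generalized simultaneously to set-terms. Its proof is by induction on $\tm$ and reuses the $\CalcLamT$ argument almost verbatim, the single new case being $\tm = \tmthree\wrap{\tmsthree}$: here one applies the induction hypothesis to $\tmthree$ and to each element of $\tmsthree$ (whose types are irrelevant to the conclusion) and reassembles with rule $\ruleTmWrap$, using that substitution commutes with wrappers by definition.

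With this lemma available, I would prove the proposition by induction on the derivation of $\tm \toTm \tm'$, that is, on the position of the contracted redex, generalizing the statement to set-terms so that the congruence cases close under the induction hypothesis. These congruence cases --- reduction under an abstraction, in the function or the argument of an application, inside an element of a set-term, or inside a wrapper --- are routine: one inverts the last typing rule, applies the induction hypothesis to the reduced subterm (which keeps its type), and rebuilds the derivation with the same rule.

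The heart of the argument is the root step $(\lam{\var^\typs}{\tm})\sctx\,\tmstwo \toTm \tm\sub{\var^\typs}{\tmstwo}\wrap{\tmstwo}\sctx$. Assume $\judgTm{\tctx}{(\lam{\var^\typs}{\tm})\sctx\,\tmstwo}{\typtwo}$. Its last rule is $\ruleTmArre$, so $\judgTm{\tctx}{(\lam{\var^\typs}{\tm})\sctx}{\typs'\arrow\typtwo}$ and $\judgTms{\tctx}{\tmstwo}{\typs'}$ for some non-empty $\typs'$. I would then use an inversion lemma for w-abstractions, peeling off the wrappers of $\sctx$ one by one: since each is introduced by $\ruleTmWrap$, which leaves the type untouched, the type of $(\lam{\var^\typs}{\tm})\sctx$ equals that of $\lam{\var^\typs}{\tm}$, so by $\ruleTmArri$ one obtains $\judgTm{\tctx,\var:\typs}{\tm}{\typtwo}$ and $\typs' = \typs$, while each wrapper content in $\sctx$ is recorded as typable under $\tctx$. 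Applying the Substitution Lemma gives $\judgTm{\tctx}{\tm\sub{\var^\typs}{\tmstwo}}{\typtwo}$; re-attaching the fresh wrapper $\wrap{\tmstwo}$ and then the wrappers of $\sctx$ by repeated use of $\ruleTmWrap$ --- each preserving the type $\typtwo$ --- yields $\judgTm{\tctx}{\tm\sub{\var^\typs}{\tmstwo}\wrap{\tmstwo}\sctx}{\typtwo}$, as required.

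I expect the main obstacle to be precisely this \emph{type-transparency of wrappers}: making explicit, through the inversion lemma on $\sctx$, that a list of wrappers neither alters the type of the w-abstraction it decorates nor ties $\sctx$ to the domain $\typs$, so that the wrappers accumulated during reduction --- including the newly created $\wrap{\tmstwo}$ --- can always be re-typed under the ambient context. Everything else is a faithful transcription of the $\CalcLamT$ development.
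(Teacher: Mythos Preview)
Your proposal is correct and follows essentially the same approach as the paper: first extend the Substitution Lemma to $\CalcLamTm$ by adding the wrapper case, then prove subject reduction by induction on the reduction relation, with the root step handled by inverting $\ruleTmArre$, peeling off the wrappers of $\sctx$ (each typed by $\ruleTmWrap$, which is transparent on the subject's type), applying the Substitution Lemma, and reassembling with $n+1$ applications of $\ruleTmWrap$. The paper presents the inversion on $\sctx$ by simply displaying the derivation scheme rather than isolating it as a separate lemma, but the content is the same.
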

\begin{restatable}[Confluence]{proposition}{lamtmconfluence}
  \label{LamTm:confluence}
  If $\tm_1 \toTm^* \tm_2$ and $\tm_1 \toTm^* \tm_3$,
  there exists a term $\tm_4$ such that
  $\tm_2 \toTm^* \tm_4$ and $\tm_3 \toTm^* \tm_4$.
\end{restatable}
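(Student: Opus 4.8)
The plan is to prove confluence by the Tait--Martin-L\"of technique, via a notion of \emph{parallel reduction} $\pTm$. First I would introduce $\pTm$ as a relation on terms, set-terms, and lists of wrappers, defined inductively so as to contract any set of redexes already present in a term in a single step. Besides the congruence clauses ($\var^\typ \pTm \var^\typ$, together with the rules closing $\pTm$ under $\lambda$-abstraction, application, wrapping, and set-term formation, which in particular make $\pTm$ reflexive), the essential clause is the one contracting a redex while allowing reduction in all of its immediate components: if $\tm \pTm \tm'$, $\tmstwo \pTm \tmstwo'$ and $\sctx \pTm \sctx'$, then
\[
  (\lam{\var^\typs}{\tm})\sctx\,\tmstwo
  \pTm
  \tm'\sub{\var}{\tmstwo'}\wrap{\tmstwo'}\sctx',
\]
where $\pTm$ on wrapper lists is defined componentwise. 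Once $\pTm$ is in place, I would verify the sandwiching inclusions $\toTm \subseteq \pTm \subseteq \toTm^*$, from which $\toTm^* = \pTm^*$; it then suffices to prove that $\pTm$ enjoys the diamond property, since the diamond property of a relation lifts to confluence of its reflexive--transitive closure (Hindley--Rosen).

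The technical backbone is a substitution lemma for parallel reduction: if $\tm \pTm \tm'$ and $\tmstwo \pTm \tmstwo'$, then $\tm\sub{\var^\typs}{\tmstwo} \pTm \tm'\sub{\var^\typs}{\tmstwo'}$, proved by induction on the derivation of $\tm \pTm \tm'$ (generalized to set-terms and wrapper lists). The wrapper clause of substitution, $(\tm\wrap{\tmsthree})\sub{\var^\typs}{\tmstwo} = \tm\sub{\var^\typs}{\tmstwo}\wrap{\tmsthree\sub{\var^\typs}{\tmstwo}}$, commutes cleanly with the corresponding congruence clause of $\pTm$, and the variable case uses Observation~\ref{obs:bi} exactly as in Lemma~\ref{lem:sub}. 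Throughout, I rely on Subject Reduction (Proposition~\ref{CalcLamTm_subject_reduction}) and type uniqueness (Lemma~\ref{type_uniqueness}) to ensure that the distinct-types invariant of set-terms is preserved, so that no set-term collapses and cardinalities are maintained under parallel reduction and substitution.

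To establish the diamond property I would follow Takahashi's approach: define the \emph{complete development} $\tm^\ast$, which simultaneously contracts every redex occurring in $\tm$ (but not those created by contraction). The recursive clauses are the expected ones, the only interesting case being an application $\tm\,\tmstwo$ whose head $\tm$ is a w-abstraction $(\lam{\var^\typs}{\tmthree})\sctx$, for which $(\tm\,\tmstwo)^\ast = \tmthree^\ast\sub{\var}{\tmstwo^\ast}\wrap{\tmstwo^\ast}\sctx^\ast$. The key lemma is the triangle property: whenever $\tm \pTm \tmtwo$, one has $\tmtwo \pTm \tm^\ast$, proved by induction on the derivation of $\tm \pTm \tmtwo$, invoking the substitution lemma in the contraction case. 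The diamond property for $\pTm$ is then immediate: if $\tm \pTm \tmtwo$ and $\tm \pTm \tmthree$, both $\tmtwo \pTm \tm^\ast$ and $\tmthree \pTm \tm^\ast$, so $\tm^\ast$ is the required common reduct, and confluence of $\toTm^*$ follows.

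I expect the main obstacle to be the bookkeeping around the wrapper list $\sctx$ that is permuted to the outside of the redex: the definition of $\pTm$, of $(\cdot)^\ast$, and the case analysis in the triangle property must all treat a w-abstraction $(\lam{\var^\typs}{\tmthree})\sctx$ as an atomic head while still permitting independent parallel reduction inside $\sctx$. Concretely, the delicate point is matching the shape of the contractum $\tm'\sub{\var}{\tmstwo'}\wrap{\tmstwo'}\sctx'$ against a term reached by the other derivation when one copy contracts the outer redex and the other first reduces within $\sctx$ or within the argument; the componentwise definition of $\pTm$ on wrapper lists is precisely what makes these two orders reconcile. The remaining cases are routine congruences.
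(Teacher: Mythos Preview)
Your proposal is correct and matches the paper's own proof essentially point for point: the paper defines $\pTm$ with exactly your redex clause (including componentwise $\pTm$ on wrapper lists), proves the substitution lemma for $\pTm$, establishes Takahashi's triangle property $\tm \pTm \tm' \implies \tm' \pTm \comp{\tm}$ via a complete-development function $\comp{-}$, and concludes the diamond property and hence confluence from $\toTm^* = \pTm^*$. Your anticipated delicate point about the wrapper list $\sctx$ is handled in the paper exactly as you suggest, by treating the w-abstraction head atomically while allowing parallel reduction inside $\sctx$.
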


\subsection{Simplification by complete developments}
\label{subsec:simpd}

We now introduce the operations of \emph{simplification} and \emph{full simplification} of a term. The \emph{simplification} computes the result of the complete development of the redexes of a given degree, \ie, reduces in parallel all the redexes of that degree. The \emph{full simplification} iteratively applies the simplification of a term, starting with the maximum degree and decreasing down to degree 1. Both operations are well-defined by recursion: simplification on the structure of terms,
and full simplification on the maximum degree of the term. Turing's observation (redex contractions cannot create redexes of higher or equal degree) is a key underlying property that ensures that full simplification works as intended, \ie, it computes the normal form of a term. The subsequent lemmas show it.

\begin{defn}[Simplification]
For each integer $d \geq 1$ we define an operation written $\simpd{-}$,
called \defnn{simplification of degree $d$},
that can be applied on terms ($\simpd{\tm}$),
set-terms ($\simpd{\tms}$), and lists of wrappers ($\simpd{\sctx}$),
mutually recursively as follows:
\[
  \begin{array}{rcl}
    \simpd{\var^\typ}
    & \eqdef &
    \var^\typ
  \\
    \simpd{\lam{\var^\typs}{\tm}}
    & \eqdef &
    \lam{\var^\typs}{\simpd{\tm}}
  \\
    \simpd{\app{\tm}{\tmstwo}}
    & \eqdef &
    \begin{cases}
      \simpd{\tm'}\sub{\var^\typs}{\simpd{\tmstwo}}\wrap{\simpd{\tmstwo}}\simpd{\sctx}
      & \text{if $\tm = (\lam{\var^\typs}{\tm'})\sctx$ of degree $d$}
    \\
      \app{\simpd{\tm}}{\simpd{\tmstwo}}
      & \text{otherwise}
    \end{cases}
  \\
    \simpd{\tm\wrap{\tmstwo}}
    & \eqdef &
    \simpd{\tm}\wrap{\simpd{\tmstwo}}
  \\
    \simpd{\set{\tm_1,\hdots,\tm_n}}
    & \eqdef &
    \set{\simpd{\tm_1},\hdots,\simpd{\tm_n}}
  \\
    \simpd{\ctxhole\wrap{\tms_1}\hdots\wrap{\tms_n}}
    & \eqdef &
    \ctxhole\wrap{\simpd{\tms_1}}\hdots\wrap{\simpd{\tms_n}}
  \end{array}
\]
If $\tm \in \CalcLamTm$ is a term and $D = \maxdeg{\tm}$,
the \defnn{full simplification of $\tm$}
is written $\simpfull{\tm}$
and defined as $\simpfull{\tm} \eqdef \simp{1}{\hdots\simp{D-1}{\simp{D}{\tm}}\hdots}$.
\end{defn}

\begin{lem}[Soundness of simplification]
\label{term_reduces_to_simpd}
If $\tm \in \CalcLamTm$
and $d \geq 1$
then $\tm \toTm^* \simpd{\tm}$.
\end{lem}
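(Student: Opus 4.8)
The plan is to prove the statement by a straightforward mutual structural induction on the expression being simplified, carried out simultaneously for terms $\tm$, set-terms $\tms$, and lists of wrappers $\sctx$. The engine of the argument is that $\toTm^*$ is a congruence: it is reflexive, transitive, and closed under arbitrary contexts, so that a reduction of a subexpression lifts to a reduction of the whole expression. With this in hand, every clause of the definition of $\simpd{-}$ except one reduces immediately to an application of the inductive hypothesis.

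Concretely, I would first dispatch the easy cases. For a variable $\var^\typ$ we have $\simpd{\var^\typ} = \var^\typ$, so reflexivity of $\toTm^*$ suffices. For $\lam{\var^\typs}{\tm}$, the inductive hypothesis gives $\tm \toTm^* \simpd{\tm}$, and congruence under the abstraction context yields $\lam{\var^\typs}{\tm} \toTm^* \lam{\var^\typs}{\simpd{\tm}} = \simpd{\lam{\var^\typs}{\tm}}$. The wrapper case $\tm\wrap{\tmstwo}$, the set-term case $\set{\tm_1,\hdots,\tm_n}$, the list-of-wrappers case, and the non-redex subcase of application (where $\simpd{\app{\tm}{\tmstwo}} = \app{\simpd{\tm}}{\simpd{\tmstwo}}$) are all handled identically: apply the inductive hypothesis to each immediate subexpression and combine the resulting reductions by congruence.

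The only case requiring care is the redex subcase of application, where $\tm = (\lam{\var^\typs}{\tm'})\sctx$ is a w-abstraction of degree $d$ and $\simpd{\app{\tm}{\tmstwo}} = \simpd{\tm'}\sub{\var^\typs}{\simpd{\tmstwo}}\wrap{\simpd{\tmstwo}}\simpd{\sctx}$. Here I would first use the inductive hypotheses $\tm' \toTm^* \simpd{\tm'}$, $\sctx \toTm^* \simpd{\sctx}$, and $\tmstwo \toTm^* \simpd{\tmstwo}$ together with congruence to reduce the subexpressions in place, obtaining
\[
  \app{(\lam{\var^\typs}{\tm'})\sctx}{\tmstwo}
  \toTm^*
  \app{(\lam{\var^\typs}{\simpd{\tm'}})\simpd{\sctx}}{\simpd{\tmstwo}} .
\]
The right-hand side is still an applied w-abstraction, so a single application of the $\symTm$ rule contracts it to $\simpd{\tm'}\sub{\var^\typs}{\simpd{\tmstwo}}\wrap{\simpd{\tmstwo}}\simpd{\sctx}$, which is exactly $\simpd{\app{\tm}{\tmstwo}}$.

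The one technical point to verify --- and the closest thing to an obstacle --- is that this final substitution is well-typed: both the $\symTm$ rule and the substitution operation require the argument to carry the set-type $\typs$ annotating the bound variable. Since $\tmstwo \toTm^* \simpd{\tmstwo}$, subject reduction (Proposition~\ref{CalcLamTm_subject_reduction}) guarantees that $\simpd{\tmstwo}$ has the same set-type $\typs$ as $\tmstwo$, so the substitution $\sub{\var^\typs}{\simpd{\tmstwo}}$ is defined and the contraction step is legitimate. Everything else is routine bookkeeping of congruence steps.
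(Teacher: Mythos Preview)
Your proposal is correct and follows exactly the approach the paper takes: a mutual structural induction on terms, set-terms, and lists of wrappers, with the redex case handled by first reducing the subexpressions via congruence and then firing the $\symTm$ rule once. The paper condenses all of this into a single sentence (``Straightforward by mutual induction\ldots''), whereas you have spelled out the cases and even noted the subject-reduction check that keeps the final substitution well-typed --- a detail the paper leaves implicit.
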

\begin{proof}
Straightforward by mutual induction on the term $\tm$, set-term $\tms$
and list of wrappers $\sctx$, generalizing the statement to set-terms and
lists of wrappers, \ie $\tms \toTm^* \simpd{\tms}$ and $\sctx \toTm^*
\simpd{\sctx}$.
\end{proof}

\begin{lem}[Creation of abstraction by substitution]
\label{substitution_creates_abstraction}
Let $\tm \in \CalcLamTm$
and $\tmstwo \subset \CalcLamTm$
such that $\tm$ is not a w-abstraction
but $\tm\sub{\var^\typs}{\tmstwo}$ is a w-abstraction.
Then $\tm = \var^\typtwo\sctx$ with $\typtwo\in\typs$.
\end{lem}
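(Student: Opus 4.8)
The plan is to argue by structural induction on $\tm$, doing a case analysis on its outermost constructor. The guiding principle is that a term is a w-abstraction exactly when, after stripping all of its trailing wrappers, its core is an abstraction; equivalently, a term is a w-abstraction iff it is an abstraction, or it has the form $\tmthree\wrap{\tmsthree}$ with $\tmthree$ already a w-abstraction. I would rely on two facts about substitution: that it distributes over applications, $(\app{\tmthree}{\tmsthree})\sub{\var^\typs}{\tmstwo} = \app{\tmthree\sub{\var^\typs}{\tmstwo}}{\tmsthree\sub{\var^\typs}{\tmstwo}}$, and that it commutes with wrappers, $(\tmthree\wrap{\tmsthree})\sub{\var^\typs}{\tmstwo} = \tmthree\sub{\var^\typs}{\tmstwo}\wrap{\tmsthree\sub{\var^\typs}{\tmstwo}}$. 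Together these show that the outermost constructor of $\tm\sub{\var^\typs}{\tmstwo}$ agrees with that of $\tm$ in every case except when $\tm$ is the substituted variable, so the only mechanism by which substitution can produce an abstraction at the head is the replacement of a variable.

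First I would dispose of the cases that cannot create such a head. If $\tm = \lam{\vartwo^\typstwo}{\tmtwo}$ then $\tm$ is already a w-abstraction, contradicting the hypothesis, so this case is vacuous. If $\tm = \app{\tmthree}{\tmsthree}$ is an application, then $\tm\sub{\var^\typs}{\tmstwo}$ is again an application; since the top constructor of any w-abstraction is either an abstraction or a wrapper, never an argument application, this contradicts the assumption that the result is a w-abstraction. If $\tm = \vartwo^\typtwo$ is a variable, there are two subcases: either the substitution acts trivially—because $\vartwo \neq \var$ or $\typtwo \notin \typs$—so that $\tm\sub{\var^\typs}{\tmstwo} = \vartwo^\typtwo$ is a variable, again not a w-abstraction and hence impossible; or $\vartwo = \var$ and $\typtwo \in \typs$, in which case $\tm\sub{\var^\typs}{\tmstwo}$ is the unique element of $\tmstwo$ of type $\typtwo$ (by Observation~\ref{obs:bi}), which is a w-abstraction by hypothesis. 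This last subcase is the productive base case, and the conclusion holds with the empty wrapper list, taking $\sctx = \ctxhole$.

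The inductive step is the wrapper case $\tm = \tmthree\wrap{\tmsthree}$. Since $\tm$ is not a w-abstraction, neither is $\tmthree$, because appending a wrapper to a w-abstraction again yields a w-abstraction. On the other hand $\tm\sub{\var^\typs}{\tmstwo} = \tmthree\sub{\var^\typs}{\tmstwo}\wrap{\tmsthree\sub{\var^\typs}{\tmstwo}}$ is a w-abstraction, and removing its outermost wrapper forces $\tmthree\sub{\var^\typs}{\tmstwo}$ to be a w-abstraction as well. I would then apply the \ih to the strict subterm $\tmthree$, obtaining $\tmthree = \var^\typtwo\sctx'$ with $\typtwo \in \typs$, and conclude $\tm = \var^\typtwo\sctx'\wrap{\tmsthree} = \var^\typtwo\sctx$ with $\sctx = \sctx'\wrap{\tmsthree}$. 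The only genuinely delicate point—the expected main obstacle—is precisely this wrapper case: I must be careful that the property of being a w-abstraction is both \emph{preserved} by appending a wrapper and \emph{reflected} by removing the last one, so that the hypotheses transfer cleanly from $\tm$ to $\tmthree$, and that substitution commutes with the wrapper so the \ih is applicable. Everything else is a routine verification that neither applications nor trivially-substituted variables can acquire an abstraction at their head.
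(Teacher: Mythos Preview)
Your proof is correct and follows essentially the same approach as the paper's: both argue by case analysis on the head of $\tm$ after stripping trailing wrappers, ruling out abstractions, applications, and foreign variables in turn. The only organizational difference is that the paper first writes $\tm = \tm'\sctx$ with $\tm'$ not a wrapper (peeling all wrappers at once) and then does a single case analysis on $\tm'$, whereas you peel wrappers one at a time via the inductive step; the content is the same.
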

\begin{proof}
Let us write $\tm$
as of the form $\tm = \tm'\sctx$
where $\tm'$ is not a wrapper.
Note that $\tm'$ is not an abstraction, because
by hypothesis $\tm$ is not a w-abstraction.
Moreover, $\tm'$ cannot be an application, because
if $\tm' = \tm_1\,\tm_2$
then $\tm\sub{\var^\typs}{\tmstwo}
     = \tm'\sub{\var^\typs}{\tmstwo}(\sctx\sub{\var^\typs}{\tmstwo})
     = (\app{\tm_1\sub{\var^\typs}{\tmstwo}}{\tm_2\sub{\var^\typs}{\tmstwo}})
         (\sctx\sub{\var^\typs}{\tmstwo})$
would not be a w-abstraction.
The only remaining possibility is that $\tm'$ is a variable,
so $\tm' = \vartwo^\typtwo$.
If it is a variable other than $\var$, \ie $\vartwo \neq \var$,
then
$\tm\sub{\var^\typs}{\tmstwo}
 = \tm'\sub{\var^\typs}{\tmstwo}(\sctx\sub{\var^\typs}{\tmstwo})
 = \vartwo^\typtwo(\sctx\sub{\var^\typs}{\tmstwo})$
would not be a w-abstraction.
So we necessarily have that $\vartwo = \var^\typtwo$
with $\typtwo \in \typs$,
which concludes the proof.
\end{proof}

\begin{lem}[Bound for the max-degree of a substitution]
\label{max_degree_sub}
Let $\judgTms{\tctx}{\tmstwo}{\typs}$,
where $\maxdeg{\tmstwo} < d$ and $\height{\typs} < d$.
If $\judgTm{\tctx,\var:\typs}{\tm}{\typtwo}$
and $\maxdeg{\tm} < d$ then $\maxdeg{\tm\sub{\var^\typs}{\tmstwo}} < d$.
\end{lem}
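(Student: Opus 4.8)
The plan is to proceed by induction on the structure of $\tm$, generalizing the statement to set-terms so as to follow the mutual recursion defining substitution. The guiding idea is that every redex of $\tm\sub{\var^\typs}{\tmstwo}$ belongs to one of three classes, each of degree strictly below $d$: (i) residuals of redexes already present in $\tm$; (ii) redexes lying entirely inside one of the substituted elements of $\tmstwo$; and (iii) redexes freshly created by the substitution. Classes (i) and (ii) are controlled by the hypotheses $\maxdeg{\tm} < d$ and $\maxdeg{\tmstwo} < d$, whereas class (iii) is the crux and is controlled by $\height{\typs} < d$.

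Most cases are immediate. If $\tm = \vartwo^\typthree$ with $\vartwo \neq \var$ then $\tm\sub{\var^\typs}{\tmstwo} = \vartwo^\typthree$ has no redex; if $\tm = \var^{\typ_i}$ then $\tm\sub{\var^\typs}{\tmstwo}$ is the element $\tmtwo_i \in \tmstwo$ (well-defined by Observation~\ref{obs:bi}), so its max-degree is at most $\maxdeg{\tmstwo} < d$. When $\tm$ is an abstraction, a wrapper $\tmthree\wrap{\tmsfour}$, or a set-term, no redex is created at the top and the bound follows directly from the induction hypotheses applied to the immediate subterms, since the max-degree of such a term is the maximum of the max-degrees of its components. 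The only delicate case is the application $\tm = \tmthree\,\tmsfour$, where $\tm\sub{\var^\typs}{\tmstwo} = (\tmthree\sub{\var^\typs}{\tmstwo})\,(\tmsfour\sub{\var^\typs}{\tmstwo})$: the induction hypotheses bound the max-degree of each factor by $d$, so it remains only to bound the degree of the top-level redex, which is present exactly when $\tmthree\sub{\var^\typs}{\tmstwo}$ is a w-abstraction.

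I would split this final situation according to whether $\tmthree$ itself was already a w-abstraction. If it was, then $\tmthree\,\tmsfour$ was already a redex of $\tm$; since substitution preserves types, $\tmthree\sub{\var^\typs}{\tmstwo}$ has the same type, hence the same degree, which is $< d$ because $\maxdeg{\tm} < d$. If $\tmthree$ was not a w-abstraction but its substitution instance is, then Lemma~\ref{substitution_creates_abstraction} forces $\tmthree = \var^{\typtwo'}\sctx$ with $\typtwo' \in \typs$; writing $\tmtwo'$ for the unique element of $\tmstwo$ of type $\typtwo'$, we get $\tmthree\sub{\var^\typs}{\tmstwo} = \tmtwo'\,(\sctx\sub{\var^\typs}{\tmstwo})$, whose type is still $\typtwo'$ because the $\ruleTmWrap$ rule leaves the type unchanged, so the new redex has degree $\height{\typtwo'} \leq \height{\typs} < d$. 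This last subcase is the main obstacle: it is where the hypothesis $\height{\typs} < d$ enters decisively, and it is the formal counterpart of Turing's observation that contracting a redex can only introduce redexes whose degree is bounded by the type of the substituted variable. Collecting the three bounds yields $\maxdeg{\tm\sub{\var^\typs}{\tmstwo}} < d$.
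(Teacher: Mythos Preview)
Your proposal is correct and follows essentially the same approach as the paper's proof: both generalize to set-terms, proceed by structural induction, and isolate the application case as the only nontrivial one, splitting it according to whether $\tmthree$ is already a w-abstraction or becomes one only after substitution (invoking Lemma~\ref{substitution_creates_abstraction} in the latter case to conclude via $\height{\typs} < d$). The only cosmetic difference is that the paper lists a third explicit subcase (neither $\tmthree$ nor $\tmthree\sub{\var^\typs}{\tmstwo}$ is a w-abstraction), which you handle implicitly by noting that there is then no top-level redex to bound.
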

\begin{proof}
We generalize the statement for set-terms, claiming that
$\judgTms{\tctx,\var:\typs}{\tms}{\typstwo}$ and $\maxdeg{\tms} < d$
imply $\maxdeg{\tms\sub{\var^\typs}{\tmstwo}} < d$.
The proof proceeds by simultaneous induction on the structure
of the term $\tm$ or set-term $\tms$.
The interesting cases are when the term is a variable or an application.
The remaining cases are straightforward by resorting to the \ih.
If $\tm$ is a \textbf{variable}, $\tm = \vartwo^\typthree$,
we consider two subcases, depending on whether $\vartwo = \var$ or not.
\begin{enumerate}
\item
  If $\var = \vartwo$,
  then $\typthree = \typtwo$
  and since $\judgTm{\tctx,\var:\typs}{\var}{\typtwo}$ holds,
  we have that $\typtwo \in \typs$.
  Hence there is a term $\tmtwo_0 \in \tmstwo$
  such that $\judgTm{\tctx}{\tmtwo_0}{\typtwo}$,
  and $\var^\typtwo\sub{\var^\typs}{\tmstwo} = \tmtwo_0$.
  In particular,
  by definition we have that $\maxdeg{\tmtwo_0} \leq \maxdeg{\tmstwo}$,
  and by hypothesis we have that $\maxdeg{\tmstwo} < d$,
  so
  $\maxdeg{\tm\sub{\var^\typs}{\tmstwo}}
  = \maxdeg{\var^\typtwo\sub{\var^\typs}{\tmstwo}}
  = \maxdeg{\tmtwo_0} < d$,
  as required.
\item
  If $\var \neq \vartwo$,
  then
  $\maxdeg{\tm\sub{\var^\typs}{\tmstwo}}
  = \maxdeg{\vartwo^\typthree\sub{\var^\typs}{\tmstwo}}
  = \maxdeg{\vartwo^\typthree} = \maxdeg{\tm} < d$.
\end{enumerate}
If $\tm$ is an \textbf{application}, $\tm = \app{\tmthree}{\tmsfour}$,
note that $\tmthree$ is a subterm of $\tm$
so $\maxdeg{\tmthree} \leq \maxdeg{\tm} < d$,
and similarly $\maxdeg{\tmsfour} \leq \maxdeg{\tm} < d$.
Thus applying the \ih on each subterm
we have that $\maxdeg{\tmthree\sub{\var^\typs}{\tmstwo}} < d$
and that $\maxdeg{\tmsfour\sub{\var^\typs}{\tmstwo}} < d$.
We consider three subcases, depending on whether
\itemNumber{1.} $\tmthree$ is a w-abstraction,
\itemNumber{2.} $\tmthree$ is not a w-abstraction and $\tmthree\sub{\var^\typs}{\tmstwo}$
                is a w-abstraction,
or
\itemNumber{3.} $\tmthree$ and $\tmthree\sub{\var^\typs}{\tmstwo}$
                are not w-abstractions:
\begin{enumerate}
\item
  If $\tmthree$ is a w-abstraction:
  then $\tmthree = (\lam{\vartwo^\typsthree}{\tmthree'})\sctx$
  is of type $\typsthree\arrow\typfour$.
  Let $k$ be the degree of the w-abstraction $\tmthree$,
  \ie the height of its type,
  $k = \height{\typsthree\arrow\typfour}$.
  By definition,
  $\maxdeg{\tm} = \maxdeg{\app{\tmthree}{\tmsfour}}
                = \maxof{k,\maxdeg{\tmthree},\maxdeg{\tmsfour}}$.
  Since $\maxdeg{\tm} < d$ by hypothesis, in particular we have that $k < d$.
  Moreover, note that
  $\tmthree\sub{\var^\typs}{\tmstwo}
  = (\lam{\vartwo^\typsthree}{\tmthree'\sub{\var^\typs}{\tmstwo}})(\sctx\sub{\var^\typs}{\tmstwo})$,
  so $\tmthree\sub{\var^\typs}{\tmstwo}$
  is a w-abstraction, and it is of degree $k$
  because substitution preserves the type of a term.
  Therefore:
  \[
    \begin{array}{rclcl}
      \maxdeg{\tm\sub{\var^\typs}{\tmstwo}}
    & = &
      \maxdeg{\app{\tmthree\sub{\var^\typs}{\tmstwo}}{\tmsfour\sub{\var^\typs}{\tmstwo}}}
    \\
    & = &
      \maxof{k,\maxdeg{\tmthree\sub{\var^\typs}{\tmstwo}},\maxdeg{\tmsfour\sub{\var^\typs}{\tmstwo}}}
    & < &
      d
    \end{array}
  \]
  The last step is justified because we have already noted that
      $k < d$
  and $\maxdeg{\tmthree\sub{\var^\typs}{\tmstwo}} < d$
  and $\maxdeg{\tmsfour\sub{\var^\typs}{\tmstwo}} < d$.
\item
  If $\tmthree$ is not a w-abstraction
  and $\tmthree\sub{\var^\typs}{\tmstwo}$ is a w-abstraction:
  then by Lemma~\ref{substitution_creates_abstraction}
  $\tmthree$ must be of the form $\var^\typthree\sctx$ with $\typ\in\typthree$.
  Let $k$ be the degree of the w-abstraction $\tmthree\sub{\var^\typs}{\tmstwo}$.
  Then $k$ is the height of the type of $\tmthree\sub{\var^\typs}{\tmstwo}$,
  that is, $k = \height{\typthree} \leq \height{\typs} < d$.
  To conclude, note that:
  \[
    \begin{array}{rclcl}
      \maxdeg{\tm\sub{\var^\typs}{\tmstwo}}
    & = &
      \maxdeg{\app{\tmthree\sub{\var^\typs}{\tmstwo}}{\tmsfour\sub{\var^\typs}{\tmstwo}}}
    \\
    & = &
      \maxof{
        k,
        \maxdeg{\tmthree\sub{\var^\typs}{\tmstwo}},
        \maxdeg{\tmsfour\sub{\var^\typs}{\tmstwo}}
      }
    & < &
      d
    \end{array}
  \]
  The last step is justified because we have already noted that
      $k < d$
  and $\maxdeg{\tmthree\sub{\var^\typs}{\tmstwo}} < d$
  and $\maxdeg{\tmsfour\sub{\var^\typs}{\tmstwo}} < d$.
\item
  If $\tmthree$ and $\tmthree\sub{\var^\typs}{\tmstwo}$ are not a w-abstractions:
  then
  \[
    \begin{array}{rclcl}
      \maxdeg{\tm\sub{\var^\typs}{\tmstwo}}
    & = &
      \maxdeg{\app{\tmthree\sub{\var^\typs}{\tmstwo}}{\tmsfour\sub{\var^\typs}{\tmstwo}}}
    \\
    & = &
      \maxof{\maxdeg{\app{\tmthree\sub{\var^\typs}{\tmstwo}}},
             \maxdeg{\tmsfour\sub{\var^\typs}{\tmstwo}}}
    & < &
      d
    \end{array}
  \]
  The last step is justified because we have already noted that
      $\maxdeg{\tmthree\sub{\var^\typs}{\tmstwo}} < d$
  and $\maxdeg{\tmsfour\sub{\var^\typs}{\tmstwo}} < d$.
  \qedhere
\end{enumerate}
\end{proof}

\begin{lem}[Simplification does not create abstractions]
\label{simplification_does_not_create_abstractions}
Suppose that $\tm \in \CalcLamTm$ is not a w-abstraction
and $\maxdeg{\tm} \leq d$
and $\height{\typeof{\tm}} \geq d$.
Then $\simpd{\tm}$ is not a w-abstraction.
\end{lem}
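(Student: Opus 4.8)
The plan is to prove the statement by structural induction on $\tm \in \CalcLamTm$, using the two hypotheses ``$\tm$ is not a w-abstraction'' and $\height{\typeof{\tm}} \geq d$ to exclude precisely those cases in which $\simpd{-}$ could conceivably manufacture an abstraction. Following the grammar, I would first dispatch the trivial cases. If $\tm = \var^\typ$, then $\simpd{\var^\typ} = \var^\typ$, which is not a w-abstraction. If $\tm = \lam{\var^\typs}{\tmthree}$, the case does not arise, since a bare abstraction is already a w-abstraction (with empty wrapper list), contradicting the hypothesis.

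The application case $\tm = \app{\tmthree}{\tmsfour}$ is where I expect the crux of the argument to lie. The simplification rule branches on whether $\tmthree$ is a w-abstraction of degree exactly $d$. In the ``otherwise'' branch, $\simpd{\tm} = \app{\simpd{\tmthree}}{\simpd{\tmsfour}}$ is an application, and an application is never a w-abstraction (the latter, by its grammar $(\lam{\var}{\tmtwo})\sctx$, is always either an abstraction or a wrapped term), so we are done immediately, with no appeal to the induction hypothesis. The remaining branch, $\tmthree = (\lam{\var^\typsthree}{\tmthree'})\sctx$ of degree $d$, is the dangerous one, since the resulting $\simpd{\tmthree'}\sub{\var^\typsthree}{\simpd{\tmsfour}}\wrap{\simpd{\tmsfour}}\simpd{\sctx}$ could in principle be a w-abstraction. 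I would rule this branch out entirely by a type-height contradiction: the type of $\tmthree$ is some $\typsthree\arrow\typfour$, its degree $d$ equals $\height{\typsthree\arrow\typfour} = 1+\maxof{\height{\typsthree},\height{\typfour}}$, whence $\height{\typfour} < d$; but by rule $\ruleTmArre$ the type of $\tm$ is exactly the codomain $\typfour$, so $\height{\typeof{\tm}} = \height{\typfour} < d$, contradicting $\height{\typeof{\tm}} \geq d$. Thus this subcase is vacuous.

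For the wrapper case $\tm = \tmthree\wrap{\tmsfour}$, I would first note that $\tmthree$ is itself not a w-abstraction, since $\tmthree\wrap{\tmsfour}$ is a w-abstraction precisely when $\tmthree$ is. Simplification gives $\simpd{\tm} = \simpd{\tmthree}\wrap{\simpd{\tmsfour}}$, which is a w-abstraction iff $\simpd{\tmthree}$ is, so it suffices to apply the induction hypothesis to $\tmthree$. Its three preconditions all hold: $\tmthree$ is not a w-abstraction (just observed); $\maxdeg{\tmthree} \leq \maxdeg{\tm} \leq d$ since $\tmthree$ is a subterm of $\tm$; and $\height{\typeof{\tmthree}} = \height{\typeof{\tm}} \geq d$, because rule $\ruleTmWrap$ leaves the type unchanged when adding a wrapper. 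The induction hypothesis then yields that $\simpd{\tmthree}$, and hence $\simpd{\tm}$, is not a w-abstraction.

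The conceptual heart of the argument, and the step I would flag as the main obstacle to state cleanly, is the application subcase: the only mechanism by which $\simpd{-}$ can turn a non-abstraction into an abstraction is by contracting a head redex of degree $d$ through substitution, and the hypothesis $\height{\typeof{\tm}} \geq d$ is exactly the condition that forbids $\tm$ from being such a contractum, since firing a degree-$d$ redex strictly lowers the type height below $d$. Once this observation is isolated, the remaining cases are routine; in particular the wrapper case relies only on the fact that wrapping preserves both the type and an upper bound on the max-degree, so no generalization to set-terms is needed.
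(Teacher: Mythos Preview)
Your proposal is correct and follows essentially the same approach as the paper's proof: both proceed by induction on $\tm$, both isolate the ``$\tm$ is a redex of degree $d$'' subcase as the interesting one and dispatch it by the same type-height contradiction, and both treat the remaining cases as routine. Your write-up is simply more explicit about the variable, application-otherwise, and wrapper cases, whereas the paper compresses these into ``the remaining cases are straightforward resorting to the IH.''
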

\begin{proof}
We proceed by induction on $\tm$.
The interesting case is when $\tm$ is a redex of degree $d$.
We claim that this case is impossible.
Indeed, suppose that $\tm = (\lam{\var^\typs}{\tmtwo})\sctx\,\tmsthree$,
where $\lam{\var^\typs}{\tmtwo}$ is of type $\typs\arrow\typtwo$
so that $\tm$ is of type $\typtwo$.
The degree of the redex is $d = \height{\typs\arrow\typtwo}$
and $\height{\typeof{\tm}} = \height{\typtwo} < d$.
However, by hypothesis, $\height{\typeof{\tm}} \geq d$,
from which we obtain a contradiction.
The remaining cases are straightforward resorting to the \ih.
\end{proof}

\begin{lem}[Simplification decreases the max-degree]
\label{simp_decreases_maxdeg}
If $d \geq 1$ and $\maxdeg{\tm} \leq d$ then $\maxdeg{\simpd{\tm}} < d$.
\end{lem}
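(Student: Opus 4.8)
The plan is to prove the statement by \emph{simultaneous structural induction} on the term $\tm$, generalizing it to set-terms ($\maxdeg{\tms} \leq d \implies \maxdeg{\simpd{\tms}} < d$) and to lists of wrappers ($\maxdeg{\sctx} \leq d \implies \maxdeg{\simpd{\sctx}} < d$), so as to match the mutually recursive definition of $\simpd{-}$. The cases of a variable, an abstraction, a wrapper $\tm\wrap{\tmstwo}$, a set-term, and a list of wrappers are immediate: simplification commutes with the constructor and creates no redex at the root, so $\maxdeg{\simpd{-}}$ of the result is the maximum over the simplified immediate subexpressions, each of which is $< d$ by the inductive hypothesis (recall $d \geq 1$, so the variable case gives $0 < d$). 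Throughout I will use that simplification preserves types, and hence the degree of any w-abstraction, which follows from soundness of simplification (Lemma~\ref{term_reduces_to_simpd}, giving $\tm \toTm^* \simpd{\tm}$) together with subject reduction (Proposition~\ref{CalcLamTm_subject_reduction}); I will also use that $\simpd{-}$ maps w-abstractions to w-abstractions of the same type.

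The only genuinely interesting case is the application $\tm = \app{\tmtwo}{\tmsthree}$, split according to the definition of $\simpd{-}$. First, suppose $\tmtwo = (\lam{\var^\typs}{\tm'})\sctx$ is a w-abstraction of degree exactly $d$, so that $\simpd{\tm} = \simpd{\tm'}\sub{\var^\typs}{\simpd{\tmsthree}}\wrap{\simpd{\tmsthree}}\simpd{\sctx}$. Since $\tmtwo$ has type $\typs\arrow\typtwo$ of height $d$, we get $\height{\typs} < d$; and the inductive hypothesis applied to the subexpressions $\tm'$, $\tmsthree$ and $\sctx$ (all of max-degree $\leq d$) yields $\maxdeg{\simpd{\tm'}} < d$, $\maxdeg{\simpd{\tmsthree}} < d$ and $\maxdeg{\simpd{\sctx}} < d$. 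I would then invoke Lemma~\ref{max_degree_sub}, using that $\simpd{\tmsthree}$ has type $\typs$ (well typed by subject reduction), to conclude $\maxdeg{\simpd{\tm'}\sub{\var^\typs}{\simpd{\tmsthree}}} < d$. Finally, appending the wrappers $\wrap{\simpd{\tmsthree}}\simpd{\sctx}$ does not create a redex at the root, since the outermost constructor of the result is a wrapper rather than an application; hence $\maxdeg{\simpd{\tm}}$ is the maximum of three quantities already shown to be $< d$.

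In the remaining ``otherwise'' branch $\simpd{\tm} = \app{\simpd{\tmtwo}}{\simpd{\tmsthree}}$, the inductive hypothesis gives $\maxdeg{\simpd{\tmtwo}} < d$ and $\maxdeg{\simpd{\tmsthree}} < d$, and the whole point is to bound the degree of a possible \emph{root} redex $\app{\simpd{\tmtwo}}{\simpd{\tmsthree}}$. If $\tmtwo$ is a w-abstraction, then since it does not have degree $d$ but $\app{\tmtwo}{\tmsthree}$ is a redex of degree $\leq \maxdeg{\tm} \leq d$, its degree is $< d$; as $\simpd{\tmtwo}$ is again a w-abstraction of the same (type-preserved) degree, the root redex has degree $< d$. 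If instead $\tmtwo$ is \emph{not} a w-abstraction, I would appeal to the contrapositive of Lemma~\ref{simplification_does_not_create_abstractions}: either $\simpd{\tmtwo}$ is not a w-abstraction, so there is no root redex, or $\simpd{\tmtwo}$ is a w-abstraction, in which case the lemma forces $\height{\typeof{\tmtwo}} < d$, and since the degree of the root redex equals $\height{\typeof{\simpd{\tmtwo}}} = \height{\typeof{\tmtwo}} < d$, we conclude again. Either way $\maxdeg{\simpd{\tm}} < d$.

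I expect the main obstacle to be precisely this last branch: simplification can turn a non-abstraction head into a w-abstraction, \emph{creating} a root redex that is absent from $\tm$, so the bound cannot be obtained by tracking only pre-existing redexes. The delicate bookkeeping is outsourced to the two preceding lemmas---Lemma~\ref{max_degree_sub} controls the redexes created \emph{inside} the contractum by the substitution (itself relying on Lemma~\ref{substitution_creates_abstraction}), while Lemma~\ref{simplification_does_not_create_abstractions} controls the redexes created \emph{at the root} by a head becoming a w-abstraction---so that the present proof reduces to assembling these facts correctly in each subcase.
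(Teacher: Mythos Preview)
Your proposal is correct and follows essentially the same approach as the paper: simultaneous structural induction generalized to set-terms and lists of wrappers, with the application case handled via Lemma~\ref{max_degree_sub} in the degree-$d$ branch and via Lemma~\ref{simplification_does_not_create_abstractions} in the ``otherwise'' branch. The only cosmetic difference is that in the ``otherwise'' branch the paper splits on whether the height $k$ of $\typeof{\tmtwo}$ satisfies $k < d$ or $k \geq d$, whereas you split on whether $\tmtwo$ is a w-abstraction and then invoke the contrapositive of Lemma~\ref{simplification_does_not_create_abstractions}; both arrangements cover the same cases and use the same lemmas.
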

\begin{proof}
  We proceed by simultaneous induction,
generalizing the statement for set-terms
($\maxdeg{\tms} \leq d$ implies $\maxdeg{\simpd{\tms}} < d$)
and lists of wrappers
($\maxdeg{\sctx} \leq d$ implies $\maxdeg{\simpd{\sctx}} < d$).
The interesting cases are when $\tm$ is a variable or an application.
The remaining cases are straightforward by \ih.
If $\tm$ is a \textbf{variable}, $\tm = \var^\typ$,
then
  $
    \maxdeg{\simpd{\var^\typ}}
  = \maxdeg{\var^\typ}
  = 0
  < d
  $
  because $d \geq 1$ by hypothesis.
If $\tm$ is an \textbf{application}, $\tm = \app{\tmtwo}{\tmsthree}$,
we consider two cases, depending on whether $\tmtwo$ is a w-abstraction of degree $d$
or not:
  \begin{enumerate}
  \item
    If $\tmtwo = (\lam{\var^\typs}{\tmtwo'})\sctx$ is a w-abstraction of degree $d$:
    then note that $\tmtwo'$ is a subterm of $\tm$
    which implies that $\maxdeg{\tmtwo'} \leq \maxdeg{\tm} \leq d$,
    and this in turn implies by \ih that $\maxdeg{\simpd{\tmtwo'}} < d$.
    Similarly,
    we can note that $\maxdeg{\sctx} \leq d$ so by \ih $\maxdeg{\simpd{\sctx}} < d$,
    and that $\maxdeg{\tmsthree} \leq d$ so by \ih $\maxdeg{\simpd{\tmsthree}} < d$.
    Since the term is well-typed,
    the type of the w-abstraction $\tmtwo = (\lam{\var^\typs}{\tmtwo'})\sctx$
    is of the form $\typs\arrow\typtwo$,
    while the type of $\tmsthree$ is $\typs$.
    Moreover, $\height{\typs\arrow\typtwo} = d$ by hypothesis,
    so in particular $\height{\typs} < d$.
    From this we obtain by Lemma~\ref{max_degree_sub} that
    $\maxdeg{\simpd{\tmtwo'}\sub{\var^\typs}{\simpd{\tmsthree}}} < d$.
    Finally, we have:
    \[
      \begin{array}{rclcl}
        \maxdeg{\simpd{\app{\tmtwo}{\tmsthree}}}
      & = &
        \maxdeg{
          \simpd{\tmtwo'}\sub{\var^\typs}{\simpd{\tmsthree}}
                         \wrap{\simpd{\tmsthree}}
                         \simpd{\sctx}
        }
      \\
      & = &
        \maxof{
          \maxdeg{\simpd{\tmtwo'}\sub{\var^\typs}{\simpd{\tmsthree}}},
          \maxdeg{\simpd{\tmsthree}},
          \maxdeg{\simpd{\sctx}}
        }
        <
        d
      \end{array}
    \]
  \item
    If $\tmtwo$ is not a w-abstraction of degree $d$:
    since $\tmtwo$ is a subterm of $\tm$
    we have that $\maxdeg{\tmtwo} \leq \maxdeg{\tm} \leq d$
    so by \ih $\maxdeg{\simpd{\tmtwo}} < d$.
    Similarly, $\maxdeg{\tmsthree} \leq \maxdeg{\tm} \leq d$
    so by \ih $\maxdeg{\simpd{\tmsthree}} < d$.
    Let the type of $\tmtwo$ be of the form $\typs\arrow\typtwo$
    and let $k := \height{\typs\arrow\typtwo}$.
    We consider two cases, depending on whether $k < d$ or $k \geq d$.

      If $k = \height{\typs\arrow\typtwo} < d$:
      then by Lemma~\ref{term_reduces_to_simpd}
      we have $\tmtwo \toTm^* \simpd{\tmtwo}$
      and by Subject~Reduction~(\ref{CalcLamTm_subject_reduction})
      the type of $\simpd{\tmtwo}$ is also $\typs\arrow\typtwo$.
      Hence:
      \[
          \maxdeg{\simpd{\tm}}
        =
          \maxdeg{\simpd{\app{\tmtwo}{\tmsthree}}}
        =
          \maxdeg{\app{\simpd{\tmtwo}}{\simpd{\tmsthree}}}
        =
          \maxof{k,\maxdeg{\tmtwo},\maxdeg{\tmsthree}}
        <
          d
      \]

      On the other hand, if $k = \height{\typs\arrow\typtwo} \geq d$,
      we claim that $\tmtwo$ cannot be a w-abstraction.
      First, note by hypothesis that $\tmtwo$ is not a w-abstraction
      of degree $k = d$.
      Second, $\tmtwo$ cannot be a w-abstraction of degree $k > d$,
      as we would have that $\app{\tmtwo}{\tmsthree}$
      is a redex of degree $k$,
      so $\maxdeg{\app{\tmtwo}{\tmsthree}} \geq k$.
      This would imply that
      $d < k \leq \maxdeg{\app{\tmtwo}{\tmsthree}} \leq d$,
      a contradiction.
      Since $\tmtwo$ is not a w-abstraction,
      its type is $\typs\arrow\typtwo$ of height $k \geq d$,
      and $\maxdeg{\tmtwo} \leq d$,
      by Lemma~\ref{simplification_does_not_create_abstractions}
      we have that $\simpd{\tmtwo}$ is not a w-abstraction.
      Hence $\app{\simpd{\tmtwo}}{\simpd{\tmsthree}}$ is not a redex,
      and
      $
          \maxdeg{\simpd{\tm}}
        =
          \maxdeg{\simpd{\app{\tmtwo}{\tmsthree}}}
        =
          \maxdeg{\app{\simpd{\tmtwo}}{\simpd{\tmsthree}}}
        =
          \maxof{\maxdeg{\simpd{\tmtwo}},\maxdeg{\simpd{\tmsthree}}}
        <
          d
      $. \qedhere
  \end{enumerate}
\end{proof}

\begin{prop}[Full simplification yields the normal form]
\label{full_simplification_yields_nf}
Let $\tm \in \CalcLamTm$.
Then $\tm \toTm^* \simpfull{\tm}$
and $\simpfull{\tm}$ is in $\toTm$-normal form.
\end{prop}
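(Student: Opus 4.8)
The plan is to derive both conclusions from the two preceding lemmas by a single descending induction on the degree. Set $D \eqdef \maxdeg{\tm}$ and define the intermediate terms $\tm_D \eqdef \tm$ and $\tm_{d-1} \eqdef \simp{d}{\tm_d}$ for each $d$ ranging from $D$ down to $1$, so that by construction $\tm_0 = \simpfull{\tm}$.

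For the reduction $\tm \toTm^* \simpfull{\tm}$, I would simply chain the Soundness of simplification lemma (Lemma~\ref{term_reduces_to_simpd}): each step gives $\tm_d \toTm^* \simp{d}{\tm_d} = \tm_{d-1}$, and composing these along $d = D, D-1, \dots, 1$ yields $\tm = \tm_D \toTm^* \tm_0 = \simpfull{\tm}$ by transitivity of $\toTm^*$.

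For the normal-form claim, the key is the invariant $\maxdeg{\tm_d} \leq d$, proved by descending induction on $d$. The base case $\maxdeg{\tm_D} = D \leq D$ is immediate. For the step, assuming $\maxdeg{\tm_d} \leq d$, I invoke Lemma~\ref{simp_decreases_maxdeg} with this exact $d$ (its hypothesis being precisely the invariant) to conclude $\maxdeg{\tm_{d-1}} = \maxdeg{\simp{d}{\tm_d}} < d$, hence $\maxdeg{\tm_{d-1}} \leq d-1$. Taking $d = 0$ gives $\maxdeg{\simpfull{\tm}} = \maxdeg{\tm_0} = 0$. Finally, since the w-abstraction of any redex carries an arrow type, whose height is at least $1$, every redex has degree $\geq 1$; therefore $\maxdeg{\tm_0} = 0$ forces $\tm_0$ to contain no redexes, i.e.\ $\simpfull{\tm}$ is in $\toTm$-normal form.

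I do not expect a genuine obstacle here: the substantive content is already isolated in Lemmas~\ref{term_reduces_to_simpd} and~\ref{simp_decreases_maxdeg}, and this proposition is essentially a bookkeeping argument threading them together. The only point requiring care is ensuring that the hypothesis $\maxdeg{\tm_d} \leq d$ of the max-degree lemma is available at every stage of the chain; this is exactly what the descending induction invariant supplies, so that each successive simplification lowers the degree bound by one until it reaches $0$.
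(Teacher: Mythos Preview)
Your proposal is correct and follows essentially the same approach as the paper's own proof: the paper writes $\simp{>k}{\tm}$ for what you call $\tm_k$, proves the same invariant $\maxdeg{\simp{>k}{\tm}} \leq k$ by induction on $d-k$ (equivalent to your descending induction), and chains Lemma~\ref{term_reduces_to_simpd} identically for the reduction claim. Your closing observation that any redex has degree $\geq 1$ (so $\maxdeg{\tm_0}=0$ forces normality) is a small piece of justification the paper leaves implicit.
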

\begin{proof}
Let $d$ be the max-degree of $\tm$
and define
$\simp{>k}{\tm} \eqdef \simp{k+1}{\hdots\simp{d-1}{\simp{d}{\tm}}\hdots}$
for each $k$ such that $0 \leq k \leq d$.
Note that $\simp{>d}{\tm} = \tm$ and $\simp{>0}{\tm} = \simpfull{\tm}$.
To show that $\tm \toTm^* \simpfull{\tm}$, note that 
$\simp{>k}{\tm} \toTm^* \simp{k}{\simp{>k}{\tm}} = \simp{>k-1}{\tm}$ for each $1
\leq k \leq d$ by Lemma~\ref{term_reduces_to_simpd},
so:
  \[
    \tm = \simp{>d}{\tm}
    \toTm^* \simp{>d-1}{\tm}
    \toTm^* \hdots
    \toTm^* \simp{>k}{\tm}
    \toTm^* \hdots
    \toTm^* \simp{>0}{\tm}
    = \simpfull{\tm}
  \]
To show that $\simpfull{\tm}$ is a $\toTm$-normal form,
we claim that for each $0 \leq k \leq d$
we have that $\maxdeg{\simp{>k}{\tm}} \leq k$.
We proceed by induction on $d - k$.
If $d - k = 0$, we have that $k = d$,
so $\maxdeg{\simp{>d}{\tm}} = \maxdeg{\tm} = d$,
since $d$ is the max-degree of $\tm$.
If $d - k > 0$, we have that $0 \leq k < d$.
By \ih, $\maxdeg{\simp{>k+1}{\tm}} \leq k+1$.
Then $\maxdeg{\simp{>k}{\tm}}
= \maxdeg{\simp{k+1}{\simp{>k+1}{\tm}}}
< k+1$
by Lemma~\ref{simp_decreases_maxdeg}.
This means that $\maxdeg{\simp{>k}{\tm}} \leq k$, as required.
\qedhere
\end{proof}

\subsection{The decreasing measure}
\label{subsec:measure}

The $\meassym$-measure for $\CalcLamT$ is defined in two steps: computing the normal form in $\CalcLamTm$, and counting the number of wrappers in it. The first step was taken care of in the previous section.
The second one is simple, it just requires to analyze the resulting term of the full simplification.

The core of this section is proving that the measure decreases, \ie, $t \toT s$ implies $\meas t > \meas s$. The proof is carried out by observing how $\toT$ and $\toTm$ differ, and how wrappers in $\CalcLamTm$ are handled.
Confluence holds in $\CalcLamTm$, so terms share normal form, and therefore full simplification, with all their reducts.
The intuition behind the proof is that, if we have $t \toT s$, then the corresponding step $t \toTm s'$ is responsible for the presence of at least one wrapper in $\simpfull{t}$ that is absent from $\simpfull{s}$.

\begin{defn}[Forgetful reduction]
We define a binary relation $\tm \tof \tmtwo$
between typable terms as the closure by congruence
under arbitrary contexts of the axiom
$\tm\wrap{\tmstwo} \tof \tm$.
\end{defn}

\begin{rem} 
\label{forgetful_reduction_does_not_create_redexes}
If $\tm$ is in $\toTm$-normal form
and $\tm \tof \tmtwo$
then $\tmtwo$ is also in $\toTm$-normal form.
\end{rem}

\begin{rem} 
\label{forgetful_reduction_decreases_weight}
If $\tm \tof \tmtwo$
then $\weight{\tm} > \weight{\tmtwo}$.
\end{rem}

\begin{lem}[Reduce/forget lemma]
\label{reduce_forget_lemma}
Let $\tm \toT \tmtwo$ be a step in the $\CalcLamT$-calculus (without wrappers)
and consider its corresponding step $\tm \toTm \tmtwo'$.
Then $\tmtwo' \tof \tmtwo$.
\end{lem}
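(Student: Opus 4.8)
The plan is to prove the statement by induction on the position at which the redex is contracted in the step $\tm \toT \tmtwo$, equivalently by induction on the structure of $\tm$ following the congruence closure that defines $\toT$; as usual I would generalize the statement to set-terms, so that the case of reducing inside an element of a set-term is covered. The one preliminary observation I would record is that, on wrapper-free terms, the $\CalcLamT$ substitution and the $\CalcLamTm$ substitution coincide: the extra clause $(\tm\wrap{\tmsthree})\sub{\var^\typs}{\tmstwo} = \tm\sub{\var^\typs}{\tmstwo}\wrap{\tmsthree\sub{\var^\typs}{\tmstwo}}$ is never triggered when the body and the argument carry no wrappers. This guarantees that the term produced by the substitution in the $\CalcLamTm$-step is \emph{literally} the $\CalcLamT$-reduct, and not merely a forgetful-equivalent of it.

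For the base case the redex sits at the root, so $\tm = (\lam{\var^\typs}{\tm_0})\,\tmstwo$ and hence $\tmtwo = \tm_0\sub{\var^\typs}{\tmstwo}$. Since $\tm$ has no wrappers, the list $\sctx$ in the $\CalcLamTm$-rule is empty (just $\ctxhole$), and the corresponding step is $(\lam{\var^\typs}{\tm_0})\,\tmstwo \toTm \tm_0\sub{\var^\typs}{\tmstwo}\wrap{\tmstwo}$. By the preliminary observation this reduct equals $\tmtwo\wrap{\tmstwo}$, whence $\tmtwo' = \tmtwo\wrap{\tmstwo} \tof \tmtwo$ is exactly one application of the forgetful axiom $\tm\wrap{\tmstwo} \tof \tm$.

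In each inductive case the contracted redex lies inside a proper subterm, and both the $\CalcLamT$-step and its corresponding $\CalcLamTm$-step act inside that same subterm while leaving the surrounding structure unchanged. For instance, if $\tm = \app{\tm_1}{\tmsthree}$ with the redex inside $\tm_1$, then applying the induction hypothesis to $\tm_1 \toT \tm_1'$ together with its corresponding step $\tm_1 \toTm \tm_1''$ yields $\tm_1'' \tof \tm_1'$, and congruence of $\tof$ gives $\app{\tm_1''}{\tmsthree} \tof \app{\tm_1'}{\tmsthree}$, that is $\tmtwo' \tof \tmtwo$. The cases of reducing under an abstraction, inside the argument set-term of an application, or inside a single element of a set-term are handled identically, each closed by one use of the congruence of $\tof$.

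There is no real obstacle here; the points requiring care are conceptual rather than computational. First, one must pin down precisely what the ``corresponding step'' is: the same redex contracted by the $\CalcLamTm$-rule with an \emph{empty} wrapper list, so that the single newly created wrapper is exactly $\wrap{\tmstwo}$ for the contracted argument $\tmstwo$. Second, one must confirm that the two substitution operations agree on wrapper-free terms, which is what makes the wrapped body coincide with the $\CalcLamT$-reduct $\tmtwo$. Once these two facts are in place, the base case is a single forgetful step and all inductive cases are pure congruence, so the lemma follows.
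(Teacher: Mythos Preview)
Your proposal is correct and follows essentially the same approach as the paper, which simply states ``Straightforward by induction on $\tm$.'' Your write-up supplies the details the paper omits, including the key observations that $\sctx$ is necessarily empty and that the two substitution operations agree on wrapper-free terms, and correctly verifies that exactly one $\tof$-step arises in every case.
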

\begin{proof}
Straightforward by induction on $\tm$.
\end{proof}

\begin{lem}[Forgetful reduction commutes with reduction]
\label{forgetful_reduction_commutes_with_reduction}
If $\tm_1 \toTm^* \tm_2$ and $\tm_1 \tof^+ \tm_3$
there exists $\tm_4$ such that
$\tm_2 \tof^+ \tm_4$ and $\tm_3 \toTm^* \tm_4$.
\end{lem}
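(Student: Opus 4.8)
The statement is the commutation square
\[\diagramCommuteTmsFpTmsFp{\tm_1}{\tm_2}{\tm_3}{\tm_4}\]
and the plan is to obtain it from an elementary one-step-versus-one-step diagram by two successive tiling inductions. The heart of the argument is therefore the following elementary commutation property: if $\tm_1 \toTm \tm_2$ and $\tm_1 \tof \tm_3$, then there is $\tm_4$ with
\[\diagramCommuteTmFTmFp{\tm_1}{\tm_2}{\tm_3}{\tm_4}\]
I would prove this by induction on $\tm_1$. The congruence cases---where the contracted redex $(\lam{\var^\typs}{\tm})\sctx\,\tmstwo$ and the forgotten wrapper $\tmthree\wrap{\tmsfour}$ are disjoint, or lie strictly inside the same immediate subterm---follow from the induction hypothesis, so the real work is in the cases where the redex and the wrapper genuinely interact.

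There are three such interaction patterns. When the forgotten wrapper sits in the list $\sctx$, in the body $\tm$, or on the term carrying the wrapper, the memory step neither erases nor duplicates it, so the two steps commute with exactly one residual step on each side. When the redex occurs inside the discarded content $\tmsfour$ of the forgotten wrapper, forgetting \emph{erases} the redex: here $\tm_3 = \tm_4$, the bottom edge has length zero, and $\tm_2 \tof \tm_4$ closes the square; this is precisely why the bottom edge must be $\toTm^=$ rather than $\toTm$. Finally, when the wrapper lies inside the argument $\tmstwo$, the memory step copies $\tmstwo$---once into the fresh wrapper $\wrap{\tmstwo}$ and once for each occurrence of the substituted variable in $\tm$---so a single forgetting step on the left must be matched by $N \geq 1$ forgetting steps on the right; this is precisely why the right edge must be $\tof^+$. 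In every case $N \geq 1$, so the right edge is genuinely nonempty, as the statement demands.

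With the elementary diagram in hand I would lift it in two stages. First, keeping a single $\toTm$-step on top, I would prove by induction on the length $k$ of $\tm_1 \tof^+ \tm_3$ that the left edge may be strengthened to $\tof^+$, obtaining a square with a single $\toTm$ on top, $\tof^+$ on both vertical edges, and $\toTm^*$ on the bottom. Peeling off one $\tof$-step, applying the elementary diagram, and then invoking the induction hypothesis suffices; the one delicate point is that when the elementary tile produces a zero-length bottom edge the two vertical columns merge, and one closes the square by taking $\tm_4 := \tm_3$, which still leaves at least the first block of forgetting steps on the right so that $\tof^+$ is preserved. Second, I would induct on the length $n$ of $\tm_1 \toTm^* \tm_2$, using the strengthened square as the elementary tile and composing the $\toTm^*$ bottom edges; the base case $n = 0$ is closed by $\tm_4 := \tm_3$. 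The outcome of this second induction is exactly the claimed square.

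The main obstacle is the elementary diagram of the first step, and within it the interplay between the argument-duplication case, which forces the right edge to be $\tof^+$, and the erasure case, which forces the bottom edge to be $\toTm^=$. The subsequent tilings are routine in spirit, but they must be carried out with care precisely because these possibly-empty bottom edges are allowed to collapse the columns; the key bookkeeping invariant to maintain throughout is that at least one forgetting step always survives on the right, so that the $\tof^+$ guarantee is never endangered even when a $\toTm$-column contracts to nothing.
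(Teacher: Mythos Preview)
Your proposal is correct and follows essentially the same approach as the paper: reduce to the local one-step diagram (single $\toTm$ against single $\tof$, closing with $\tof^+$ on the right and $\toTm^=$ on the bottom), prove it by induction on $\tm_1$, and then tile. The paper merely states ``it suffices to prove'' the local diagram and carries out the case analysis, whereas you additionally spell out the two tiling inductions; your treatment of the collapsing-column case (when the bottom $\toTm^=$ edge is empty) is correct and is indeed the only subtlety in the tiling. One minor imprecision: in the duplication case, substitution in $\CalcLamTm$ is type-directed, so a wrapper inside a particular $\tmtwo_i \in \tmstwo$ is copied only at occurrences of $\var^{\typ_i}$, not at every occurrence of $\var$; your lower bound $N \geq 1$ is nonetheless guaranteed by the fresh wrapper $\wrap{\tmstwo}$, so the argument stands.
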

\begin{proof}
It suffices to prove a local commutation result, namely that
if $\tm_1 \toTm \tm_2$ and $\tm_1 \tof \tm_3$
there exists $\tm_4$ such that
$\tm_2 \tof^+ \tm_4$ and $\tm_3 \toTm^= \tm_4$.
Graphically:
\[
  \diagramCommuteTmFTmFp{\tm_1}{\tm_2}{\tm_3}{\tm_4}
\]
We proceed by induction on $\tm_1$.
\begin{enumerate}
\item $\tm_1 = \var^\typs$: this case is impossible, since $\var$ does not reduce.
\item $\tm_1 = \lam{\var^\typs}{\tm_1'}$:
  then $\tm_2 = \lam{\var^\typs}{\tm_2'}$ and $\tm_3 = \lam{\var^\typs}{\tm_3'}$,
  with $\tm_1' \toTm \tm_2'$ and $\tm_1' \tof \tm_3'$. By \ih there exists
  $\tm_4'$ s.t. $\tm_3' \toTm^= \tm_4'$ and $\tm_2' \tof^+ \tm_4'$.
  By congruence, the following holds
  \[
    \diagramCommuteTmFTmFp
      {\lam{\var^\typs}{\tm_1'}}
      {\lam{\var^\typs}{\tm_2'}}
      {\lam{\var^\typs}{\tm_3'}}
      {\lam{\var^\typs}{\tm_4'}}
  \]
\item $\tm_1 = \app{\tm_{11}}{\tms_{12}}$. We consider two subcases here,
  depending on $\tm_{11}$ being a w-abstraction and on which redex is contracted.
  \begin{enumerate}
\item $\tm_1 = \app{(\lam{\var^\typs}{\tm_{11}})\sctx}\tms_{12}$ and the
  contracted redex is at head position. Then
    \[
    \xymatrix@C=.2em@R=.2em{
      \app{(\lam{\var^\typs}{\tm_{11}})\sctx}\tms_{12} &
      \toTm &
      \tm_{11}\sub{\var^\typs}{\tms_{12}}\wrap{\tms_{12}}\sctx \\
      \tofvert  & & \\
      \app{(\lam{\var^\typs}{\tm_{11}'})\sctx'}\tms_{12}' &  &  \\
    } \]
    where since there is a single $\tof$-step, only one of $\tm_{11}'$,
    $\sctx'$, or $\tms_{12}'$, can be different from $\tm_{11}$, $\sctx$, or
    $\tms_{12}$, respectively. If the $\tof$-step occurs in $\tm_{11}$ or
    $\sctx$, then the same step can be reproduced from $\tm_2$. Else,
    considering $\tms_{12}$ can appear more than once in $\tm_2$, the
    $\tof$-step may have to be reproduced several times, hence the $\tof^+$.
    Then we have that
    \[
      \diagramCommuteTmFTmOneFp
      {\app{(\lam{\var^\typs}{\tm_{11}})\sctx}\tms_{12}}
      {\tm_{11}\sub{\var^\typs}{\tms_{12}}\wrap{\tms_{12}}\sctx}
      {\app{(\lam{\var^\typs}{\tm_{11}'})\sctx'}\tms_{12}'}
      {\tm_{11}'\sub{\var^\typs}{\tms_{12}'}\wrap{\tms_{12}'}\sctx'}
    \]
  \item $\tm_{11}$ is not a w-abstraction or the contracted redex is not the one
    at head position. We consider two subcases.
    \begin{enumerate}
    \item Both steps occur in the same subterm. It suffices to resort to the \ih:\\
      \begin{minipage}{.45\textwidth}
        \begin{displaymath}
          \diagramCommuteTmFTmFp
          {\tm_{11}\tms_{12}}
          {\tm_{11}'\tms_{12}}
          {\tm_{11}''\tms_{12}}
          {\tm_{11}'''\tms_{12}}
        \end{displaymath}
      \end{minipage}
      \begin{minipage}{.45\textwidth}
        \begin{displaymath}
          \diagramCommuteTmFTmFp
          {\tm_{11}\tms_{12}}
          {\tm_{11}\tms_{12}'}
          {\tm_{11}\tms_{12}''}
          {\tm_{11}\tms_{12}'''}
        \end{displaymath}
      \end{minipage}
  \item
    The steps occur in different subterms.
    Then the diagram can be closed immediately:
    \begin{minipage}{.45\textwidth}
      \begin{displaymath}
        \diagramCommuteTmFTmOneFOne
        {\tm_{11}\tms_{12}}
        {\tm_{11}'\tms_{12}}
        {\tm_{11}\tms_{12}'}
        {\tm_{11}'\tms_{12}'}
      \end{displaymath}
    \end{minipage}
    \begin{minipage}{.45\textwidth}
    \begin{displaymath}
      \diagramCommuteTmFTmOneFOne
        {\tm_{11}\tms_{12}}
        {\tm_{11}\tms_{12}'}
        {\tm_{11}'\tms_{12}}
        {\tm_{11}'\tms_{12}'}
      \end{displaymath}
    \end{minipage}\\
  \end{enumerate}
\end{enumerate}
\item $\tm_1 = \tm_{11}\wrap{\tms_{12}}$. We consider three general subcases:
  \begin{enumerate}
  \item Both steps occur in the same subterm. It suffices to resort to the \ih: \\
    \begin{minipage}{.45\textwidth}
      \begin{displaymath}
        \diagramCommuteTmFTmFp
        {\tm_{11}\wrap{\tms_{12}}}
        {\tm_{11}'\wrap{\tms_{12}}}
        {\tm_{11}''\wrap{\tms_{12}}}
        {\tm_{11}'''\wrap{\tms_{12}}}
      \end{displaymath}
    \end{minipage}
    \begin{minipage}{.45\textwidth}
      \begin{displaymath}
        \diagramCommuteTmFTmFp
        {\tm_{11}\wrap{\tms_{12}}}
        {\tm_{11}\wrap{\tms_{12}'}}
        {\tm_{11}\wrap{\tms_{12}''}}
        {\tm_{11}\wrap{\tms_{12}'''}}
      \end{displaymath}
    \end{minipage}
  \item The steps occur in different subterms. Then the diagram can be closed immediately:
    \begin{minipage}{.45\textwidth}
      \begin{displaymath}
        \diagramCommuteTmFTmOneFOne
        {\tm_{11}\wrap{\tms_{12}}}
        {\tm_{11}'\wrap{\tms_{12}}}
        {\tm_{11}\wrap{\tms_{12}'}}
        {\tm_{11}'\wrap{\tms_{12}'}}
      \end{displaymath}
    \end{minipage}
    \begin{minipage}{.45\textwidth}
      \begin{displaymath}
        \diagramCommuteTmFTmOneFOne
        {\tm_{11}\wrap{\tms_{12}}}
        {\tm_{11}\wrap{\tms_{12}'}}
        {\tm_{11}'\wrap{\tms_{12}}}
        {\tm_{11}'\wrap{\tms_{12}'}}
      \end{displaymath}
    \end{minipage}\\
  \item The $\tof$-step occur at the root. We perform the $\toTm$ step, or not.\\
    \begin{minipage}{.45\textwidth}
      \begin{displaymath}
        \diagramCommuteTmFTmOneFOne
        {\tm_{11}\wrap{\tms_{12}}}
        {\tm_{11}'\wrap{\tms_{12}}}
        {\tm_{11}}
        {\tm_{11}'}
      \end{displaymath}
    \end{minipage}
    \begin{minipage}{.45\textwidth}
      \begin{displaymath}
        \diagramCommuteTmFEqFOne
        {\tm_{11}\wrap{\tms_{12}}}
        {\tm_{11}\wrap{\tms_{12}'}}
        {\tm_{11}}
        {\tm_{11}}
      \end{displaymath}
    \end{minipage}\\
  \end{enumerate}
\item $\tm_1 = \set{\tmtwo_1,\hdots,\tmtwo_n}$. We consider two subcases.
  \begin{enumerate}
  \item Both steps occur in the same subterm. It suffices to resort to the \ih:
  \[
    \diagramCommuteTmFTmFp
    {\set{\tmtwo_1,\hdots,\tmtwo_i,\hdots,\tmtwo_n}}
    {\set{\tmtwo_1,\hdots,\tmtwo_i',\hdots,\tmtwo_n}}
    {\set{\tmtwo_1,\hdots,\tmtwo_i'',\hdots,\tmtwo_n}}
    {\set{\tmtwo_1,\hdots,\tmtwo_i''',\hdots,\tmtwo_n}}
  \]
\item The steps occur in different subterms. Then the diagram can be closed immediately.
      Indeed, let $i,j \in 1..n$:
  \[
    \diagramCommuteTmFTmOneFOne
    {\set{\tmtwo_1,\hdots,\tmtwo_i,\hdots,\tmtwo_j,\hdots,\tmtwo_n}}
    {\set{\tmtwo_1,\hdots,\tmtwo_i',\hdots,\tmtwo_j,\hdots,\tmtwo_n}}
    {\set{\tmtwo_1,\hdots,\tmtwo_i,\hdots,\tmtwo_j',\hdots,\tmtwo_n}}
    {\set{\tmtwo_1,\hdots,\tmtwo_i',\hdots,\tmtwo_j',\hdots,\tmtwo_n}}
  \]
  \end{enumerate}
\end{enumerate}
\end{proof}

\begin{lem}[Properties of the full simplification]
\label{simpfull_properties}
\quad\\
\itemNumber{1.} If $\tm \toTm \tmtwo$ then $\simpfull{\tm} = \simpfull{\tmtwo}$.
\itemNumber{2.} If $\tm \tof \tmtwo$ then $\simpfull{\tm} \tof^+ \simpfull{\tmtwo}$.
\end{lem}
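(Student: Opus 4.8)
The plan is to derive both parts from the single fact, established in Proposition~\ref{full_simplification_yields_nf}, that $\simpfull{-}$ computes the $\toTm$-normal form, combined with confluence (Proposition~\ref{LamTm:confluence}) and the already-proven commutation result Lemma~\ref{forgetful_reduction_commutes_with_reduction}. The key observation underlying everything is that, in a confluent system, any two normal forms reachable from a common term must coincide: if $\tm \toTm^* a$ and $\tm \toTm^* b$ with $a,b$ both normal, confluence gives a common reduct $c$, and since $a$ and $b$ are normal we get $a = c = b$.

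For part 1, I would argue purely by this uniqueness. From $\tm \toTm \tmtwo$ and Proposition~\ref{full_simplification_yields_nf} we have $\tm \toTm^* \simpfull{\tm}$ and $\tmtwo \toTm^* \simpfull{\tmtwo}$, with both $\simpfull{\tm}$ and $\simpfull{\tmtwo}$ in $\toTm$-normal form. Composing the reductions gives $\tm \toTm^* \simpfull{\tmtwo}$, so $\tm$ reduces to the two normal forms $\simpfull{\tm}$ and $\simpfull{\tmtwo}$. By the uniqueness observation, $\simpfull{\tm} = \simpfull{\tmtwo}$.

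For part 2, the idea is to push the single forgetful step $\tm \tof \tmtwo$ through the reduction $\tm \toTm^* \simpfull{\tm}$ using Lemma~\ref{forgetful_reduction_commutes_with_reduction}. Instantiating that lemma with $\tm_1 = \tm$, $\tm_2 = \simpfull{\tm}$ and $\tm_3 = \tmtwo$ (legitimate, since $\tm \tof \tmtwo$ yields $\tm \tof^+ \tmtwo$), I obtain a term $\tmfour$ with $\simpfull{\tm} \tof^+ \tmfour$ and $\tmtwo \toTm^* \tmfour$. The crucial step is to check that $\tmfour$ is itself in $\toTm$-normal form: $\simpfull{\tm}$ is normal by Proposition~\ref{full_simplification_yields_nf}, and Remark~\ref{forgetful_reduction_does_not_create_redexes} tells us that forgetful steps preserve normality, so the whole chain $\simpfull{\tm} \tof^+ \tmfour$ stays within normal forms. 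Thus $\tmfour$ is a normal form reachable from $\tmtwo$; since $\simpfull{\tmtwo}$ is also a normal form of $\tmtwo$, the uniqueness observation gives $\tmfour = \simpfull{\tmtwo}$, and hence $\simpfull{\tm} \tof^+ \simpfull{\tmtwo}$, as required.

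I do not expect a genuine obstacle, as the heavy lifting is already packaged in the earlier results: the interaction of $\toTm$ and $\tof$ lives in Lemma~\ref{forgetful_reduction_commutes_with_reduction}, and the characterization of $\simpfull{-}$ as a normal-form computation lives in Proposition~\ref{full_simplification_yields_nf}. The only points requiring care are orienting Lemma~\ref{forgetful_reduction_commutes_with_reduction} correctly (reduction on top, forgetful on the side) and remembering to invoke Remark~\ref{forgetful_reduction_does_not_create_redexes} to certify that $\tmfour$ is normal before identifying it with $\simpfull{\tmtwo}$ through confluence.
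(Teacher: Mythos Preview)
Your proposal is correct and follows essentially the same approach as the paper's own proof: both parts are obtained from Proposition~\ref{full_simplification_yields_nf}, Confluence (Proposition~\ref{LamTm:confluence}), Lemma~\ref{forgetful_reduction_commutes_with_reduction}, and Remark~\ref{forgetful_reduction_does_not_create_redexes}, in the same order and with the same instantiations.
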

\begin{proof}
We prove each item separately.
  First,
  suppose that $\tm \toTm \tmtwo$.
  By the fact that full simplification yields the normal form
  of a term (Proposition~\ref{full_simplification_yields_nf}),
  we know that $\tm \toTm^* \simpfull{\tm}$
  and $\tmtwo \toTm^* \simpfull{\tmtwo}$,
  where both $\simpfull{\tm}$ and $\simpfull{\tmtwo}$ are
  in $\toTm$-normal form.
  By Confluence (Proposition~\ref{LamTm:confluence})
  we conclude that $\simpfull{\tm} = \simpfull{\tmtwo}$.

  Second, suppose that $\tm \tof \tmtwo$.
  By the fact that full simplification yields the normal form
  of a term~(Proposition~\ref{full_simplification_yields_nf}),
  we know that $\tm \toTm^* \simpfull{\tm}$,
  where $\simpfull{\tm}$ is in $\toTm$-normal form.
  Since forgetful reduction commutes
  with reduction (Lemma~\ref{forgetful_reduction_commutes_with_reduction}),
  we have that there exists a term $\tmthree \in \CalcLamTm$
  such that $\tmtwo \toTm^* \tmthree$ and $\simpfull{\tm} \tof^+ \tmthree$.
  By Remark~\ref{forgetful_reduction_does_not_create_redexes},
  since $\simpfull{\tm}$ is in $\toTm$-normal form,
  $\tmthree$ is also in $\toTm$-normal form.
  Moreover, $\tmtwo \toTm^* \simpfull{\tmtwo}$
  where $\simpfull{\tmtwo}$ is in $\toTm$-normal form by
  Proposition~\ref{full_simplification_yields_nf}.
  So by Confluence (Proposition~\ref{LamTm:confluence})
  we have that $\simpfull{\tmtwo} = \tmthree$.
  Then we have that
  $\simpfull{\tm} \tof^+ \tmthree = \simpfull{\tmtwo}$,
  as required. \qedhere
\end{proof}

\begin{defn}[The $\meassym$-measure]
  The measure of a term $\tm \in \CalcLamT$
  is defined as the weight of its full simplification,
  where we recall that the weight is the number of wrappers
  in a term:
  \[
    \meas{\tm} \eqdef \weight{\simpfull{\tm}}
  \]
\end{defn}

\begin{thm}[The $\meassym$-measure is decreasing]
Let $\tm,\tmtwo \in \CalcLamT$ be typable terms
of the $\CalcLamT$-calculus (without wrappers).
If $\tm \toT \tmtwo$ then $\meas{\tm} > \meas{\tmtwo}$.
\end{thm}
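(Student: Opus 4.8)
The plan is to transfer the single $\CalcLamT$-step into the memory calculus $\CalcLamTm$, where full simplification and wrapper counting live, and then chain together the three auxiliary results just established. First I would observe that, since $\tm$ contains no wrappers, the step $\tm \toT \tmtwo$ has a \emph{corresponding} step $\tm \toTm \tmtwo'$ in $\CalcLamTm$ contracting the same redex but creating exactly one wrapper (as noted in the Remark preceding Proposition~\ref{CalcLamTm_subject_reduction}). The intuition is that this extra wrapper is precisely the one that will survive full simplification and witness the strict decrease.

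Next I would apply the Reduce/forget lemma (Lemma~\ref{reduce_forget_lemma}) to obtain $\tmtwo' \tof \tmtwo$, which expresses that $\tmtwo'$ differs from the wrapper-free reduct $\tmtwo$ only by the single freshly created wrapper. Now I bring in the two properties of full simplification (Lemma~\ref{simpfull_properties}): from $\tm \toTm \tmtwo'$ I get $\simpfull{\tm} = \simpfull{\tmtwo'}$ by item 1, and from $\tmtwo' \tof \tmtwo$ I get $\simpfull{\tmtwo'} \tof^+ \simpfull{\tmtwo}$ by item 2. Composing these yields the central chain
\[
  \simpfull{\tm} = \simpfull{\tmtwo'} \tof^+ \simpfull{\tmtwo}.
\]

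Finally I would conclude by weight counting: since a forgetful step strictly decreases the number of wrappers (Remark~\ref{forgetful_reduction_decreases_weight}) and $\tof^+$ is a non-empty sequence of such steps, it follows that $\weight{\simpfull{\tm}} > \weight{\simpfull{\tmtwo}}$, which is exactly $\meas{\tm} > \meas{\tmtwo}$ by definition of the measure.

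I expect no single step at this level to be a serious obstacle, since the genuine work has been front-loaded into the lemmas; the delicate points they encapsulate are (a) that full simplification is invariant under $\toTm$ (relying on confluence and on full simplification yielding the normal form) and (b) that a forgetful step is mirrored, after simplification, by at least one forgetful step (Lemma~\ref{forgetful_reduction_commutes_with_reduction}). The one thing to watch is that the $\tof^+$ in the chain is genuinely non-empty, so that the inequality is strict rather than weak; this is guaranteed because item 2 of Lemma~\ref{simpfull_properties} produces $\tof^+$, not $\tof^*$, from a single $\tof$-step, and that single step exists precisely because the corresponding $\CalcLamTm$-step creates one wrapper.
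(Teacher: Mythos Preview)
Your proposal is correct and follows essentially the same approach as the paper's own proof: take the corresponding $\toTm$-step, apply the reduce/forget lemma, use both parts of Lemma~\ref{simpfull_properties} to obtain $\simpfull{\tm} \tof^+ \simpfull{\tmtwo}$, and conclude by Remark~\ref{forgetful_reduction_decreases_weight}. Your added commentary on why the $\tof^+$ is genuinely non-empty is accurate and helpful, but the underlying argument is identical.
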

\begin{proof}
Let $\tm \toT \tmtwo$
and consider the corresponding step $\tm \toTm \tmtwo'$.
By the reduce/forget lemma (\ref{reduce_forget_lemma})
we have that $\tmtwo' \tof \tmtwo$.
By Lemma~\ref{simpfull_properties} (\lemmaPart{1})
we have that $\simpfull{\tm} = \simpfull{\tmtwo'}$.
By Lemma~\ref{simpfull_properties} (\lemmaPart{2})
we have that $\simpfull{\tmtwo'} \tof^+ \simpfull{\tmtwo}$.
Hence $\simpfull{\tm} \tof^+ \simpfull{\tmtwo}$.
By Remark~\ref{forgetful_reduction_decreases_weight}
we conclude that
  $
    \meas{\tm}
  = \weight{\simpfull{\tm}}
  > \weight{\simpfull{\tmtwo}}
  = \meas{\tmtwo}
  $.
\end{proof}

\begin{cor}\label{cor:SN}
$\CalcLamT$ is strongly normalizing.
\end{cor}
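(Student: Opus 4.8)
The plan is to observe that this corollary is an immediate consequence of the preceding theorem, combined only with the well-foundedness of the natural numbers. First I would note that the measure $\meas{-}$ assigns to every typable term $\tm \in \CalcLamT$ a \emph{natural number}: by definition $\meas{\tm} = \weight{\simpfull{\tm}}$ counts the wrappers occurring in the term $\simpfull{\tm}$, and since this is the (finite) full simplification of $\tm$, that count is a well-defined non-negative integer. So $\meassym$ is a total function $\CalcLamT \to \Nat$.

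The key input is the preceding theorem, which states that the measure is strictly decreasing: whenever $\tm \toT \tmtwo$ holds between typable terms, we have $\meas{\tm} > \meas{\tmtwo}$. Before invoking it along an arbitrary reduction sequence, I would record that by Subject Reduction (Proposition~\ref{prop:sub-red}) every $\toT$-reduct of a typable term is again typable. Hence the measure is defined at every term appearing in a reduction sequence starting from a typable term, and the strict-decrease property applies at each individual step.

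Then I would argue by contradiction. Suppose there were an infinite $\toT$-reduction sequence $\tm_0 \toT \tm_1 \toT \tm_2 \toT \cdots$ issuing from a typable term $\tm_0 \in \CalcLamT$. By the remark above, every $\tm_i$ is typable, so applying the measure yields an infinite sequence of natural numbers $\meas{\tm_0} > \meas{\tm_1} > \meas{\tm_2} > \cdots$, strictly decreasing by the measure-decreasing theorem. This contradicts the well-foundedness of $(\Nat, <)$, since there is no infinite strictly descending chain of natural numbers. Therefore no infinite $\toT$-sequence from a typable term can exist, which is precisely the statement that $\CalcLamT$ is strongly normalizing.

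I do not expect a genuine obstacle here: all of the real difficulty has already been discharged in establishing that $\meassym$ strictly decreases, which rests on the full-simplification machinery (Proposition~\ref{full_simplification_yields_nf}), the reduce/forget lemma (Lemma~\ref{reduce_forget_lemma}), and the properties of full simplification (Lemma~\ref{simpfull_properties}). The only point demanding a sentence of care is confirming that the measure stays well-defined throughout an entire reduction sequence, which I would justify by a single appeal to Subject Reduction as indicated above.
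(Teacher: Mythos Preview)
Your proposal is correct and matches the paper's intent: the paper states this corollary with no proof, treating it as immediate from the preceding theorem that $\meassym$ strictly decreases along $\toT$. Your explicit unpacking via well-foundedness of $\Nat$ and the appeal to Subject Reduction to keep the measure defined along a reduction sequence is exactly the standard justification the paper leaves implicit.
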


\section{Conclusion}
\label{concl}
In this section we provide a proof of the fact that $\CalcLame$ characterizes strong normalization, based on the correspondence between it and $\CalcLamT$. This proof justifies the design of $\CalcLame$, showing that it is a variant of the original system in \cite{DBLP:journals/ndjfl/CoppoD80}, w.r.t. the typability power.
Note that we make use of $\CalcLamT$ to prove the second item for uniformity purposes; it can also be proved entirely within $\CalcLame$.
\begin{cor}
$\CalcLame$ characterizes strong normalization.
\end{cor}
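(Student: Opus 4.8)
The plan is to prove the two implications making up the characterization separately: that typability in $\CalcLame$ implies strong normalization, and conversely that every strongly normalizing $\lambda$-term is typable in $\CalcLame$. Both directions will go through the bridge between $\CalcLame$ and $\CalcLamT$ furnished by the refinement relation, so that the substantial work already done for the intrinsically typed system---namely strong normalization of $\CalcLamT$ (Corollary~\ref{cor:SN}) and typability of SN terms (Lemma~\ref{lem:strong})---can be reused almost verbatim. The central tool in each case is the equivalence from the corollary of Theorem~\ref{lami_lamT_correspondence}, which says that $\judge{\tctx}{\utm}{\typ}$ holds for some $\tctx,\typ$ if and only if there is a term $\tm \in \CalcLamT$ with $\judgRefine{\tm}{\utm}$.

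For the first item, ``typable implies SN'', I would start from a $\CalcLame$-derivation of $\judge{\tctx}{\utm}{\typ}$ and use the corollary of Theorem~\ref{lami_lamT_correspondence} to obtain $\tm \in \CalcLamT$ with $\judgRefine{\tm}{\utm}$. Arguing by contradiction, suppose $\utm$ admits an infinite $\beta$-reduction $\utm = \utm_0 \tobeta \utm_1 \tobeta \cdots$. Applying part~1 of the Simulation theorem~(\ref{lami_lamT_simulation}) repeatedly, starting from $\tm_0 = \tm$, I would construct terms $\tm_0, \tm_1, \dots \in \CalcLamT$ with $\tm_i \toT^+ \tm_{i+1}$ and $\judgRefine{\tm_{i+1}}{\utm_{i+1}}$ for every $i$. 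Concatenating these segments yields an infinite $\toT$-reduction out of $\tm$, contradicting Corollary~\ref{cor:SN}. Hence $\utm$ must be strongly normalizing.

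For the second item, ``SN implies typable'', I would take a strongly normalizing $\utm \in \CalcLam$ and apply Lemma~\ref{lem:strong} to obtain a typable term $\tm \in \CalcLamT$ with $\judgRefine{\tm}{\utm}$. This is exactly the second condition of the corollary of Theorem~\ref{lami_lamT_correspondence}, so $\utm$ is typable in $\CalcLame$; concretely, since $\judgRefine{\tm}{\utm}$ witnesses that $\tm$ is uniform with $\typerase{\tm} = \utm$, one may also obtain the $\CalcLame$-derivation explicitly by feeding the $\CalcLamT$-derivation through part~\ref{corr-two} of the Correspondence theorem. As the accompanying remark notes, this detour through $\CalcLamT$ is used only to manufacture a uniform witness, and the same derivation could instead be built directly by induction on the inductive characterization of SN used in Lemma~\ref{lem:strong}.

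The argument as a whole is short because all the difficulty has been front-loaded into the earlier results. The only point that genuinely requires care is the lifting of a diverging computation in the first direction: it is essential that each $\beta$-step be simulated by a \emph{strictly positive} number of $\toT$-steps (the $\toT^+$ in the Simulation theorem), since this is precisely what prevents an infinite $\beta$-sequence from collapsing to a finite $\toT$-sequence and thereby guarantees that the contradiction with SN of $\CalcLamT$ actually arises. Everything else is routine bookkeeping over previously established facts.
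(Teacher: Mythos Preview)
Your proposal is correct and follows essentially the same route as the paper: both directions go through the refinement bridge, using the Correspondence theorem and Lemma~\ref{lem:strong} on the typability side, and Corollary~\ref{cor:SN} together with the $\toT^+$ clause of the Simulation theorem on the normalization side. Your emphasis on the strictness of the $\toT^+$ simulation is exactly the point that makes the contradiction go through, and your two alternative ways of concluding the second direction (via the corollary of Theorem~\ref{lami_lamT_correspondence} or explicitly via part~\ref{corr-two}) both match what the paper does.
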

\begin{proof}
\begin{itemize}
\item Let $\utm \in \CalcLam$ be a $\lambda$-term typable in the $\CalcLame$
system, \ie such that $\judge{\tctx}{\utm}{\typ}$.
By the Correspondence Theorem~(\ref{lami_lamT_correspondence}.\ref{corr-one}),
there is a term $\tm \in \CalcLamT$
typable in the $\CalcLamT$ system such that
$\judgT{\tctx}{\tm}{\typ}$ and $\judgRefine{\tm}{\utm}$.
By Corollary~\ref{cor:SN}, $\tm$ is strongly normalizing.
From the Simulation Theorem~(\ref{lami_lamT_simulation}), we can observe that
an infinite reduction sequence
$\utm \tobeta \utm_1 \tobeta \utm_2 \tobeta \hdots$
would induce an infinite reduction sequence
$\tm \toT^+ \tm_1 \toT^+ \tm_2 \toT^+ \hdots$, and this would contradict strong normalization of $\CalcLamT$.
Hence $\utm$ is strongly normalizing.
\item Let $\utm\in \CalcLam$ be strongly normalizing. By Lemma \ref{lem:strong}, there is 
$\tm \in \CalcLamT$ such that
  $\judgRefine{\tm}{\utm}$ and $\judgT{\tctx}{\tm}{\typ}$,
  for some $\tctx$ and $\typ$. Then $\tm$ is uniform, and, by Theorem \ref{lami_lamT_correspondence}.\ref{corr-two},
 $\judgT{\tctx}{\tm}{\typ}$ implies $\judge{\tctx}{\typerase{\tm}}{\typ}$. 
 Since $\typerase{\tm} =\utm$, the proof is given.
\end{itemize}
\end{proof}

As we said in Section~\ref{intro}, this is not a completely new result.
The novelty of our work comes from the technique we propose and the simplicity of its output.
The strong normalization property for idempotent intersection type systems has already been
proved in various papers using both semantical and syntactical approaches, and the fact that $\CalcLame$
is a variant of them can be considered folklore. 
The semantical approach relies on arguments like computability or reducibility
candidates (\eg~\cite{Pottinger80,barendregt2013lambda}), while the syntactical
ones are based on a measure that decreases with the $\beta$-reduction; as far as we
know, there are three other syntactical proofs
\cite{DBLP:conf/lics/KfouryW95,DBLP:conf/tlca/Boudol03,DBLP:journals/mscs/BucciarelliPS03}.
Here we supply a further syntactical proof, which uses as key ingredient a
Church version of idempotent intersection types that has good proof-theoretical
properties. The notion of measure obtained is simpler
than~\cite{DBLP:conf/lics/KfouryW95,DBLP:conf/tlca/Boudol03}, being a natural
number, and operates directly within the intersection system, allowing for
refinement where measures based on the translation
in~\cite{DBLP:journals/mscs/BucciarelliPS03} are constrained.

In the future, we would like to explore the possibility of providing an exact
decreasing measure, \ie a measure whose result is an exact upper bound for the
length of the longest reduction chain starting from the term. While this
refinement can be easily done for simple types, through a modification of the
method in \cite{BarenbaumS23}, the extension to intersection types presents
challenges. In this case, the number of reduction steps in the derivation tree
in general is bigger than the number of $\beta$-reductions in the term, so it would
be necessary to change the operational behaviour of $\CalcLamT$.

Moreover, it would be interesting to study if the techniques given in this paper
can be adapted to intersection type systems that characterize solvable terms,
such as that of~\cite{CoppoDV81}, for which no syntactic proofs of normalization
have been proposed, as far as we know.

\bibliographystyle{./entics}
\bibliography{biblio}

\appendix

\section{Technical appendix}

\subsection{Proofs of Section~\ref{lambdaT} --- An intrinsically typed presentation of idempotent intersection types}
\label{appendix:lambdaT}

In this section we give detailed proofs of the results
about the $\CalcLamT$-calculus stated in Section~\ref{lambdaT}.

\subsubsection{Subject reduction}
\label{appendix:section:CalcLamT_subject_reduction}

\subjectreduction*

\begin{proof}
We generalize the statement, proving also that
if $\tms \toT \tms'$ and $\judgTs{\tctx}{\tms}{\typs}$
then $\judgTs{\tctx}{\tms'}{\typs}$.
The proof proceeds by induction on the definition of  $\toT$.
\begin{enumerate}
\item If $\tm$ is a redex, i.e, $\tm=\app{(\lam{\var^\typstwo}{\tmtwo})}{\tmsthree}$, then the derivation proving 
$\judgT{\tctx}{\tm}{\typ}$ is:
\[
\infer{\judgT{\tctx}{\app{(\lam{\var^\typstwo}{\tmtwo})}{\tmsthree}}{\typ}}
{\infer{\judgT{\tctx}{\lam{\var^\typstwo}{\tmtwo}}{\typstwo \arrow \typ}}{\judgT{\tctx, \var:\typstwo}{\tmtwo}{ \typ}} & \infer{\judgTs{\tctx}{\tmsthree}{\typstwo}}{}}
\]
By Substitution Lemma \ref{lem:sub}, $\judgT{\tctx, \var:\typstwo}{\tmtwo}{ \typ}$ implies 
$\judgT{\tctx}{\tmtwo{\sub{\var^\typstwo}{\tmsthree}}}{ \typ}$. \\
Otherwise, it can be $\tm = \lam{\var^\typ}{\tmtwo}$, where $\tmtwo \toT \tmtwo'$, or 
$\tm=\app{\tmtwo}{\tmsthree}$, where either $\tmtwo \toT \tmtwo'$ or $\tmsthree \toT \tmsthree'$; in all these cases the proof follows by \ih.
\item $\tms \toT \tms'$ means $\tms=\set{\tm_1,..,\tm_n}$, and $\tms'=\set{\tm'_1,...,\tm'_n}$, and there is $i$ such that 
$\tm_i \toT \tm'_i$, and $\tm'_j=\tm_j$, for $j\not=i$. Then the proof follows by \ih. \qedhere
\end{enumerate}
\end{proof}

\subsubsection{Confluence}
\label{appendix:section:CalcLamT_confluence}

Confluence will be proved by the methodology of parallel reduction.

\begin{defn}
\begin{enumerate}

\item The parallel reduction $\pT$ is defined on terms and set-terms by the following rules, closed under contexts.\\
$
\begin{array}{lcll}
\var^\typ & \pT &\var^\typ &\\
\lam{\var^\typs}{\tm} &\pT & \lam{\var^\typs}{\tm'} &{\tt if} \quad\tm \pT \tm'\\
\tm \tmstwo & \pT & \tm' \tmstwo' &{\tt if} \quad\tm \pT \tm'\quad {\tt and} \quad\tmstwo \pT \tmstwo'\\
(\lam{\var^\typs}{\tm})\tmstwo &\pT& \tm'\sub{\var^\typs}{\tmstwo'}&{\tt if} \quad\tm \pT \tm' \quad {\tt and} \quad\tmtwo \pT \tmstwo'\\
\set{\tm_1, \hdots,\tm_n} & \pT & \set{\tm'_1, \hdots,\tm'_n} & {\tt if}\quad \tm_i \pT \tm'_i (1 \leq i \leq n)
\end{array}$ 

Note that $\pT$ is non-deterministic, since, when the term is a $\toT$ redex, either the third or the fourth rule can be applied. Roughly speaking, $\pT$ corresponds to reduce simultaneously a subset of the visible redexes by the $\toT$ reduction.
\item The complete development of a term $\tm$, is the following function, which denotes the result of the simultaneous reduction of all visible redexes in a term. 
 
$
  \begin{array}{rcl}
    \comp{\var^\typ}
    & \eqdef &
    \var^\typ
  \\
    \comp{\lam{\var^\typs}{\tm}}
    & \eqdef &
    \lam{\var^\typs}{\comp{\tm}}
  \\
 \comp{\app{(\lam{\var^\typs}{\tm})}{ \tmstwo}} & \eqdef & \app{\comp{\tm} \sub{\var^\typs}{\comp{\tmstwo}}}
  \\
    \comp{\app{\tm}{\tmstwo}}
    & \eqdef & 
      \app{\comp{\tm}}{\comp{\tmstwo}}\quad \text{if $\tm$ is not an abstraction}
  \\
    \comp{\set{\tm_1,\hdots,\tm_n}}
    & \eqdef &
    \set{\comp{\tm_1},\hdots,\comp{\tm_n}}
  \end{array}
$

\end{enumerate}
\end{defn}

 The $\pT$ reduction enjoys the following properties, whose proof is easy.
 \begin{prop}\label{propp:par}
 \begin{enumerate}
  \item $\tm \toT \tm'$ implies $\tm \pT \tm'$.
  \item $\tm \pT \tm'$ implies $\tm \toTs \tm'$.
  \item \label{three} $\toTs $ is the transitive closure of $\pT$.
 \end{enumerate}
 \end{prop}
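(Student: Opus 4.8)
The plan is to prove the three claims in order, deducing the third from the first two. The only preliminary fact I would need is the \emph{reflexivity} of parallel reduction: for every term $\tm$ one has $\tm \pT \tm$, and likewise $\tms \pT \tms$ for every set-term. This follows by a routine simultaneous induction on the structure of $\tm$ and $\tms$, using the variable rule at the leaves and the congruence rules (abstraction, application, set-term) at the inductive steps. Reflexivity is exactly what lets a congruence rule leave one immediate subterm untouched while another one reduces.

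For the first claim, $\tm \toT \tm'$ implies $\tm \pT \tm'$, I would proceed by induction on the derivation of the $\toT$-step (equivalently, on the position of the contracted redex), generalizing to set-terms. In the base case the step contracts a redex at the root, $(\lam{\var^\typs}{\tmtwo})\tmsthree \toT \tmtwo\sub{\var^\typs}{\tmsthree}$; here I apply the fourth rule of $\pT$ with the reflexivity instances $\tmtwo \pT \tmtwo$ and $\tmsthree \pT \tmsthree$. The congruence cases (reduction under an abstraction, in either component of an application, or inside one element of a set-term) are closed by the \ih together with a reflexivity instance for the component that does not move, using the matching congruence rule of $\pT$.

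For the second claim, $\tm \pT \tm'$ implies $\tm \toTs \tm'$, I would induct on the derivation of the $\pT$-step, again generalizing to set-terms. The variable case gives a zero-step $\toTs$-reduction. Each congruence case is closed by the \ih and congruence of $\toTs$, reducing the components one after another. The only case performing an actual contraction is the fourth rule, $(\lam{\var^\typs}{\tmtwo})\tmsthree \pT \tmtwo'\sub{\var^\typs}{\tmsthree'}$; here the \ih gives $\tmtwo \toTs \tmtwo'$ and $\tmsthree \toTs \tmsthree'$, so I first reduce inside to $(\lam{\var^\typs}{\tmtwo'})\tmsthree'$ by congruence and then contract the root redex with a single $\toT$-step.

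Finally, the third claim follows by combining the first two. Since $\pT$ is reflexive, its transitive closure coincides with its reflexive--transitive closure, so it suffices to show that the two relations contain each other. Given $\tm \toTs \tm'$, I decompose it into a finite chain of $\toT$-steps, turn each one into a $\pT$-step by the first claim, and compose by transitivity; the empty chain ($\tm = \tm'$) is covered by reflexivity of $\pT$. Conversely, given a chain of $\pT$-steps from $\tm$ to $\tm'$, I turn each one into a $\toTs$-reduction by the second claim and compose by transitivity of $\toTs$. I do not expect any genuine obstacle: the whole argument is bookkeeping, and the only points that require care are to establish reflexivity of $\pT$ before the first claim and to treat the reflexive (empty-chain) case separately in the third.
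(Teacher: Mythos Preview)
Your proposal is correct and follows the standard Tait--Martin-L\"of pattern for these elementary facts; the paper itself does not spell out a proof, simply stating that the properties are ``easy'', so your argument is exactly the kind of routine verification the authors have in mind.
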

 
\begin{lem}[Substitution]\label{lemmap:sub}
Let $\tm \pT \tm'$ and $\tmstwo \pT \tmstwo'$. Then $\tm \sub{\var^\typs}{\tmstwo} \pT \tm' \sub{\var^\typs}{\tmstwo'}$.
\end{lem}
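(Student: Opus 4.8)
The plan is to proceed by induction on the derivation of $\tm \pT \tm'$, generalizing the statement simultaneously to set-terms: if $\tms \pT \tms'$ and $\tmstwo \pT \tmstwo'$ then $\tms\sub{\var^\typs}{\tmstwo} \pT \tms'\sub{\var^\typs}{\tmstwo'}$. Throughout I would use that $\pT$ preserves types, which follows from Proposition~\ref{propp:par} ($\tm \pT \tm'$ implies $\tm \toTs \tm'$) together with Subject Reduction (Proposition~\ref{prop:sub-red}); this is what makes the type-indexed substitution behave coherently. In particular, since $\tmstwo \pT \tmstwo'$ preserves the set-type $\typs = \set{\typ_i}_\iI$, I may write $\tmstwo = \set{\tmtwo_i}_\iI$ and $\tmstwo' = \set{\tmtwo_i'}_\iI$ with $\judgT{\tctx}{\tmtwo_i}{\typ_i}$, $\judgT{\tctx}{\tmtwo_i'}{\typ_i}$, and $\tmtwo_i \pT \tmtwo_i'$ for each $\iI$.

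First I would dispatch the base and congruence cases. If $\tm = \var^{\typ_i}$ with $\typ_i \in \typs$, then necessarily $\tm' = \var^{\typ_i}$, and $\var^{\typ_i}\sub{\var^\typs}{\tmstwo} = \tmtwo_i$ while $\var^{\typ_i}\sub{\var^\typs}{\tmstwo'} = \tmtwo_i'$, where $\tmtwo_i,\tmtwo_i'$ are the unique elements of type $\typ_i$ (Observation~\ref{obs:bi}); as noted above $\tmtwo_i \pT \tmtwo_i'$, which is exactly what is needed. If $\tm = \vartwo^\typtwo$ is any other variable both sides are unchanged, so the result is reflexivity. The abstraction case $\tm = \lam{\vartwo^\typstwo}{\tm_1}$, the non-redex application case $\tm = \app{\tm_1}{\tmsthree}$ (third rule of $\pT$), and the set-term case all follow directly from the \ih and the fact that substitution commutes with these constructors; for set-terms one also checks, exactly as in the proof of Lemma~\ref{lem:sub}, that substitution preserves distinctness of types (via Lemma~\ref{type_uniqueness}) and hence cardinality.

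The crux is the redex-contraction case (fourth rule of $\pT$): $\tm = \app{(\lam{\vartwo^\typstwo}{\tm_1})}{\tmsthree}$ with $\tm_1 \pT \tm_1'$ and $\tmsthree \pT \tmsthree'$, so $\tm' = \tm_1'\sub{\vartwo^\typstwo}{\tmsthree'}$. Renaming $\vartwo$ so that $\vartwo \neq \var$ and $\vartwo \notin \fv{\tmstwo}\cup\fv{\tmstwo'}$, the left-hand side rewrites to $\app{(\lam{\vartwo^\typstwo}{\tm_1\sub{\var^\typs}{\tmstwo}})}{\tmsthree\sub{\var^\typs}{\tmstwo}}$. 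Applying the \ih to $\tm_1$ and to $\tmsthree$ and then the fourth rule of $\pT$ produces a step to $(\tm_1'\sub{\var^\typs}{\tmstwo'})\sub{\vartwo^\typstwo}{\tmsthree'\sub{\var^\typs}{\tmstwo'}}$, and it remains only to identify this with $(\tm_1'\sub{\vartwo^\typstwo}{\tmsthree'})\sub{\var^\typs}{\tmstwo'}$, which is $\tm'\sub{\var^\typs}{\tmstwo'}$.

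This last identity is the classical substitution-commutation lemma, which I expect to be the main obstacle and would state and prove as a separate auxiliary lemma: for $\var \neq \vartwo$ and $\vartwo \notin \fv{\tmstwo}$,
\[
  \tmthree\sub{\var^\typs}{\tmstwo}\sub{\vartwo^\typstwo}{\tmsfour\sub{\var^\typs}{\tmstwo}}
  = \tmthree\sub{\vartwo^\typstwo}{\tmsfour}\sub{\var^\typs}{\tmstwo}.
\]
I would prove it by structural induction on $\tmthree$, generalized to set-terms. The only delicate point is the type-indexed variable case: when $\tmthree = \var^{\typ_i}$ the inner substitution must select the element of $\tmstwo$ matching the type produced after substituting, and when $\tmthree = \vartwo^\typtwo$ one uses $\vartwo \notin \fv{\tmstwo}$ to see that re-substituting $\var$ inside $\tmsfour\sub{\var^\typs}{\tmstwo}$ leaves $\tmstwo$ untouched; both rely on type preservation and the uniqueness of the selected element from Observation~\ref{obs:bi}. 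Once this lemma is available, combining the displayed $\pT$-step with it closes the redex case, completing the induction.
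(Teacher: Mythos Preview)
Your proposal is correct and is essentially the natural unfolding of the paper's one-line proof (``By induction on $\tm$''): the same structural/derivation induction, the same case analysis, and the same reliance on a substitution-commutation lemma for the redex case. The paper in fact states exactly that auxiliary lemma later (as Lemma~\ref{LamTm:lemma:subs}, for the memory calculus), so your identification of it as the crux is spot on.
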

\begin{proof} By induction on
 $\tm$. \end{proof}
 
 \begin{lem}\label{lem:develop}
 $\tm \pT \tm'$ implies $\tm' \pT \comp{\tm}$.
 \end{lem}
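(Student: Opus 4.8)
The plan is to prove the statement by structural induction on $\tm$, generalizing it to set-terms so that $\tms \pT \tms'$ implies $\tms' \pT \comp{\tms}$ (since $\pT$ and complete development are defined by mutual recursion over terms and set-terms). The variable, abstraction, and set-term cases are immediate. When $\tm = \var^\typ$ we have $\tm' = \var^\typ = \comp{\tm}$; when $\tm = \lam{\var^\typs}{\tmtwo}$, any parallel reduct has the form $\tm' = \lam{\var^\typs}{\tmtwo'}$ with $\tmtwo \pT \tmtwo'$, so the induction hypothesis gives $\tmtwo' \pT \comp{\tmtwo}$ and congruence closes the goal; the set-term case $\tms = \set{\tmtwo_1,\hdots,\tmtwo_n}$ is handled componentwise in the same way.

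The only interesting case is the application $\tm = \app{\tmtwo}{\tmstwo}$, which I would split according to the two clauses in the definition of $\comp{-}$. First suppose $\tmtwo$ is not an abstraction, so that $\comp{\tm} = \app{\comp{\tmtwo}}{\comp{\tmstwo}}$. Then the step $\tm \pT \tm'$ must come from the application rule, giving $\tm' = \app{\tmtwo'}{\tmstwo'}$ with $\tmtwo \pT \tmtwo'$ and $\tmstwo \pT \tmstwo'$. By the induction hypothesis $\tmtwo' \pT \comp{\tmtwo}$ and $\tmstwo' \pT \comp{\tmstwo}$, and a second use of the application rule yields $\tm' \pT \comp{\tm}$. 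Note that this remains valid even when $\tmtwo'$ has itself become an abstraction (which can happen if $\tmtwo$ was an application whose head redex was contracted), because the application rule of $\pT$ carries no side condition on the shape of its left component.

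Now suppose $\tmtwo = \lam{\var^\typs}{\tmthree}$ is an abstraction, so that $\comp{\tm} = \comp{\tmthree}\sub{\var^\typs}{\comp{\tmstwo}}$. Here $\tm$ is a redex and the step $\tm \pT \tm'$ may have been derived by either of the two applicable $\pT$-rules, so I would treat both. If it came from the application rule, then $\tm' = \app{(\lam{\var^\typs}{\tmthree'})}{\tmstwo'}$ with $\tmthree \pT \tmthree'$ and $\tmstwo \pT \tmstwo'$; the induction hypothesis gives $\tmthree' \pT \comp{\tmthree}$ and $\tmstwo' \pT \comp{\tmstwo}$, and the $\beta$-rule of $\pT$ then fires to give $\tm' \pT \comp{\tmthree}\sub{\var^\typs}{\comp{\tmstwo}} = \comp{\tm}$. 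If instead it came from the $\beta$-rule, then $\tm' = \tmthree'\sub{\var^\typs}{\tmstwo'}$ with $\tmthree \pT \tmthree'$ and $\tmstwo \pT \tmstwo'$; the induction hypothesis gives the same two facts, and the Substitution Lemma for parallel reduction (Lemma~\ref{lemmap:sub}) yields $\tmthree'\sub{\var^\typs}{\tmstwo'} \pT \comp{\tmthree}\sub{\var^\typs}{\comp{\tmstwo}} = \comp{\tm}$.

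The main obstacle is precisely this $\beta$-contraction subcase, where the argument reduces to Lemma~\ref{lemmap:sub}; everything else is a routine appeal to the induction hypothesis and the congruence rules of $\pT$. One point worth a remark in the intersection-type setting is that every set-term produced above must stay repetition-free: this is inherited from the fact that $\pT$ and complete development act componentwise on set-terms and preserve types (so two distinct components cannot collapse to equal terms), which is the same cardinality-preservation argument already used in the proof of the Substitution Lemma.
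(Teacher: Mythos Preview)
Your proof is correct and follows essentially the same route as the paper: structural induction on $\tm$, with the interesting case being the redex $\tm = (\lam{\var^\typs}{\tmthree})\tmstwo$, split into the two possible shapes of $\tm'$ and closed via the induction hypothesis together with Lemma~\ref{lemmap:sub} in the $\beta$-contracted subcase. Your write-up is simply more explicit---you spell out the non-abstraction application case and the generalization to set-terms, both of which the paper sweeps under ``the other cases come directly from \ih''---and your closing remark on repetition-freeness is a welcome sanity check that the paper leaves implicit.
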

 \begin{proof}
 By induction on $\tm$. 
 Let $\tm = (\lambda \var^\typs.\tmthree)\tmstwo$. So $\tm'$ is either $(\lambda \var^\typs.\tmthree')\tmstwo'$ or $\tmthree' \sub{\var^\typs}{\tmstwo'}$, where $\tmstwo \pT \tmstwo'$ and $\tmthree \pT \tmthree'$. By \ih, $\tmthree' \pT \comp{\tmthree}$ and $\tmstwo' \pT \comp{\tmstwo}$. In both cases, $\tm' \toT \comp{\tmthree}\sub{\var^\typs}{ \comp{\tmstwo}}=\comp{\tm}$, in the former case by \ih, in the latter by \ih and Lemma \ref{lemmap:sub}.   The other cases come directly from \ih.
  \end{proof}
  
 The $\pT$ reduction enjoys the diamond property, i.e.,
 \begin{lem}\label{lemmap:diamond}
 If $\tm \pT \tm_1$ and $\tm \pT \tm_2$, then there is $\tm_3$ such that $\tm_1 \pT \tm_3$ and $\tm_2 \pT \tm_3$.
 \end{lem}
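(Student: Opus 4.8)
The plan is to use the complete development $\comp{\tm}$ as the common reduct, following the Tait--Martin-L\"of technique. The crucial observation is that the preceding Lemma~\ref{lem:develop} already establishes a \emph{triangle} property: every single $\pT$-reduct $\tm'$ of $\tm$ can be further contracted by $\pT$ to the complete development $\comp{\tm}$. Since $\comp{\tm}$ is a function of $\tm$ alone and does not depend on which reduct was chosen, this triangle closes the diamond automatically.

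Concretely, given $\tm \pT \tm_1$ and $\tm \pT \tm_2$, I would take $\tm_3 \eqdef \comp{\tm}$. Applying Lemma~\ref{lem:develop} to the hypothesis $\tm \pT \tm_1$ yields $\tm_1 \pT \comp{\tm}$, and applying it to $\tm \pT \tm_2$ yields $\tm_2 \pT \comp{\tm}$. These are precisely the two reductions $\tm_1 \pT \tm_3$ and $\tm_2 \pT \tm_3$ demanded by the statement, so the proof is immediate.

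There is essentially no remaining obstacle at this level, because all the genuine work has been absorbed into Lemma~\ref{lem:develop}, whose inductive proof is where the delicate reasoning lives --- in particular the case of an application $(\lam{\var^\typs}{\tmthree})\tmstwo$, where one reduction fires the head redex while the other may not, relying on the Substitution Lemma~\ref{lemmap:sub} to commute substitution with parallel reduction. At the stage of the diamond property itself, the only thing to check is that $\comp{\tm}$ is well-defined and independent of the chosen reducts, which holds by construction since it is defined by recursion on $\tm$.
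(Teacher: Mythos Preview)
Your proposal is correct and follows exactly the paper's approach: the paper's proof is simply ``By Lemma~\ref{lem:develop}, $\tm_3 = \comp{\tm}$,'' which is precisely the triangle argument you spell out.
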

 \begin{proof}
By Lemma \ref{lem:develop}, $\tm_3 =\comp{\tm}$.
 \end{proof}

\confluenceprop*
\begin{proof}
  By Proposition \ref{propp:par}.\ref{three} and Lemma~\ref{lemmap:diamond}.
\end{proof}

\subsection{Proofs of Section~\ref{lambdaTm} --- Strong normalization via a decreasing measure}
\label{appendix:lambdaTm}
In this section we give detailed proofs of the results
about the $\CalcLamTm$-calculus stated in Section~\ref{lambdaTm}.

\subsubsection{Subject reduction}
\label{appendix:section:CalcLamTm_subject_reduction}

\begin{lem}[Substitution]
  \label{CalcLamTm_substitution}
  Let $\judgTm{\tctx,x:\typs}{\tm}{\typtwo}$
    and $\judgTms{\tctx}{\tmstwo}{\typs}$.
    Then $\judgTm{\tctx}{\tm\sub{\var^\typs}{\tmstwo}}{\typtwo}$.
  \begin{proof}
 By induction on $\tm$, generalizing the statement to set-terms. We consider only the wrapper case, since the remaining cases are exactly like those in Lemma~\ref{lem:sub}. Let $\tm = \tm_1\wrap{\tms_2}$. Then
      $\judgTm{\tctx,\var:\typs}{\tm_1\wrap{\tms_2}}{\typtwo}$
      concludes with $\ruleTmWrap$, so by inversion we have that
      $\judgTm{\tctx,\var:\typs}{\tm_1}{\typtwo}$ and
      $\judgTms{\tctx,\var:\typs}{\tms_2}{\typsthree}$.
      By \ih, $\judgTm{\tctx}{\tm_1\sub{\var^\typs}{\tmstwo}}{\typtwo}$
      and $\judgTms{\tctx}{\tms_2\sub{\var^\typs}{\tmstwo}}{\typsthree}$.
      By $\ruleTmWrap$ it follows that
      $\judgTm{\tctx}{\tm_1\sub{\var^\typs}{\tmstwo}\wrap{\tms_2\sub{\var^\typs}{\tmstwo}}} {\typtwo}$.
      By definition of substitution
      $(\tm_1\wrap{\tms_2})\sub{\var^\typs}{\tmstwo} =
      \tm_1\sub{\var^\typs}{\tmstwo}\wrap{\tms_2\sub{\var^\typs}{\tmstwo}}$,
      hence we conclude that
      $\judgTm{\tctx}{(\tm_1\wrap{\tms_2})\sub{\var^\typs}{\tmstwo}}{\typtwo}$.
  \end{proof}
\end{lem}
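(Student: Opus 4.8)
The plan is to prove this exactly as the $\CalcLamT$ Substitution Lemma~(\ref{lem:sub}), by induction on the structure of $\tm$, generalizing the statement simultaneously to set-terms (\ie also proving that $\judgTms{\tctx,\var:\typs}{\tms}{\typsthree}$ and $\judgTms{\tctx}{\tmstwo}{\typs}$ imply $\judgTms{\tctx}{\tms\sub{\var^\typs}{\tmstwo}}{\typsthree}$), so that the inductive hypothesis is available for every immediate subcomponent. The base case is the variable $\tm = \vartwo^{\typtwo}$, split according to whether $\vartwo = \var$. When $\vartwo \neq \var$ the substitution does nothing and the conclusion is immediate, since the type of $\vartwo^{\typtwo}$ depends only on $\tctx(\vartwo)$. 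When $\vartwo = \var$ we have $\typtwo \in \typs$, and by Observation~\ref{obs:bi} there is a unique $\tmtwo_i \in \tmstwo$ with $\judgTm{\tctx}{\tmtwo_i}{\typtwo}$; since by definition $\var^{\typtwo}\sub{\var^\typs}{\tmstwo} = \tmtwo_i$, this element has exactly the required type.

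First I would dispatch the constructors that $\CalcLamTm$ shares with $\CalcLamT$ --- abstraction, application, and set-term --- in the same way as in Lemma~\ref{lem:sub}: inversion on the last rule exposes typings of the immediate subterms, the inductive hypothesis pushes the substitution inside each of them preserving their types, and the corresponding rule reassembles the result. The one point needing care here is the set-term case $\tm = \set{\tm_1,\hdots,\tm_n}$: after substituting into each $\tm_j$, the side condition of $\ruleTmMulti$ (injectivity of $j \mapsto \typtwo_j$) must still hold, which it does because substitution preserves the type of each element, so distinct types remain distinct.

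The genuinely new case, and the only real content beyond Lemma~\ref{lem:sub}, is the wrapper $\tm = \tm_1\wrap{\tms_2}$. Here $\judgTm{\tctx,\var:\typs}{\tm_1\wrap{\tms_2}}{\typtwo}$ must conclude with $\ruleTmWrap$, so inversion yields $\judgTm{\tctx,\var:\typs}{\tm_1}{\typtwo}$ together with $\judgTms{\tctx,\var:\typs}{\tms_2}{\typsthree}$ for some carried set-type $\typsthree$. Applying the (generalized) inductive hypothesis to both premises gives $\judgTm{\tctx}{\tm_1\sub{\var^\typs}{\tmstwo}}{\typtwo}$ and $\judgTms{\tctx}{\tms_2\sub{\var^\typs}{\tmstwo}}{\typsthree}$, and a final application of $\ruleTmWrap$ re-wraps them; by the defining equation $(\tm_1\wrap{\tms_2})\sub{\var^\typs}{\tmstwo} = \tm_1\sub{\var^\typs}{\tmstwo}\wrap{\tms_2\sub{\var^\typs}{\tmstwo}}$ this is precisely the desired judgement.

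I do not expect a serious obstacle. This case is painless exactly because the wrapper rule leaves the type untouched: the carried set-type $\typsthree$ is arbitrary and plays no role in the conclusion, so no matching between $\typsthree$ and anything else is required. The only subtlety worth keeping in view is bookkeeping --- ensuring the generalized statement is phrased strongly enough that the inductive hypothesis applies both to the wrapped set-term $\tms_2$ inside the wrapper case and to each component inside the set-term case --- after which the whole argument is a routine structural induction resting on the already-established $\CalcLamT$ proof.
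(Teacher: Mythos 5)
Your proposal is correct and follows essentially the same route as the paper: a structural induction on $\tm$ generalized to set-terms, deferring the shared constructors to the argument of Lemma~\ref{lem:sub}, and handling the wrapper case by inversion on $\ruleTmWrap$, applying the inductive hypothesis to both premises, re-applying $\ruleTmWrap$, and concluding via the defining equation $(\tm_1\wrap{\tms_2})\sub{\var^\typs}{\tmstwo} = \tm_1\sub{\var^\typs}{\tmstwo}\wrap{\tms_2\sub{\var^\typs}{\tmstwo}}$. Your observation that the wrapper's carried set-type $\typsthree$ is arbitrary and plays no role in the conclusion is exactly the point that makes this case go through painlessly, just as in the paper.
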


\calclamtmreduction*

\begin{proof}
Simultaneously by induction on the reduction relation $\toTm$.
\begin{enumerate}
\item In the case
  $
  \tm =
  (\lam{\var^\typstwo}{\tmtwo})\sctx\,\tmsthree
  \toTm
  \tmtwo\sub{\var^\typstwo}{\tmsthree}\wrap{\tmsthree}\sctx
  = \tm'
  $
  we have that $\tm$ must be typed with the following derivation scheme, where $n$
  is the amount of wrappers in $\sctx$:
  $$
  \infer[\ruleTmArre]{
    \judgTm{\tctx}{(\lam{\var^\typstwo}{\tmtwo})\sctx\,\tmsthree}{\typthree}}{
    \infer[\ruleTmWrap]{
      \judgTm{\tctx}{\lam{\var^\typstwo}{\tmtwo}\sctx}{\typstwo\arrow\typthree}}{
      \infer[\ruleTmWrap]{
        \vdots}{
          \infer[\ruleTmArri]{
            \judgTm{\tctx}{\lam{\var^\typstwo}{\tmtwo}}{\typstwo\arrow\typthree}}{
            \judgTm{\tctx,\var:\typstwo}{\tmtwo}{\typthree}}
          &
          \Pi_1}
        &
        \Pi_n}
    &
    \judgTm{\tctx}{\tmsthree}{\typstwo}}
  $$
  Hence from Lemma~\ref{CalcLamTm_substitution}, since
  $\judgTm{\tctx,\var:\typstwo}{\tmtwo}{\typthree}$ and
  $\judgTm{\tctx}{\tmsthree}{\typstwo}$, we obtain
  $\judgTm{\tctx}{\tmtwo\sub{\var^\typstwo}{\tmsthree}}{\typthree}$.
  Then from the corresponding $n+1$ applications of the $\ruleTmWrap$ we obtain
  $\judgTm{\tctx}{\tmtwo\sub{\var^\typstwo}{\tmsthree}\wrap{\tmsthree}\sctx}
  {\typthree}$.
\item If $t = \lam{\var^\typstwo}{\tmtwo} \toTm
  \lam{\var^\typstwo}{\tmtwo'} = \tm'$ with $\tmtwo \toTm \tmtwo'$,
  then $\typ = \typstwo\arrow\typthree$ and
  $\judgTm{\tctx}{\lam{\var^\typstwo}{\tmtwo}}{\typstwo\arrow\typthree}$.
  By inversion of $\ruleTmArri$
  it follows that $\judgTm{\tctx,\var:\typstwo}{\tmtwo}{\typthree}$.
  Then by IH we have that $\judgTm{\tctx,\var:\typstwo}{\tmtwo'}{\typthree}$,
  and by $\ruleTmArri$ we conclude
  $\judgTm{\tctx}{\lam{\var^\typstwo}{\tmtwo'}}{\typstwo\arrow\typthree}$.
\item If $\tm = \app{\tmtwo}{\tmsthree} \toTm \app{\tmtwo'}{\tmsthree} = \tm'$
  with $\tmtwo \toTm \tmtwo'$,
  then we have that $\judgTm{\tctx}{\app{\tmtwo}{\tmsthree}}{\typ}$ and,
  by inversion of $\ruleTmArre$, it follows that
  $\judgTm{\tctx}{\tmtwo}{\typstwo\arrow\typ}$ and
  $\judgTms{\tctx}{\tmsthree}{\typstwo}$.
  Then by IH we have that $\judgTm{\tctx}{\tmtwo'}{\typstwo\arrow\typ}$,
  and by $\ruleTmArre$ we conclude
  $\judgTm{\tctx}{\app{\tmtwo'}{\tmsthree}}{\typ}$.
\item If $\tm = \app{\tmtwo}{\tmsthree} \toTm \app{\tmtwo}{\tmsthree'} = \tm'$
  with $\tmsthree \toTm \tmsthree'$,
  then we have that $\judgTm{\tctx}{\app{\tmtwo}{\tmsthree}}{\typ}$ and,
  by inversion of $\ruleTmArre$, it follows that
  $\judgTm{\tctx}{\tmtwo}{\typstwo\arrow\typ}$ and
  $\judgTms{\tctx}{\tmsthree}{\typstwo}$.
  Then by IH we have that $\judgTms{\tctx}{\tmsthree'}{\typstwo}$,
  and by $\ruleTmArre$ we conclude
  $\judgTm{\tctx}{\app{\tmtwo}{\tmsthree'}}{\typ}$.
\item If $\tms = \set{\tm_1, \hdots, \tm_i, \hdots, \tm_n}
  \toTm \set{\tm_1, \hdots, \tm_i', \hdots, \tm_n} = \tms'$
  with $i \in 1..n$, $n \geq 1$ and $\tm_i \toTm \tm_i'$,
  then we have that
  $\judgTms{\tctx}{\set{\tm_1,\hdots,\tm_i,\hdots,\tm_n}}
  {\set{\typ_1,\hdots,\typ_i,\hdots\typ_n}}$.
  By inversion of rule $\ruleTmMulti$, it follows that
  $\judgTm{\tctx}{\tm_1}{\typ_1},\hdots,
  \judgTm{\tctx}{\tm_i}{\typ_i},\hdots,
  \judgTm{\tctx}{\tm_n}{\typ_n}$.
  Then by IH we have that $\judgTm{\tctx}{\tm_i'}{\typ_i}$,
  and by $\ruleTmMulti$ we conclude
  $\judgTms{\tctx}{\set{\tm_1,\hdots,\tm_i',\hdots,\tm_n}}
  {\set{\typ_1,\hdots,\typ_i,\hdots\typ_n}}$.
\item If $\tm = \bin{\tmtwo}{\tmsthree} \toTm \bin{\tmtwo'}{\tmsthree} = \tm'$
  with $\tmtwo \toTm \tmtwo'$,
  then we have that $\judgTm{\tctx}{\bin{\tmtwo}{\tmsthree}}{\typ}$ and,
  by inversion of $\ruleTmWrap$, it follows that
  $\judgTm{\tctx}{\tmtwo}{\typ}$ and
  $\judgTms{\tctx}{\tmsthree}{\typstwo}$.
  Then by IH we have that $\judgTm{\tctx}{\tmtwo'}{\typ}$,
  and by $\ruleTmWrap$ we conclude
  $\judgTm{\tctx}{\bin{\tmtwo'}{\tmsthree}}{\typ}$.
\item If $\tm = \bin{\tmtwo}{\tmsthree} \toTm \bin{\tmtwo}{\tmsthree'} = \tm'$
  with $\tmsthree \toTm \tmsthree'$,
  then we have that $\judgTm{\tctx}{\bin{\tmtwo}{\tmsthree}}{\typ}$ and,
  by inversion of $\ruleTmWrap$, it follows that
  $\judgTm{\tctx}{\tmtwo}{\typ}$ and
  $\judgTms{\tctx}{\tmsthree}{\typstwo}$.
  Then by IH we have that $\judgTms{\tctx}{\tmsthree'}{\typstwo}$,
  and by $\ruleTmWrap$ we conclude
  $\judgTm{\tctx}{\bin{\tmtwo}{\tmsthree'}}{\typ}$.
  \qedhere
\end{enumerate}
\end{proof}

\subsubsection{Confluence}
\label{appendix:section:CalcLamTm_confluence}

We extend the proof of Proposition~\ref{prop:confluence} by adding the cases
involving wrappers and modifying those involving $\LamTsymbol-$redexes by
$\LamTmsymbol-$redexes (\ie allowing a list of wrappers $\sctx$ between the
w-abstraction and the argument).
\begin{defn}\quad
\begin{enumerate}
\item
  The parallel reduction $\pTm$ is extended by modifying the $\beta$ case to
  allow lists of wrappers, and by adding the congruence cases of wrappers and
  lists of wrappers:
  \[
  \begin{array}{lcll}
    (\lam{\var^\typs}{\tm})\sctx\tmstwo
    & \pTm
    & \tm'\sub{\var^\typs}{\tmstwo'}\wrap{\tmstwo'}\sctx'
    & \text{if} \ \ \tm \pTm \tm'
      ,\ \tmtwo \pTm \tmstwo'\ \
      \text{and} \ \ \sctx \pTm \sctx' \\
    \tm\wrap{\tmstwo} & \pTm & \tm'\wrap{\tmstwo'}
    & \text{if} \ \ \tm \pTm \tm'
      \ \ \text{and} \ \ \tmstwo \pTm \tmstwo' \\
    \wlist & \pTm & \wlist[\tms']
    & \text{if} \ \ \tms_i\pTm \tms'_i \ \ \text{for all} \ \ i \in 1..n \\
  \end{array}
  \]
\item The complete development of a term $\tm$ is extended by modifying the
  $\beta$ case to allow lists of wrappers, and by adding the congruence cases of
  wrappers and lists of wrappers
  \[
  \begin{array}{rcl}
    \comp{(\lam{\var^\typs}{\tm})\sctx\tmstwo}
    & \eqdef
    & \comp{\tm} \sub{\var^\typs}{\comp{\tmstwo}} \wrap{\comp{\tmstwo}} \comp{\sctx}
  \\
    \comp{\tm\wrap{\tmstwo}}
    & \eqdef
    & \comp{\tm}\wrap{\comp{\tmstwo}}
  \\
    \comp{\wlist}
    & \eqdef &
    \ctxhole\wrap{\comp{\tm_1}},\hdots,\wrap{\comp{\tm_n}}
  \end{array}
  \]
\end{enumerate}
\end{defn}
As well as $\pT$ reduction, $\pTm$ reduction enjoys the following properties:
\begin{prop}[Properties of $\pTm$]
  \label{LamTm:lemma:par}\quad
  \begin{enumerate}
    \item $\tm \toTm \tm'$ implies $\tm \pTm \tm'$.
    \item $\tm \pTm \tm'$ implies $\tm \toTm^* \tm'$.
    \item \label{LamTm:lemma:par:transclosure}
      $\toTm^*$ is the transitive closure of $\pTm$.
  \end{enumerate}
\end{prop}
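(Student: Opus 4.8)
The plan is to mirror the proof of Proposition~\ref{propp:par} for $\pT$, extending it with the cases for wrappers and lists of wrappers, and carrying out all inductions \emph{simultaneously} over terms, set-terms, and lists of wrappers. As an auxiliary fact I would first establish that $\pTm$ is \emph{reflexive}, i.e.\ $\tm \pTm \tm$ for every term $\tm$, and likewise $\tms \pTm \tms$ and $\sctx \pTm \sctx$; this follows by a routine structural induction using the congruence rules of $\pTm$.

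For part~(1) I would proceed by induction on the derivation of $\tm \toTm \tm'$. In the base case $\tm = (\lam{\var^\typs}{\tm_0})\sctx\,\tmstwo \toTm \tm_0\sub{\var^\typs}{\tmstwo}\wrap{\tmstwo}\sctx$, reflexivity supplies $\tm_0 \pTm \tm_0$, $\tmstwo \pTm \tmstwo$ and $\sctx \pTm \sctx$, so the $\beta$-rule of $\pTm$ yields precisely $\tm \pTm \tm_0\sub{\var^\typs}{\tmstwo}\wrap{\tmstwo}\sctx$. The congruence cases (reduction inside an abstraction, an application, a wrapper, a set-term, or a list of wrappers) follow directly from the induction hypothesis and the matching congruence rule of $\pTm$, using reflexivity on the unchanged subterms.

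For part~(2) I would proceed by induction on the derivation of $\tm \pTm \tm'$. The congruence cases are immediate from the induction hypothesis, since $\toTm^*$ is closed under congruence. The only interesting case is the $\beta$-rule: from $(\lam{\var^\typs}{\tm})\sctx\,\tmstwo \pTm \tm'\sub{\var^\typs}{\tmstwo'}\wrap{\tmstwo'}\sctx'$ with $\tm \pTm \tm'$, $\tmstwo \pTm \tmstwo'$ and $\sctx \pTm \sctx'$, the induction hypothesis gives $\tm \toTm^* \tm'$, $\tmstwo \toTm^* \tmstwo'$ and $\sctx \toTm^* \sctx'$; reducing these subterms by congruence yields $(\lam{\var^\typs}{\tm})\sctx\,\tmstwo \toTm^* (\lam{\var^\typs}{\tm'})\sctx'\,\tmstwo'$, and a final $\symTm$-step contracts this redex to $\tm'\sub{\var^\typs}{\tmstwo'}\wrap{\tmstwo'}\sctx'$, as required.

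Finally, part~(3) follows formally from (1) and (2): since $\pTm$ is reflexive, its transitive closure coincides with its reflexive--transitive closure; then (1) gives the inclusion $\toTm\ \subseteq\ \pTm$ and (2) gives $\pTm\ \subseteq\ \toTm^*$, so the two closures coincide and both equal $\toTm^*$. I expect no genuine obstacle here, as the whole argument is a structural bookkeeping exercise; the one point demanding care is the $\beta$-case of part~(2), where the argument $\tmstwo$ must be reduced both in its substituted copies inside the body and in the newly created wrapper. This is handled automatically, because the single $\symTm$-contraction of $(\lam{\var^\typs}{\tm'})\sctx'\,\tmstwo'$ produces both the substitution $\tm'\sub{\var^\typs}{\tmstwo'}$ and the wrapper $\wrap{\tmstwo'}$ in one step, so the reduced argument $\tmstwo'$ appears in all required positions at once.
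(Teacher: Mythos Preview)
Your proposal is correct and follows exactly the standard approach the paper has in mind: the paper itself omits the proof entirely (as it did for the corresponding Proposition~\ref{propp:par}, declared ``easy''), and your argument---reflexivity of $\pTm$, induction on $\toTm$ for~(1), induction on $\pTm$ for~(2), and the closure argument for~(3)---is precisely the routine verification one expects. The care you take in the $\beta$-case of part~(2), reducing all subterms first by congruence and then contracting the head redex in a single $\symTm$-step, is the right way to handle the simultaneous appearance of $\tmstwo'$ in both the substitution and the new wrapper.
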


\begin{lem}[Substitution]\label{LamTm:lemma:subs}
  $\tm \sub{\vartwo^\typs}{\tmstwo} \sub{\var^\typstwo}{\tmsthree}
  = \tm \sub{\var^\typstwo}{\tmsthree}
  \sub{\vartwo^\typs}{\tmstwo \sub{\var^\typstwo}{\tmsthree}}$
  if $\vartwo \not \in \fv \tmsthree$
\end{lem}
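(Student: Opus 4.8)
The plan is to prove the identity by structural induction on $\tm$, simultaneously generalizing the statement to set-terms so that the induction passes through the set-term constructor $\set{\tm_j}_\jJ$, the application $\app{\tm}{\tms}$, and the wrapper $\bin{\tm}{\tms}$ that is specific to $\CalcLamTm$. Throughout I assume, as is standard, that $\var \neq \vartwo$ and (by $\alpha$-renaming) that all bound variables are chosen fresh with respect to $\var$, $\vartwo$, $\tmstwo$ and $\tmsthree$, so that every capture-avoiding substitution acts by simply commuting past binders. Since substitution is only defined on typable terms (it relies on Observation~\ref{obs:bi}), I also use that in a typable term every free occurrence $\varfour^\typfour$ satisfies $\typfour \in \tctx(\varfour)$ by rule $\ruleTmAx$; hence the only variable occurrences affected by $\sub{\vartwo^\typs}{\cdot}$ or $\sub{\var^\typstwo}{\cdot}$ are those matching $\vartwo$ with a type in $\typs$ or $\var$ with a type in $\typstwo$.

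First I would dispatch the inductive cases, which are routine. For an abstraction $\lam{\varthree^{\typsfour}}{\tm'}$, an application $\app{\tm'}{\tms'}$, a wrapper $\bin{\tm'}{\tms'}$, and a set-term $\set{\tm_1,\hdots,\tm_n}$, both sides of the equation push the two substitutions inside the corresponding subterms, using in the wrapper case the clause $(\bin{\tm'}{\tms'})\sub{\var^\typstwo}{\tmsthree} = \bin{\tm'\sub{\var^\typstwo}{\tmsthree}}{\tms'\sub{\var^\typstwo}{\tmsthree}}$. In each case the result follows immediately from the induction hypothesis applied to the relevant subterm(s).

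The heart of the argument — and the step I expect to be the main obstacle — is the variable case $\tm = \varfour^{\typfour}$, which splits into three subcases. If $\varfour$ is neither $\var$ nor $\vartwo$, both sides leave it unchanged. If $\varfour = \var$ and $\typfour \in \typstwo$, then the left side leaves $\var^\typfour$ untouched under $\sub{\vartwo^\typs}{\tmstwo}$ (as $\var \neq \vartwo$) and then replaces it by the unique $\tmthree_j \in \tmsthree$ of type $\typfour$ (Observation~\ref{obs:bi}); the right side produces $\tmthree_j$ first and then applies $\sub{\vartwo^\typs}{\tmstwo\sub{\var^\typstwo}{\tmsthree}}$, which has no effect because $\vartwo \notin \fv{\tmsthree}$ entails $\vartwo \notin \fv{\tmthree_j}$. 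This is precisely where the side condition is consumed. The delicate subcase is $\varfour = \vartwo$ and $\typfour \in \typs$: the left side selects the element $\tmtwo_i \in \tmstwo$ of type $\typfour$ and forms $\tmtwo_i\sub{\var^\typstwo}{\tmsthree}$, whereas the right side leaves $\vartwo^\typfour$ untouched under $\sub{\var^\typstwo}{\tmsthree}$ and then selects the element of type $\typfour$ from the set-term $\tmstwo\sub{\var^\typstwo}{\tmsthree}$. To show these agree I would invoke that substitution preserves types (a consequence of Lemma~\ref{CalcLamTm_substitution}) and preserves the cardinality of typed set-terms (the argument already used in the proof of Lemma~\ref{lem:sub}), so that $\tmstwo\sub{\var^\typstwo}{\tmsthree}$ again admits the bijection of Observation~\ref{obs:bi} with $\typs$, and its unique element of type $\typfour$ is exactly $\tmtwo_i\sub{\var^\typstwo}{\tmsthree}$. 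This identifies the two sides and closes the variable case, completing the induction.
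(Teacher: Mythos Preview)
Your proposal is correct and takes the same approach as the paper, which simply states ``By induction on $\tm$.'' Your elaboration of the variable case---in particular the use of type preservation under substitution and the bijection of Observation~\ref{obs:bi} to identify the selected element of $\tmstwo\sub{\var^\typstwo}{\tmsthree}$---is exactly the standard argument the paper leaves implicit.
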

\begin{proof}
  By induction on $\tm$.
\end{proof}

\begin{lem}[Compatibility of substitution and parallel reduction]
  \label{LamTm:lemma:parsub}
  Let $\tm \pTm \tm'$ and $\tmstwo \pTm \tmstwo'$.
  Then $\tm \sub{\var^\typs}{\tmstwo} \pTm \tm' \sub{\var^\typs}{\tmstwo'}$.
\end{lem}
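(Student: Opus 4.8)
The plan is to proceed by induction on the derivation of $\tm \pTm \tm'$, generalizing the statement simultaneously to set-terms (if $\tms \pTm \tms'$ then $\tms\sub{\var^\typs}{\tmstwo} \pTm \tms'\sub{\var^\typs}{\tmstwo'}$) and to lists of wrappers (if $\sctx \pTm \sctx'$ then $\sctx\sub{\var^\typs}{\tmstwo} \pTm \sctx'\sub{\var^\typs}{\tmstwo'}$), so that the induction hypothesis is available on every immediate syntactic component. This mirrors the proof of Lemma~\ref{lemmap:sub} for the wrapper-free calculus; the genuinely new content is the wrapper congruence case and the beta-with-wrappers case, both specific to $\CalcLamTm$.

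Most cases are routine. For the congruence cases --- abstraction, application, wrapper, set-term, and list of wrappers --- I would apply the induction hypothesis to each immediate subterm and reassemble with the matching congruence rule for $\pTm$, relying on the fact that capture-avoiding substitution distributes over each constructor; in particular $(\tm_1\wrap{\tmsthree})\sub{\var^\typs}{\tmstwo} = \tm_1\sub{\var^\typs}{\tmstwo}\wrap{\tmsthree\sub{\var^\typs}{\tmstwo}}$ by the definition of substitution in $\CalcLamTm$. The variable case splits on whether the variable is the one being substituted. If $\tm = \vartwo^\typtwo$ with $\vartwo \neq \var$, the substitution is inert on both sides and $\tm' = \tm$, so the claim is immediate. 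If $\tm = \var^\typtwo$ with $\typtwo \in \typs$, then again $\tm' = \tm$ since variables only parallel-reduce to themselves, and both $\tm\sub{\var^\typs}{\tmstwo}$ and $\tm'\sub{\var^\typs}{\tmstwo'}$ are the unique elements of $\tmstwo$, respectively $\tmstwo'$, of type $\typtwo$, by Observation~\ref{obs:bi}. Since $\tmstwo \pTm \tmstwo'$ reduces $\tmstwo$ elementwise, and since parallel reduction is contained in $\toTm^*$ (Proposition~\ref{LamTm:lemma:par}) and therefore preserves types by subject reduction (Proposition~\ref{CalcLamTm_subject_reduction}), the type-$\typtwo$ component of $\tmstwo$ parallel-reduces to the type-$\typtwo$ component of $\tmstwo'$, which is exactly what is required.

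The only delicate case is the beta-with-wrappers rule, $(\lam{\vartwo^\typstwo}{\tm_1})\sctx\,\tmsthree \pTm \tm_1'\sub{\vartwo^\typstwo}{\tmsthree'}\wrap{\tmsthree'}\sctx'$, obtained from $\tm_1 \pTm \tm_1'$, $\tmsthree \pTm \tmsthree'$, and $\sctx \pTm \sctx'$. Here I would first push the outer substitution through the redex on the left, obtaining the redex $(\lam{\vartwo^\typstwo}{\tm_1\sub{\var^\typs}{\tmstwo}})(\sctx\sub{\var^\typs}{\tmstwo})(\tmsthree\sub{\var^\typs}{\tmstwo})$, and then apply the same beta rule using the three induction hypotheses $\tm_1\sub{\var^\typs}{\tmstwo} \pTm \tm_1'\sub{\var^\typs}{\tmstwo'}$, $\tmsthree\sub{\var^\typs}{\tmstwo} \pTm \tmsthree'\sub{\var^\typs}{\tmstwo'}$, and $\sctx\sub{\var^\typs}{\tmstwo} \pTm \sctx'\sub{\var^\typs}{\tmstwo'}$. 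This yields $(\tm_1'\sub{\var^\typs}{\tmstwo'})\sub{\vartwo^\typstwo}{\tmsthree'\sub{\var^\typs}{\tmstwo'}}\wrap{\tmsthree'\sub{\var^\typs}{\tmstwo'}}(\sctx'\sub{\var^\typs}{\tmstwo'})$, whereas pushing the substitution through the right-hand side of the original step gives $(\tm_1'\sub{\vartwo^\typstwo}{\tmsthree'})\sub{\var^\typs}{\tmstwo'}\wrap{\tmsthree'\sub{\var^\typs}{\tmstwo'}}(\sctx'\sub{\var^\typs}{\tmstwo'})$. The wrapper and the trailing list agree literally, so the proof reduces to the equality of the two heads, which is precisely the commutation of substitutions of Lemma~\ref{LamTm:lemma:subs}, applicable since we may assume $\vartwo \notin \fv{\tmstwo'}$ by $\alpha$-renaming the bound variable away from the free variables of $\tmstwo'$. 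I expect this alignment of the two nested-substitution forms, together with discharging the side condition, to be the main obstacle, though it is entirely mechanical once the substitution lemma is in place.
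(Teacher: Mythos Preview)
Your proposal is correct and follows essentially the same approach as the paper: induction driven by the shape of the parallel step, with the beta-with-wrappers case discharged via the substitution commutation of Lemma~\ref{LamTm:lemma:subs}. Your treatment is in fact more careful than the paper's, which glosses over the variable case and the generalization to set-terms and lists of wrappers that you make explicit; your use of Observation~\ref{obs:bi} together with subject reduction to track the type-$\typtwo$ component across $\tmstwo \pTm \tmstwo'$ is exactly the right justification there.
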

\begin{proof}
By induction on $\tm$, where the interesting case is that of
$\tm = (\lam{\vartwo^\typs}{\tmthree})\sctx\tmsfour$. Then $\tm'$ depends on
whether the head $\symTm-$redex is contracted or not. In both cases it holds that
$\tmthree \pTm \tmthree'$, $\tmsfour \pTm \tmsfour'$, and $\sctx \pTm \sctx'$.
\begin{enumerate}
  \item If $\tm' = (\lam{\vartwo^\typstwo}{\tmthree'})\sctx'\tmsfour'$, then, by
    definition of substitution and \ih, we have that
    \[
      \begin{array}{rll}
        ((\lam{\vartwo^\typstwo}{\tmthree})
          \sctx
          \tmsfour)
          \sub{\var^\typs}{\tmstwo}
        & = & (\lam{\vartwo^\typstwo}{\tmthree \sub{\var^\typs}{\tmstwo}})
          \sctx\sub{\var^\typs}{\tmstwo}
          \tmsfour\sub{\var^\typs}{\tmstwo}\\
        & \pTm
        & (\lam{\vartwo^\typstwo}{\tmthree' \sub{\var^\typs}{\tmstwo'}})
          \sctx'\sub{\var^\typs}{\tmstwo'}
          \tmsfour'\sub{\var^\typs}{\tmstwo'}\\
        & = & ((\lam{\vartwo^\typstwo}{\tmthree'})
            \sctx'
            \tmsfour')
            \sub{\var^\typs}{\tmstwo'}
      \end{array}
    \]
  \item If $\tm' = \tmthree' \sub{\vartwo^\typstwo}{\tmsfour'} \wrap{\tmsfour'}
    \sctx'$, then, by definition of substitution, \ih, and
    Lemma~\ref{LamTm:lemma:subs}, we have that
    \[
      \begin{array}{rll}
        ((\lam{\vartwo^\typstwo}{\tmthree})
        \sctx
        \tmsfour)
        \sub{\var^\typs}{\tmstwo}
        & = & (\lam{\vartwo^\typstwo}{\tmthree\sub{\var^\typs}{\tmstwo}})
              \sctx\sub{\var^\typs}{\tmstwo}
              \tmsfour\sub{\var^\typs}{\tmstwo}\\
        & \pTm & \tmthree' \sub{\var^\typs}{\tmstwo'}
                 \sub{\vartwo^\typstwo}{\tmsfour' \sub{\var^\typs}{\tmstwo'}}
                 \wrap{\tmsfour' \sub{\var^\typs}{\tmstwo'}}
                 \sctx' \sub{\var^\typs}{\tmstwo'}\\
        & = & \tmthree' \sub{\vartwo^\typstwo}{\tmsfour'} \sub{\var^\typs}{\tmstwo'}
              \wrap{\tmsfour' \sub{\var^\typs}{\tmstwo'}}
              \sctx' \sub{\var^\typs}{\tmstwo'}\\
        & = & (\tmthree' \sub{\vartwo^\typstwo}{\tmsfour'} \wrap{\tmsfour'} \sctx')
              \sub{\var^\typs}{\tmstwo}
      \end{array}
    \]
  \end{enumerate}
\end{proof}

\begin{lem}[Composition lemma for wrappers]
  If $\tm \pTm \tm'$ and $\sctx \pTm \sctx'$, then $\tm \sctx \pTm \tm' \sctx'$.
\end{lem}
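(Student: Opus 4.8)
The plan is to proceed by a straightforward induction on the structure of the list of wrappers $\sctx$, relying on the wrapper congruence rule of $\pTm$. The only preliminary observation needed is an \emph{inversion} of the hypothesis $\sctx \pTm \sctx'$: since parallel reduction on a list of wrappers is defined componentwise (the rule $\wlist \pTm \wlist[\tms']$ whenever $\tms_i \pTm \tms_i'$ for all $i \in 1..n$), a step $\sctx \pTm \sctx'$ forces $\sctx$ and $\sctx'$ to have the same length and to reduce wrapper by wrapper.

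First I would treat the base case $\sctx = \ctxhole$. Here $\sctx' = \ctxhole$ as well, and since $\tm\ctxhole = \tm$ and $\tm'\ctxhole = \tm'$, the goal $\tm\sctx \pTm \tm'\sctx'$ is precisely the hypothesis $\tm \pTm \tm'$.

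For the inductive step, write $\sctx = \sctx_0\wrap{\tmstwo}$ as prescribed by the grammar $\sctx ::= \ctxhole \mid \sctx\wrap{\tms}$. By the inversion observation above, $\sctx'$ has the form $\sctx_0'\wrap{\tmstwo'}$ with $\sctx_0 \pTm \sctx_0'$ and $\tmstwo \pTm \tmstwo'$. Unfolding the definition $\tm(\sctx_0\wrap{\tmstwo}) = (\tm\sctx_0)\wrap{\tmstwo}$, we have $\tm\sctx = (\tm\sctx_0)\wrap{\tmstwo}$ and likewise $\tm'\sctx' = (\tm'\sctx_0')\wrap{\tmstwo'}$. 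Applying the induction hypothesis to $\tm \pTm \tm'$ and $\sctx_0 \pTm \sctx_0'$ yields $\tm\sctx_0 \pTm \tm'\sctx_0'$. Combining this with $\tmstwo \pTm \tmstwo'$ and the wrapper congruence rule $\tm\wrap{\tmstwo} \pTm \tm'\wrap{\tmstwo'}$ of $\pTm$ (instantiated at the terms $\tm\sctx_0$ and $\tm'\sctx_0'$), we obtain $(\tm\sctx_0)\wrap{\tmstwo} \pTm (\tm'\sctx_0')\wrap{\tmstwo'}$, that is, $\tm\sctx \pTm \tm'\sctx'$, as required.

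I do not expect any genuine obstacle here: the statement is essentially a bookkeeping lemma that feeds into the diamond-property argument for $\pTm$ (the analogue of Lemma~\ref{lemmap:diamond} in the memory calculus). The only point requiring mild care is aligning the right-branching grammar of wrapper lists with the definition of $\tm\sctx$, namely the identity $\tm(\sctx_0\wrap{\tmstwo}) = (\tm\sctx_0)\wrap{\tmstwo}$, and making the componentwise nature of list reduction explicit through the inversion step before invoking the congruence rule.
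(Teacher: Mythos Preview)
Your proposal is correct and follows essentially the same approach as the paper: both argue by induction on the length (equivalently, the structure) of $\sctx$, peel off the outermost wrapper, apply the induction hypothesis to the shorter list, and close with the wrapper congruence rule for $\pTm$. Your version is slightly more explicit about the inversion of $\sctx \pTm \sctx'$ and the identity $\tm(\sctx_0\wrap{\tmstwo}) = (\tm\sctx_0)\wrap{\tmstwo}$, but the underlying argument is identical.
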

\begin{proof}
  Let $\sctx = \wlist$. We proceed by induction on $n$.
  \begin{enumerate}
  \item If $n=0$, then $\tm \sctx \tm \ctxhole = \tm \pTm \tm'
    = \tm' \ctxhole = \tm' \sctx'$.
  \item If $n=k+1$, then by \ih we have that
    $\tm \ctxhole \wrap{\tm_1} \hdots \wrap{\tm_k} \pTm
    \tm' \ctxhole \wrap{\tm'_1} \hdots \wrap{\tm'_k}$, hence
    \[
      \begin{array}{rll}
        \tm \sctx
        & = & \tm \wrap{\tm_1} \hdots \wrap{\tm_k} \wrap{\tm_{k+1}}\\
        & = & \tm \wrap{\tm_1} \hdots \wrap{\tm_k} \wrap{\tm_{k+1}}\\
        & \pTm & \tm' \wrap{\tm'_1} \hdots \wrap{\tm'_k} \wrap{\tm'_{k+1}}\\
        & = & \tm' \ctxhole \wrap{\tm'_1} \hdots \wrap{\tm'_k} \wrap{\tm'_{k+1}}\\
        & = & \tm' \sctx'
      \end{array}
    \]
  \end{enumerate}
\end{proof}

\begin{lem}\label{LamTm:lemma:dev}
$\tm \pTm \tm'$ implies $\tm' \pTm \comp{\tm}$.
\end{lem}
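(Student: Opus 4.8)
The plan is to prove the statement by induction on the structure of the term $\tm$, generalizing simultaneously to set-terms and to lists of wrappers, exactly as in the companion Lemma~\ref{lem:develop} for the wrapper-free calculus. For the base case $\tm = \var^\typ$ and the pure congruence cases—abstraction $\lam{\var^\typs}{\tmthree}$, non-redex application, wrapper $\tm\wrap{\tmstwo}$, and set-term $\set{\tm_1,\hdots,\tm_n}$—the argument is routine: from $\tm \pTm \tm'$ I read off the parallel reducts of the immediate subterms, apply the \ih to each, and then reassemble using the matching congruence clause of $\pTm$, relying on the fact that $\comp{-}$ is defined homomorphically in precisely these cases. In the application case I note that even if the reduct of the function part happens to become a w-abstraction, the application congruence rule of $\pTm$ still applies, so no special handling is needed.

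The only genuinely interesting case is when $\tm$ is a redex, $\tm = (\lam{\var^\typs}{\tmthree})\sctx\,\tmstwo$, for which by definition $\comp{\tm} = \comp{\tmthree}\sub{\var^\typs}{\comp{\tmstwo}}\wrap{\comp{\tmstwo}}\comp{\sctx}$. A step $\tm \pTm \tm'$ arises in one of two ways, according to whether the head redex is contracted, but in both subcases I have $\tmthree \pTm \tmthree'$, $\tmstwo \pTm \tmstwo'$, and $\sctx \pTm \sctx'$, so the \ih gives $\tmthree' \pTm \comp{\tmthree}$, $\tmstwo' \pTm \comp{\tmstwo}$, and $\sctx' \pTm \comp{\sctx}$. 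If the head redex is not contracted, then $\tm' = (\lam{\var^\typs}{\tmthree'})\sctx'\,\tmstwo'$, and I apply the $\beta$-clause of $\pTm$ directly to these three parallel reductions to obtain $\tm' \pTm \comp{\tmthree}\sub{\var^\typs}{\comp{\tmstwo}}\wrap{\comp{\tmstwo}}\comp{\sctx} = \comp{\tm}$.

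If instead the head redex is contracted, then $\tm' = \tmthree'\sub{\var^\typs}{\tmstwo'}\wrap{\tmstwo'}\sctx'$, and I would close the triangle in three moves. First, the Compatibility Lemma (Lemma~\ref{LamTm:lemma:parsub}) together with the \ih yields $\tmthree'\sub{\var^\typs}{\tmstwo'} \pTm \comp{\tmthree}\sub{\var^\typs}{\comp{\tmstwo}}$. Second, the trailing list of wrappers $\wrap{\tmstwo'}\sctx'$ parallel-reduces to $\wrap{\comp{\tmstwo}}\comp{\sctx}$, using the wrapper congruence clause on $\tmstwo' \pTm \comp{\tmstwo}$ together with $\sctx' \pTm \comp{\sctx}$. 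Finally, I glue the substituted head and the reduced wrapper list with the Composition Lemma for wrappers to conclude $\tm' \pTm \comp{\tm}$.

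The one point requiring care—and the main potential obstacle—is the wrapper bookkeeping in this last subcase: the freshly created wrapper $\wrap{\tmstwo'}$ and the pre-existing list $\sctx'$ must be shown to parallel-reduce jointly to $\wrap{\comp{\tmstwo}}\comp{\sctx}$, and the substituted head must be aligned so that the Composition Lemma produces exactly $\comp{\tm}$ and not merely something convertible to it. Once the clauses of $\pTm$ and of $\comp{-}$ are matched one-for-one, everything lines up; crucially there is no circularity, since the Compatibility and Composition lemmas have been established independently beforehand and are invoked only on proper subterms (or on strictly simpler syntactic constituents) of $\tm$.
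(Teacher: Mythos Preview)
Your proposal is correct and follows essentially the same approach as the paper: induction on $\tm$, with the redex case $(\lam{\var^\typs}{\tmthree})\sctx\,\tmstwo$ split into the two subcases (head redex contracted or not), handled via the \ih together with Lemma~\ref{LamTm:lemma:parsub}. You are in fact slightly more explicit than the paper in two respects: you spell out the generalization to set-terms and lists of wrappers, and in the contracted subcase you invoke the Composition Lemma for wrappers to glue the substituted head to the trailing wrapper list---the paper states this lemma just before but does not cite it explicitly in the proof.
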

\begin{proof}
  By induction on $\tm$. The interesting case is that of $\tm = (\lambda
  \var^\typs.\tmthree)\sctx\tmstwo$. Then $\tm'$ depends on whether the head
  $\symTm$-redex is contracted or not. In both cases it holds that $\tmthree
  \pTm \tmthree'$, $\tmstwo \pTm \tmstwo'$, and $\sctx \pTm \sctx'$.
\begin{enumerate}
\item If $\tm' = (\lambda \var^\typs.\tmthree')\sctx'\tmstwo'$, then by \ih we
  have that
  \[
    \begin{array}{rll}
      (\lambda \var^\typs.\tmthree)\sctx\tmstwo
      & \pTm & (\lambda \var^\typs.\tmthree')
               \sctx'
               \tmstwo'\\
      & \pTm & (\lambda \var^\typs.\comp{\tmthree})
               \comp{\sctx}
               \comp{\tmstwo}\\
      & \toTm & \comp{\tmthree} \sub{\var^\typs}{\comp{\tmstwo}}
                \wrap{\comp{\tmstwo}}
                \comp{\sctx}\\
      & = & \comp{(\lambda \var^\typs.\tmthree)\sctx\tmstwo}
    \end{array}
  \]
\item If $\tm' = \tmthree' \sub{\var^\typs}{\tmstwo'} \wrap{\tmstwo'} \sctx'$,
  then by \ih and Lemma~\ref{LamTm:lemma:parsub} we have that
  \[
    \begin{array}{rll}
      (\lambda \var^\typs.\tmthree)\sctx\tmstwo
      & \pTm & \tmthree' \sub{\var^\typs}{\tmstwo'} \wrap{\tmstwo'} \sctx'\\
      & \pTm & \comp{\tmthree} \sub{\var^\typs}{\comp{\tmstwo}}
               \wrap{\comp{\tmstwo}} \comp{\sctx}\\
      & = & \comp{(\lambda \var^\typs.\tmthree)\sctx\tmstwo}
    \end{array}
  \]
\end{enumerate}
The remaining cases come directly from \ih.
\end{proof}

The $\pTm$ reduction enjoys the diamond property, i.e.,
\begin{lem}\label{LamTm:lemma:diamond}
If $\tm \pTm \tm_1$ and $\tm \pTm \tm_2$, then there is $\tm_3$ such that $\tm_1 \pTm \tm_3$ and $\tm_2 \pTm \tm_3$.
\end{lem}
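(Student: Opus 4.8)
The plan is to invoke the standard Tait--Martin-L\"of argument, using the complete development $\comp{\tm}$ as the common reduct. The essential fact has already been packaged into the development (``triangle'') lemma~\ref{LamTm:lemma:dev}, which states that any single parallel step out of $\tm$ can always be completed to $\comp{\tm}$: that is, $\tm \pTm \tm'$ implies $\tm' \pTm \comp{\tm}$. Granting this, the diamond property is immediate.

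Concretely, I would argue as follows. Given the two hypotheses $\tm \pTm \tm_1$ and $\tm \pTm \tm_2$, apply Lemma~\ref{LamTm:lemma:dev} to each of them separately. The first application yields $\tm_1 \pTm \comp{\tm}$ and the second yields $\tm_2 \pTm \comp{\tm}$. Setting $\tm_3 \eqdef \comp{\tm}$ then gives exactly $\tm_1 \pTm \tm_3$ and $\tm_2 \pTm \tm_3$, closing the diamond. Notice that $\comp{\tm}$ depends only on $\tm$, not on the particular reduct $\tm_1$ or $\tm_2$; this is precisely what makes it serve as a single canonical meeting point for all parallel reducts of $\tm$.

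I do not expect any real obstacle at this stage: all the genuine difficulty has been absorbed into Lemma~\ref{LamTm:lemma:dev} and the supporting Lemmas~\ref{LamTm:lemma:subs} and~\ref{LamTm:lemma:parsub}. In particular, the delicate interactions handled upstream are the $\beta$-case with an interposed list of wrappers $\sctx$ between the w-abstraction and its argument, and the commutation of substitution with the creation of a new wrapper $\wrap{\tmstwo}$. The only point worth verifying here is that $\comp{\tm}$ is a total, well-defined function on $\CalcLamTm$-terms, so that $\comp{\tm}$ in fact exists for the $\tm$ at hand; this is guaranteed by the extended definition of complete development, whose clauses cover the variable, abstraction, application, wrapped-abstraction redex, wrapper, and set-term cases exhaustively. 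With that in place, the proof of the present lemma is a two-line consequence of Lemma~\ref{LamTm:lemma:dev}, exactly mirroring the proof of the corresponding diamond property~\ref{lemmap:diamond} for the wrapper-free parallel reduction~$\pT$.
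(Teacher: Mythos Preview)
Your proposal is correct and follows exactly the same approach as the paper: apply the triangle lemma (Lemma~\ref{LamTm:lemma:dev}) to each hypothesis and take $\tm_3 \eqdef \comp{\tm}$. The paper's proof is in fact the one-liner ``By Lemma~\ref{LamTm:lemma:dev}, $\tm_3 = \comp{\tm}$,'' so your extra commentary is just elaboration of the same argument.
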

\begin{proof}
By Lemma~\ref{LamTm:lemma:dev}, $\tm_3 =\comp{\tm}$.
\end{proof}

\lamtmconfluence*
\begin{proof}
  By Lemmas~\ref{LamTm:lemma:par}.\ref{LamTm:lemma:par:transclosure} and~\ref{LamTm:lemma:diamond}.
\end{proof}

\end{document}